\tikzset{
operator/.append style={fill=purple!20},
my label/.append style={above right,xshift=0.3cm},
phase label/.append style={label position=above}
}
\newcommand\tikznode[3][]%
\tikzset{>=stealth}
\newcommand*{\gateStyle}[1]{{\textsf{\itshape #1}}}
\newcommand*{\hGate}{\gateStyle{H}}
\newrobustcmd{\B}{\bfseries}
\newcommand*{\circuitH}{\gate[style={fill=teal!20},label style=black]{\textnormal{\hGate{}}}}
\DeclarePairedDelimiter{\ceil}{\lceil}{\rceil}
\newcommand{\ostar}{\mathbin{\mathpalette\make@circled\ast}}
\newcommand{\make@circled}[2]{%
  \ooalign{$\m@th#1\smallbigcirc{#1}$\cr\hidewidth$\m@th#1#2$\hidewidth\cr}%
}
\newcommand{\smallbigcirc}[1]{%
  \vcenter{\hbox{\scalebox{0.77778}{$\m@th#1\bigcirc$}}}%
}
\newcommand{\tpps}{TPPs~}
\newcommand{\tpp}{TPP~}
\newcommand{\tppn}{TPP}
\newcommand{\dqnna}{DQNN-$\alpha$~}
\newcommand{\dqnnan}{DQNN-$\alpha$~}
\newcommand{\dqnnb}{DQNN-$\beta$~}
\theoremstyle{plain}
\newtheorem{theorem}{Theorem}[section]
\newtheorem{lemma}[theorem]{Lemma}
\newtheorem{definition}[theorem]{Definition}
\begin{document}
\title{Exploiting the equivalence between quantum neural networks and perceptrons}
\author{Chris Mingard}
\affiliation{Rudolf Peierls Centre for Theoretical Physics,  University of Oxford; Oxford, OX1 3PU, UK}
\affiliation{Physical and Theoretical Chemistry Laboratory, University of Oxford; Oxford, OX1 3QZ, UK}
\author{Jessica Pointing} 
\affiliation{Department of Physics,  University of Oxford; Oxford, OX1 3PU, UK}

\author{Charles London} 
\affiliation{Department of Computer Science, University of Oxford; Oxford, OX1 3QG, UK}

\author{Yoonsoo Nam}
\affiliation{Rudolf Peierls Centre for Theoretical Physics, University of Oxford; Oxford, OX1 3PU, UK}

\author{Ard A. Louis} \thanks{email: ard.louis@physics.ox.ac.uk} 
\affiliation{Rudolf Peierls Centre for Theoretical Physics,  University of Oxford; Oxford, OX1 3PU, UK}

\date{\today}

\begin{abstract}
Quantum machine learning models based on parametrized quantum circuits, also called quantum neural networks (QNNs), are considered to be among the most promising candidates for applications on near-term quantum devices.  Here we explore the expressivity and inductive bias of QNNs by exploiting an exact mapping from QNNs with inputs $x$ to classical perceptrons acting on $x \otimes x$ (generalised to complex inputs).   The simplicity of the perceptron architecture allows us to provide clear examples of the shortcomings of current QNN
models, and the many barriers they face to becoming useful
general-purpose learning algorithms. For example, a QNN with amplitude encoding cannot express the Boolean parity function for $n\geq 3$, which is but one of an exponential number of data structures that such a QNN is unable to express. Mapping a QNN to a classical perceptron simplifies training, allowing us to systematically study the inductive biases of other, more expressive embeddings on Boolean data.  Several popular embeddings primarily produce an inductive bias towards functions with low class balance,  reducing their generalisation performance compared to deep neural network architectures which exhibit much richer inductive biases.  %Like perceptrons, current QNNs do not automatically learn features -- a key capability of deep neural networks.  
We explore two alternate strategies that move beyond standard QNNs. In the first, we use a QNN to help generate a classical DNN-inspired kernel. In the second, we draw an analogy 
 to the hierarchical structure of deep neural networks and construct a layered non-linear QNN that is provably fully expressive on Boolean data, while also exhibiting a richer inductive bias than simple QNNs. % This translates into improved generalisation performance on the Boolean dataset and on a simplified FashionMNIST dataset.  
Finally, we discuss characteristics of the  QNN literature that may obscure how hard it is to achieve quantum advantage over deep learning algorithms on classical data.
\end{abstract}

\maketitle

\section{Introduction}\label{sec:introduction}

Machine learning has undergone a revolution with the advent of large-scale deep (classical) neural network models (DNNs), which allow for the automatic extraction of features from data and the learning of complex functions. DNNs are highly expressive, satisfying universal approximation theorems \citep{cybenko1989approximation,hornik1989multilayer}, and can fit arbitrary functions on large datasets \citep{zhang2016understanding}. Despite this high capacity, these models often generalise well rather than overfitting as classical learning theory would predict. This means they have a strong inductive bias towards functions that describe real-world data \citep{arpit2017closer,valle2020generalization,canatar2021spectral,mingard2021sgd,mingard2023deep,bhattamishra2022simplicity}.  The success of these models has heralded the end of the ``feature engineering'' era, where features and data representations were hand-crafted by experts \citep{hastie2009elements}, and the rise of ``general-purpose learning algorithms'' that can learn from raw data \citep{lecun2015deep}.

Quantum machine learning (QML) is a newer discipline that aims to leverage quantum computers to produce new algorithms and models that may outperform classical machine learning methods in terms of computational efficiency or accuracy (referred to collectively as \textit{quantum advantage}). While quantum computers are still in the noisy intermediate-scale quantum (NISQ) era~\citep{preskill2018quantum} -- small and prone to errors -- there are many claims in the literature that QML models may already offer some advantage over classical models \cite{abbas2021power,coles_seeking_2021,huang_quantum_2022,schuld2022quantum,bowles2024better}. Variational Quantum Algorithms (VQAs), where a classically parameterised quantum circuit (PQC) is optimised to minimise a cost function evaluated on a quantum computer, are one of the most popular candidates for potential QML advantage \cite{cerezo2021variational}. In this paper, we focus on quantum neural networks (QNNs), a subclass of VQAs in which data is treated as a quantum state and processed by applying a unitary transform in the form of a PQC. The output of the QNN is accessed by measuring the processed quantum state (usually on a designated readout qubit), and the QNN is trained by adjusting the parameters of the PQC to minimise a cost function \cite{schuld_supervised_2021}.

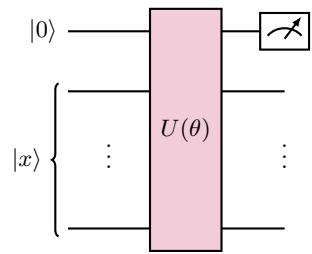
\begin{figure}[H]
    \centering
    \begin{tikzcd}
        \lstick{$|0\rangle$}          & \qw    & \gate[wires=4, nwires=3]{U(\theta)} & \meter{} \\
        \lstick[wires=3]{$|x\rangle$} & \qw    &                                     & \qw \\
                                      & \vdots &                                     & \vdots \\
                                      & \qw    &                                     & \qw
    \end{tikzcd}
    \caption{ \textbf{Schematic of a QNN used for binary classification}. There are $n$ data input qubits encoded in state $|x\rangle$, one $|0\rangle$ initialised readout qubit, and an arbitrary classically parameterised unitary $U(\theta)$. For non-binary data, more readout qubits are required. }
    \label{fig:n_plus_one_wire_circuit_measurement}
\end{figure}

For binary classification tasks, the output is the expectation value of a single Pauli measurement operator on a designated readout qubit \cite{farhi2018classification}. More formally, a QNN being used for binary classification takes in a quantum state $|x\rangle$ encoded on n qubits, and augments it with a $|0\rangle$ qubit, resulting in $|0\rangle |x\rangle$ encoded on $n+1$ qubits (see Figure \ref{fig:n_plus_one_wire_circuit_measurement}). $|x\rangle$ can be thought of as a vector in Hilbert space of dimension $2^n$, and we will often write $x=[x_1, \dots, x_{2^n}]$, the coordinates in the computational basis. Given the $|0\rangle |x\rangle$ input, the QNN then computes the function 
\begin{equation}\label{eq:QNN(x)}
    f(x;\theta) = \bra{x}\bra{0}U^\dag(\theta) Z U(\theta)\ket{0}\ket{x},
\end{equation}
where $U(\theta)$ is a classically parameterised unitary matrix of dimension $2^{n+1}$ and Z is the Pauli measurement operator $\sigma_Z \otimes I \otimes \dots \otimes I$. If $f(x; \theta) > 0$ the class prediction is 1, else 0. $|x\rangle$ might come directly from a quantum system, or it could be classical data $x'$ that has been encoded into a quantum state using a quantum encoding
\begin{equation}\label{eq:phi}
    \phi: x'\rightarrow \ket{x}
\end{equation}

 QNNs can be formulated as kernel methods \cite{schuld2019quantum,havlicek_supervised_2019,schuld_supervised_2021,schuld2021quantum}, where the kernel function for two quantum states $|x^1\rangle$, $|x^2\rangle$ is given by $K(|x^1\rangle, |x^2\rangle) = |\langle x^1 | x^2 \rangle|^2$. If the model is to be used for classical data, then $\phi$ determines the kernel as a function of the classical inputs. This formalism can be used to explore the expressivity and inductive bias of QNNs \cite{kubler2021inductive,huang2021power,abbas2021power,slattery_numerical_2022,liu_rigorous_2021,shaydulin2022quantumkernel}, which has significantly improved understanding of the limitations of QNNs, and the conditions under which they may offer quantum advantage. For example, \citet{huang2021power} show that the exponential feature space of QNNs can make generalization difficult, although there may be quantum advantage for certain kinds of quantum datasets~\cite{,huang_quantum_2022}. Similarly,  \citet{kubler2021inductive} show that generalization (with a polynomial amount of data) is only possible when the data only occupies a small subspace of the feature space (limiting the choice of encoding).
 
While the link to kernel methods has provided important insights into potential QNN performance, it can be abstract and difficult to interpret. 
Here, we aim to provide a more concrete and intuitive understanding of the expressivity and inductive bias of QNNs by exploiting a mapping from QNNs to classical perceptrons acting on the tensor product of the input data.% which we refer to as the \tppn. 
This mapping to a perceptron bypasses challenges in training such as barren plateaus~\cite{mcclean_barren_2018,holmes_connecting_2022}, facilitating the study of generalisation performance and the related inductive bias of QNNs.  
Building on earlier work~\citep{pointing2024do},   
we employ a versatile Boolean dataset to empirically study how QNNs learn across a set of different target functions,  finding that many of the popular methods to encode classical data exhibit expressivity-inductive bias tradeoffs, where more expressive encodings have weaker inductive biases.  We then propose two methods to potentially overcome these limitations; using QNNs to efficiently calculate high-dimensional inner products for use in classical neural network kernels \cite{lee2017deep,matthews2018gaussian},  and constructing a layered non-linear QNN \cite{zhao2021qdnn} that is provably fully expressive on the Boolean dataset, while also exhibiting a richer inductive bias that translates into improved generalisation performance on the Boolean dataset and on a simplified FashionMNIST dataset.  Finally,  we frame prospects for quantum advantage on classical data for QNNs in light of the remarkable performance of DNNs. 

\section{Mapping QNNs onto perceptrons}\label{sec:Quantum_Perceptrons}

In this section, we provide an overview of how to construct the mapping from QNNs to the classical \tppn. A complete formal proof can be found in Appendix \ref{app:perceptron_proof}. We will consider QNNs acting on $\ket{0} \ket{x}$, where $\ket{x}$ is encoded on $n$ qubits, giving a Hilbert space of dimension $N=2^n$.
We can represent $|0\rangle |x\rangle$ as a vector $(x, 0)$ of length $2N$.
Let $x_i$ be the $N$ (in general, complex) components of $x$.

Next, consider a classical perceptron acting on $h \in \mathbb{R}^{N^2}$ which expresses functions of the form
\begin{equation} \label{eq:perceptron}
    g(x) = w \cdot h
\end{equation}
where $w\in \mathbb{R}^{N^2}$ are the weights. 
%Now consider an embedding of the form$$ \psi: x \rightarrow x\ostar x, $$ 
A  QNN  defined by any unitary matrix $U$ acting on $(x, 0) \in \mathbb{R}^{2N}$ as in \cref{eq:QNN(x)} can be uniquely mapped onto the perceptron defined in \cref{eq:perceptron} with  $h =x \ostar x $ (see \cref{proof:tpp=qnn}), where we define the complex tensor product $x \ostar x \in \mathbb{R}^{N^2}$, as
\begin{align}\label{eq:perceptron_equivalence}
    (x \ostar x)_{N(i-1)+i} &= x_i x_i^* \\
    (x \ostar x)_{N(i-1)+j} &= Re(x_i x_j^*) + Im(x_i x_j^*) \;\;\; \text{for } i \neq j
    % &[x_i x_i^*, Re(x_i x_j^*)+Im(x_i x_j^*),\notag \\
    % &Re(x_j x_i^*)-Im(x_i x_j^*), x_j x_j^*, ...].
\end{align}
When all components of $x$ are real, $x\ostar x$ is equivalent to the standard Kronecker product $x \otimes x$.
We will refer to this perceptron with real weights $w$ acting on $x \ostar x$ as the tensor product perceptron (\tppn). In other words, any QNN can be uniquely mapped onto a TPP.  A preliminary observation of this link between a perceptron and a general QNN can be found in  Eq.\ 45 of \citep{schuld_supervised_2021}) in the context of a more general link between QNNs and kernel methods.  

It is also possible, under certain mild conditions on the weights of the TPP, to construct a mapping from \tpps to QNNs such that the numerical output is the same. If we are only interested in the sign of the output (as we are for classification), then the models have equivalent expressive power (see \cref{proof:qnn=tpp}).

\section{Expressivity of TPPs on Boolean functions}
In this section, we investigate the expressivity of \tpps (and therefore QNNs) on Boolean functions, which are foundational building blocks of computer science.
For the $n$-bit Boolean dataset, inputs are of the form $x \in \{0,1\}^n$, and the target function can be expressed as a binary function $f: \{0,1\}^n \rightarrow \{0,1\}$, where the inputs are ordered by ascending binary value (see \cref{sec:Boolean_system}). For example the $n$-dimensional parity function  is defined as $f(x) = \sum_{i=1}^n x_i \mod 2$. There are $2^n$ inputs and thus $2^{2^n}$ possible target functions. The Boolean dataset is a useful testbed for machine learning algorithms because it can be used to measure the properties of a model across a range of target functions. This contrasts with many image or text-based datasets where there is just one target function.  Moreover, the long tradition of research on  Boolean functions provides opportunities to derive analytical results.

The perceptron, introduced in 1958 by Rosenblatt \citep{rosenblatt_perceptron_1958} as a candidate for a general machine learning algorithm, is a form of linear classifier.  In their famous book, \citet{minsky1969introduction} demonstrated its inability to express basic non-linear functions such as XOR.  More generally, the number $T(n)$ of Boolean functions that a perceptron can express is bounded by $2^{n^2 - n \log_2 n - \mathcal{O}(n)} \leq T(n) \leq 2^{n^2 - n \log_2 n + \mathcal{O}(n)}$~\cite{vsima2003general}, which is exponentially fewer than the total number of Boolean functions on this data. This lack of expressivity, together with the difficulty at the time of training deeper networks, has been cited as one of the causes of the first AI winter \cite{aiwinter}.

In that context, it is interesting to examine the expressivity of the \tpp (which can also be viewed as a form of quadratic classifier on $x$). If $x$ has dimension $N$, and so $w$ has dimension $N^2$, the number of Boolean functions $T(N)$ the \tpp will be able to express (see \cref{lemma:num_Boolean_functions_expressible}) is upper-bounded by 
\begin{equation} \label{eq:expressible}
       T(N) \leq  2^{N^3+N^2\log_2(e)+1}
\end{equation}
The \tpp can only express an exponentially small fraction of the  $2^{2^N}$  possible functions on these data points.  While the \tpp is more expressive than the standard perceptron and can express the basic $n=2$ parity function, XOR, it cannot express any further parity functions with $n \geq 3$ (see \cref{sec:sec:qnn_amplitude_parity_proof}).

This analysis implies that, much like the basic perceptron, the \tpp is a severely restricted model with a limited range of expressible functions.   The equivalence between QNNs and \tpps means that the same problems carry over for QNNs.  To use a QNN or a \tpp to classify some new dataset, one has to ensure that the target function is expressible, which may mean performing explicit feature engineering to ensure that it is.  This situation harkens back to earlier days of machine learning where feature-engineering played an important role.    By contrast, DNNs are highly expressive~\cite{hornik1989multilayer}, and even simple FCN architectures don't need to be that big to express every Boolean function~\cite{mingard2019neural}.

\section{Experiments on the Boolean dataset}\label{sec:qb:exp_bool}

\begin{figure*}[ht]
    \centering
    \includegraphics[width=0.9\textwidth]{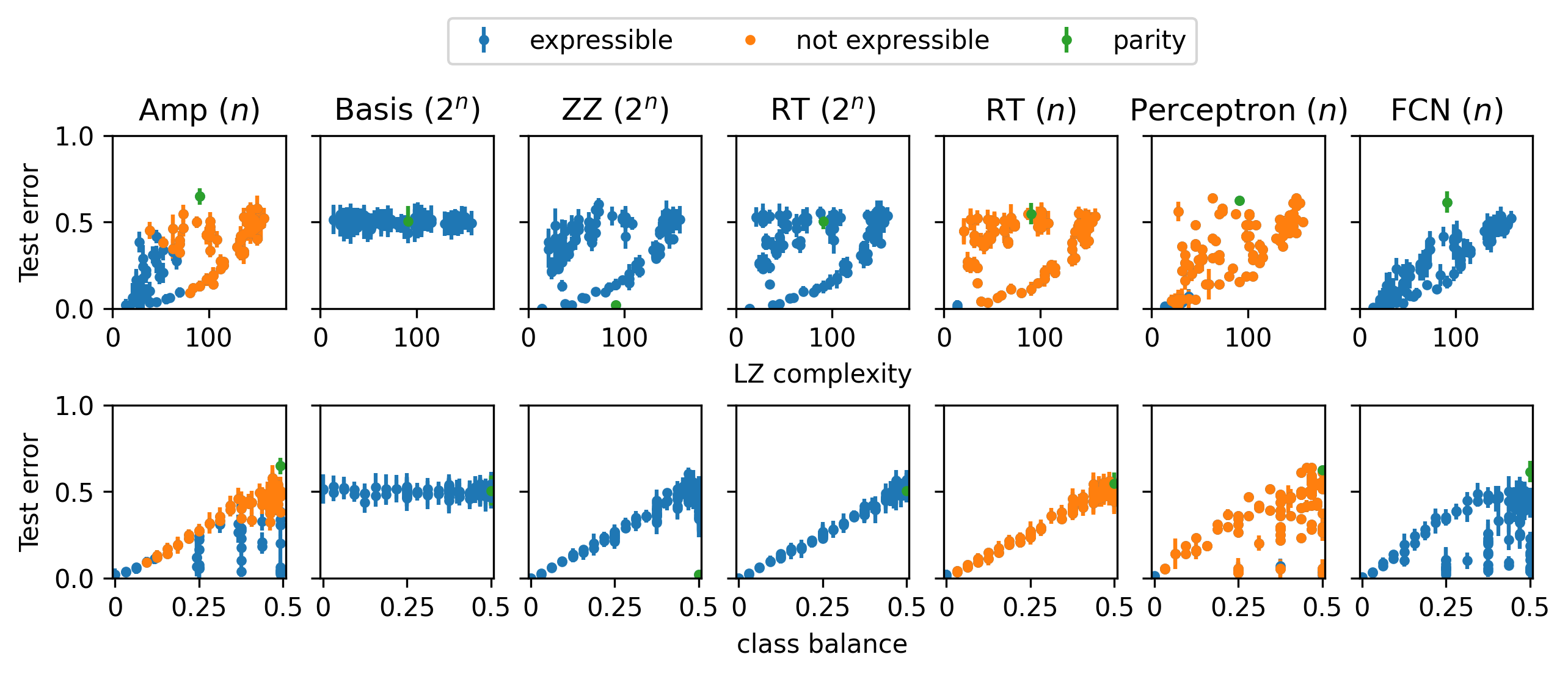}
    
    \caption{\justifying
    \textbf{Test error v.s.\ LZ complexity and class balance for different encoding methods \& algorithms for supervised learning on the  $n=7$ Boolean dataset.} The first five columns show QNNs with different encoding methods and the two final columns show two classical neural networks -- the perceptron and a 1-hidden layer FCN. The bracketed values after each encoding type give the dimension of the classical (quantum) input (Hilbert) space.
    The top (bottom) row shows generalisation error v.s.\ LZ complexity (class balance). Class balance is the minimum proportion of 0s or 1s in the function. 
    Each datapoint is for one of 100 target functions chosen to have a wide range of entropies and LZ complexities. The perceptron can fit 7 functions and the QNN with amplitude encoding can fit 41 (neither can fit the parity function), while the QNN with RT(n) encoding can only fit the trivial function. All other learning algorithms can express all 100 target functions.
    The QNN with basis encoding has no inductive bias, and the QNN with either RT encoding has a simple bias towards low class balance functions, performing poorly on functions with high class balance and low complexity. ZZ encoding has a similar inductive bias, except it achieves a perfect 0\% test error on the parity function, for which it has been engineered. The QNN with amplitude encoding and the perceptron are very similar in their inductive biases towards low class balance and towards low LZ complexity.  The 1-hidden layer FCN is even more biased towards simple (low LZ complexity)  functions, and is fully expressive -- a clear upgrade on both fronts over the perceptron and the QNNs.
    }
    \label{fig:qnn_posterior}
\end{figure*}

In this section, we empirically study the expressivity and inductive bias of QNNs with different encodings on Boolean data.   Given a subset of input-output relations for a target Boolean function, a key challenge for supervised machine learning is to predict the unknown input-output pairs.  Experiments are performed on the $7$-bit Boolean dataset, which has 128 input data points and $\approx 3 \times 10^{38}$ possible target functions.
Four different encoding methods $\phi$ for $\{0,1\}^n$ are examined,  all of which have very different properties. They are: 
\begin{enumerate}
    \item Amplitude encoding $\phi_A$ encodes classical data into the amplitudes of a quantum state, using $\lceil\log_2(n)\rceil$ qubits to represent $n$ dimensional classical data as 
    $$
    \phi_A(x)=\frac{1}{\norm{x}} \sum_{i=1}^{2^n} x_i \ket{i}
    $$
    For Boolean data, the datapoints form a (normalised) hypercube in the n-dimensional Hilbert space.
    This encoding is most similar to the encoding for classical neural networks. Due to the normalisation constraint, the point at the origin in the classical encoding cannot be encoded, so it is dropped. See \cref{def:enc:amp}.
    \item Basis encoding $\phi_B$ uses $n$ input qubits, such that $[0,1,0]\rightarrow \ket{0}\ket{1}\ket{0}$. See \cref{def:enc:basis}.
    \item ZZ encoding $\phi_Z$ uses $n$ input qubits, and is designed to be biased towards the parity function. See \cref{def:enc:zz}.
    \item Random Transform encoding $\phi_R$ uses either $n$ or $\lceil\log_2(n)\rceil$ input qubits, and is the output of a single ReLU-activated fully connected layer with either $n$ or $2^n$ outputs (we refer to these as RT($n$) and RT($2^n$) respectively). It is designed to act as a random non-unitary transform. See \cref{def:enc:relu}.
\end{enumerate}

Each encoding method embeds data very differently in Hilbert space. We test how these embeddings perform on 100 target functions, chosen to have a wide range of different characteristics.
These functions include the parity function, functions with fixed class balance, functions with different degrees of symmetry, and entirely random functions (see \cref{app:targ_train_details}). 
We measure the complexity of these functions using two methods: class balance, and Lempel-Ziv (LZ) complexity. Class balance is defined as min$\{p, 1-p\}$, where $p$ is the proportion of data points classified as $0$ (and so $1-p$ are classified as 1).  LZ complexity is measured as in~\cite{valle2018deep} using the Lempel-Ziv 76 compression technique, and is low for highly structured data, and/or data with low class balance. By comparing both these measures we can distinguish between learners that have an inductive bias towards highly structured data, and those that simply learn to predict the frequencies of $0$ or $1$  in the training data. Previous work has shown that DNNs display a strong inductive bias towards functions with low LZ complexity, even when class balance is high, a property thought to be critical to their ability to generalize well in a wide range of practical situations. \citep{mingard2023deep,valle2018deep,mingard2019neural}.

We trained the QNN (using the \tpp correspondence above) on a randomly selected training set of 64 data points from each function (testing on the other 64). See \cref{app:targ_train_details} for details on the functions and the training process.
We repeated this for 5 random seeds per function, with a different random training set each time. Training the \tpp is much more efficient than QNNs and we can always reach low loss, as there are no barren plateaus. We then used satisfiability tests on the whole dataset to determine whether the QNN can express each of the 100 functions. We also repeated these experiments on a classical perceptron, and a 1-hidden layer FCN of size $(n,2^n,1)$ that is provably able to express all possible Boolean functions \cite{mingard2019neural}.

In \cref{sec:quantum_kernels} we present similar experiments that explicitly use the kernel correspondence of QNNs~\cite{schuld2019quantum,havlicek_supervised_2019,schuld_supervised_2021,schuld2021quantum}.   By directly examining the eigenvalues of the kernels as well as the task-model alignment measure from~\citet{canatar2021spectral}, we can explain why the simple FCN kernel outperforms QNNs on this data. This kind of analysis which is standard in the literature on kernels \citep{cui2021generalization,simon2021theory,harzli2021double} may be better suited for exploring the potential promise of new QNN architectures than simply comparing generalisation performance, as is typically done in the literature~\cite{schuld2022quantum,bowles2024better}. 

\subsection{Expressivity and inductive bias of amplitude encoding and RT(n) encoding}

 A QNN using amplitude encoding can only express an exponentially small fraction of all Boolean functions of the data, as the QNN has an $n$ dimensional Hilbert space for $n$-bit Boolean data. Indeed, we find that this QNN can only express 41 of our 100 functions (see \cref{fig:qnn_posterior}). This means that when applying this model to new datasets a detailed understanding of the data distribution will be required to ensure that the encoded data does not contain structures that prevent the potential target function from being expressed. It is likely hard to prevent this problem \textit{a-priori} for data from quantum sources where the structure may be subtle and hard to predict.  % This hearkens back to the days of ``feature engineering'', and is a severe limitation on a general learning algorithm.

While there are hard constraints on the expressivity of QNNs with amplitude encoding, they do exhibit an inductive bias towards structured functions on the Boolean dataset. This can be seen in \cref{fig:qnn_posterior}, where we observe that amplitude encoding has an inductive bias towards highly structured data (as measured by the Lempel-Ziv complexity, top), even for some functions with high class balance.

The RT(n) encoding with $\lceil \log_2{n} \rceil$ qubits has no structure due to the random nature of the transform. This is reflected in its inability to generalise well for any structured high class balance but low LZ complexity functions. The QNN is also very inexpressive, only able to fit the trivial function to zero training error. 

\subsection{Inductive bias of fully expressive QNNs}\label{sec:ind_bias_other_encodings}

Each of the following encodings uses $n$ qubits to encode $n$-dimensional classical data, and achieves full expressivity on the $n$-bit Boolean dataset. In this context, we note that if the encodings use all $2^n$ dimensions, then \citet{kubler2021inductive} show that an exponential amount of data would be required to learn the data, implying no quantum advantage. To learn from a reasonable (polynomial) amount of data, one would need to significantly restrict the expressivity of the QNNs, using only a limited number of dimensions in Hilbert space to encode the data. See \cref{sec:lit_review} for more details.  How these principles work out can be partially illustrated below.

With basis encoding it is not hard to prove, see for example~\citep{ngoc2020tunable},  that all $2^{2^n}$ functions can be expressed by a QNN with $2^n$ gates. However, since every datapoint is encoded orthogonally in a $2^n$ Hilbert space, the kernel is trivial, and the QNN has no inductive bias.  As can be seen in \cref{fig:qnn_posterior},  the test error for every function is 50\% illustrating that for basis encoding the QNN can't learn Boolean data.

\cref{fig:qnn_posterior} shows that ZZ encoding has a strong bias towards low class balance - its performance is equivalent to simply measuring the frequency of 1s or 0s, and using this to predict unknown data. More interestingly, it also shows a further strong bias towards the parity function, with perfect generalization for $n=7$ and with $64$ training data. 
The prediction from \citep{kubler2021inductive} is that ZZ encoding would fail to generalise for increasing $n$ (assuming $\phi_{ZZ}(x')$ continues to span an exponential number of dimensions for higher $n$).  This dataset is most likely too small to test such predictions.  Nevertheless,   we used $2^{n-1}$ datapoints for training, out of $2^n$, and the prediction is that training sets will need to continue to grow exponentially with increasing $n$ for this system to generalise,  even for the parity function. 

Note that there is nothing particularly deep about the ZZ encoding being good at learning the parity function.  For a given target function, an encoding can always be created with the desired inductive bias.  
 While the 1-layer FCN we use does not perform well on the parity function with classical encoding, other architectures do have a better inductive bias towards parity (see experiments in \citep{mingard2023deep}). It is not hard to create an encoding more optimal for the parity function  for classical DNNs as we show in  \cref{app:zz}. In summary, designing a specific encoding to allow a QNN to work well on a specific target is not quantum advantage. Although such claims are frequently made in the literature, we advocate for this practice to be tempered.   The goal of classical general machine learning algorithms such as DNNs is to learn features (an embedding) that allow it to separate the data itself without the need for tuning by hand.

\cref{fig:qnn_posterior} shows that, in contrast to the RT($n$) encoding, the RT($2^n$) encoding is fully expressive on this dataset.  Moreover, much like the ZZ encoding, it has an inductive bias towards low class balance, but no bias towards the more interesting structured Boolean functions.

These three encodings illustrate a general point: with data spread out across all $2^n$ dimensions of Hilbert space, any function can be expressed, but the inductive bias will be weak. This tradeoff makes it hard to design  QNNs that function like DNNs, that is as general machine learning algorithms.

\section{Quantum DNNs \& Quantum FCN Kernels}\label{sec:Quantum_DNNs}

While one could continue trying new embeddings, and hoping for better performance, the problem demonstrated in the previous section and our more general experience with classical kernel methods call for new strategies.  One way forward we explore is to see how a QNN could mimic an FCN kernel on Hilbert space -- providing an easy way to improve the expressivity and inductive bias of quantum kernels -- but still suffering from all the problems with kernel machines discussed in the context of QNNs.
The other strategy moves away from the kernel limit with a deep quantum neural network (DQNN) that is capable of expressing non-linear functions. 

\subsection{FCN kernels}\label{sec:QNNK1}

Rather than directly training a QNN, the quantum kernel $K_Q(x^{(1)},x^{(2)})=|x^{(1)\dag}x^{(2)}|^2$ can be estimated on a quantum computer, and used to perform inference classically. One can then use tricks such as 
 a `bandwidth' parameter to artificially move inputs closer together which can counteract the effects of data being embedded in too high-dimensional a space~\cite{canatar2022kernel}. This may allow kernel learning with polynomial data for systems that would not have been otherwise learnable~\cite{kubler2021inductive}.
Here, we show another simple twist on standard quantum kernel methods.  Classical DNNs in the infinite-width limit can be mapped to kernel methods,  and these have been shown to be competitive with finite DNNs on small data sets~\cite{arora2019harnessing,lee2020finite}.
For $l$-layered classical FCN ReLU networks with biases set to 0, we can recursively calculate such kernels (in the infinite width limit) $K^l_{ij}=K^l(x^{(i)}, x^{(j)})$ using the following recurrence relation \citep{lee2017deep,matthews2018gaussian},
\begin{equation}\label{eq:fcn_kernel}
a_l^{(ij)} = \frac{1}{\pi}
\left(\sqrt{1 - (a_{l-1}^{(ij)})^2} + \left(\pi - \arccos{a_{l-1}^{(ij)}}\right)a_{l-1}^{(ij)}\right).
\end{equation}
where $a_l^{(ij)} \equiv K^l_{ij}/\sqrt{K^l_{ii}K^l_{jj}} $ and $a_0^{(ij)}=x^{(i)}\cdot x^{(j)}$. We can construct a kernel $K_Q^l(x^{(i)}, x^{(j)})$ which mimics the kernel of an $l$-layered infinite-width FCN on Hilbert space  by sampling   from a QNN \citep{canatar2022bandwidth},  choosing $a_0^{ij}=\sqrt{K_Q(x^{(i)}, x^{(j)})}$ and  using  \cref{eq:fcn_kernel} to construct $K_Q^l$.  In our experiments we only use $l=1$. While this method does not necessarily improve on classical kernels (for example there is still no representation learning)  it gives us access to a different kernel than the standard QNN kernel $K_{Q}=|x^{(1)\dag}x^{(2)}|^2$.    

We tested this model on the $n=7$ Boolean dataset. 
$K_Q^1$ (for a 1 hidden layer FCN) performs in general slightly worse than the FCN on the Boolean dataset in \cref{fig:dqnn_posterior}. But, it performs better and is more expressive and/or has a more interesting inductive bias than QNNs with the encodings in \cref{fig:qnn_posterior}.

We also tested performance on a version of the FashionMNIST~\cite{xiao2017fashion} image dataset that is popular in the QNN literature.  To be learnable by QNNs, this dataset is typically hugely simplified.  Here it is produced by binarising the FashionMNIST labels, performing principal component analysis on the entire dataset, and taking the top 8 components.  These are then encoded with amplitude encoding on 3 qubits. This kind of downsampling is popular in the  QNN literature, and can be found in standard set-ups such as Tensorflow Quantum~\cite{broughton_tensorflow_2021}. Unfortunately, we found multiple instances in the QNN literature where the extreme simplification of these datasets is not well signposted, see also \citep{bowles2024better}. We discuss this problem in~\cref{app:over-simplified}.  As one step forward we will call this simplified version Q-FashionMNIST, even though very little of the original FashionMNIST dataset structure is left.   We use 250 training datapoints and 50 test datapoints. %This construction is very similar to the dataset used by TensorFlow Quantum in \cite{TensorFlow_2023} and \citep{dborin2022matrix} and massively simplifies the dataset (to be runnable on a QNN simulation). We feel that these simplifications are often not made explicit enough in QML papers.
Interestingly, the $K_Q^1$ kernel achieves the best accuracy of all the models we tested on Q-FashionMNIST in \cref{fig:dqnn_fashion}.  

\begin{figure}[t]
    \centering
    \begin{tikzcd}
        \lstick[wires=3]{$|0\rangle^{\otimes p}$} & \qw    & \gate[wires=6, nwires={2,5}]{U(\theta_1)} & \meter{} \rstick[wires=3,label style={xshift=0.2cm}]{$\phi(r)$} & \lstick[wires=3,label style={xshift=-0.2cm}]{$r$} & \qw & \gate[wires=4, nwires=2]{U(\theta_2)} & \qw & \\
                                                  & \vdots &                                         &  & & & & \vdots & \\
                                                  & \qw    &                                         & \meter{} & & \qw & & \qw & \\
        \lstick[wires=3]{$|x\rangle$}             & \qw    &                                         & \qw & \lstick{$|0\rangle$} & \qw & & \qw & \meter{} \\
                                                  & \vdots &                                         & \vdots & & \\
                                                  & \qw    &                                         & \qw & &
    \end{tikzcd}
    \caption{\textbf{A DQNN used for binary classification.} There are $n$ data input qubits and $p$ $\ket{0}$ initialised intermediate qubits in the first layer. The measured output of the first layer $\langle r \rangle$ is then re-encoded into a quantum state  $\ket{r}$ (using an encoding method $\phi$), and passed through a final layer with one $\ket{0}$ initialised output qubit.}
    \label{fig:dqnn_plus_one_wire_circuit_measurement}
\end{figure}

\subsection{DQNNs}

To escape the kernel paradigm, some non-linear operation has to occur during the forward pass. Other existing quantum machine learning methods can achieve this. For example, convolutional neural network inspired QCNNs~\cite{cong_quantum_2019} do this via qubit measurement, and proposed continuous variable algorithms use `non-Gaussian' gates.
A classical FCN is made up of a series of perceptron-like layers. By analogy, we create a deep QNN (DQNN) by including a measurement and an extra layer, as shown in \cref{fig:dqnn_plus_one_wire_circuit_measurement}.  The first layer is a standard QNN, but with $p$ readout qubits. The second layer is a standard QNN with $q$ input qubits (the exact value will depend on the encoding used in the second layer), and one readout qubit. A  similar DQNN with layered structure and measurements in the middle can be found in~\cite{zhao2021qdnn}.   
The data is encoded for input to the first layer $x^{(L=0)}=\ket{0}^{p}\ket{x}$. The output of this layer is then a vector $z$ of dimension $p$ with components $z_i$ -- the expectation values of the $i$'th readout qubit. The output of the first layer is $[\langle q_0 \rangle, \dots, \langle q_p \rangle]$. A trainable (classical) bias term $b_i$ can be added to each of these, followed by a non-linearity $\sigma$, to give $x^{(L=1)}=[\sigma(\langle q_0 \rangle+b_0), \dots, \sigma(\langle q_p \rangle+b_p)]$.
This output can then be encoded for the second layer with some encoding method $\phi^{(L=1)}(x^{(L=1)})$, and the second layer of the DQNN classifies this input like a regular QNN.  
In  \cref{app:sec:dqnn_proofs}  we demonstrate that this method can model a 1-hidden layer FCN with step-function activations, and therefore satisfies universal approximation theorems.

\begin{figure}[t]
    \centering
    \includegraphics[width=1\columnwidth]{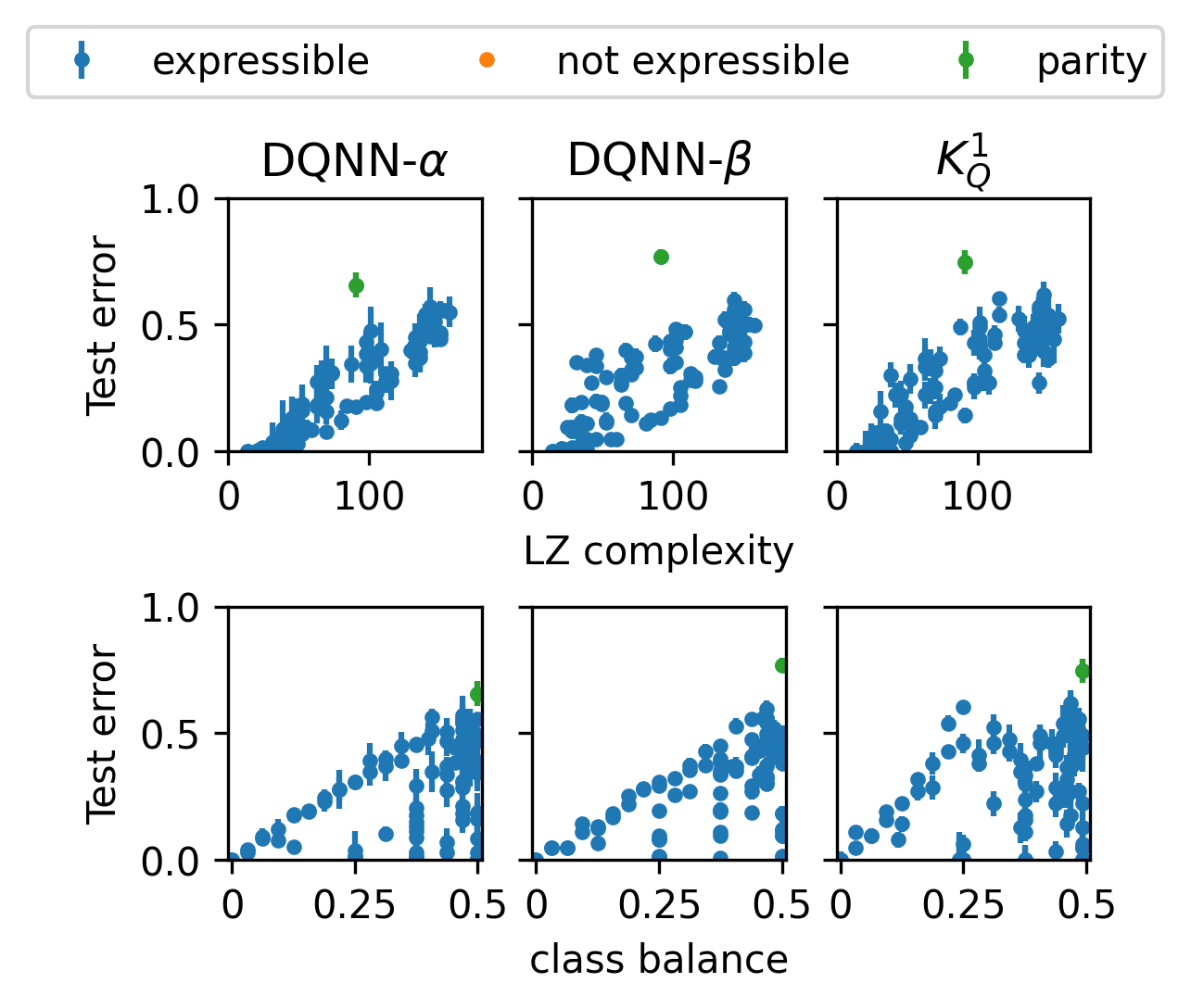}
    \caption{\justifying\textbf{Test error v.s.\  complexity and class-balance for DQNNs and $K_Q^1$ for the $n=7$ Boolean dataset} 
    The top (bottom) row shows generalisation error v.s.\ LZ complexity (class balance). %Class balance is the minimum proportion of 0s or 1s in the function.
    Each datapoint is for one of the 100 target functions used in ~\cref{fig:qnn_posterior}. %chosen to have a wide range of entropies and LZ complexities.
    All three algorithms receive data with amplitude encoding and are fully expressive. The \dqnna (with $q=6$ intermediate qubits) has the most similar inductive bias to the kernel $K_Q^1$ and the finite-width FCN (see \cref{fig:qnn_posterior}). The inductive bias of \dqnnb ($q=7$) is weaker than \dqnna due to the basis encoding used after the first layer. 
    }
    \label{fig:dqnn_posterior}
\end{figure}

\begin{figure}[h]
    \centering
    \includegraphics[width=0.8\columnwidth]{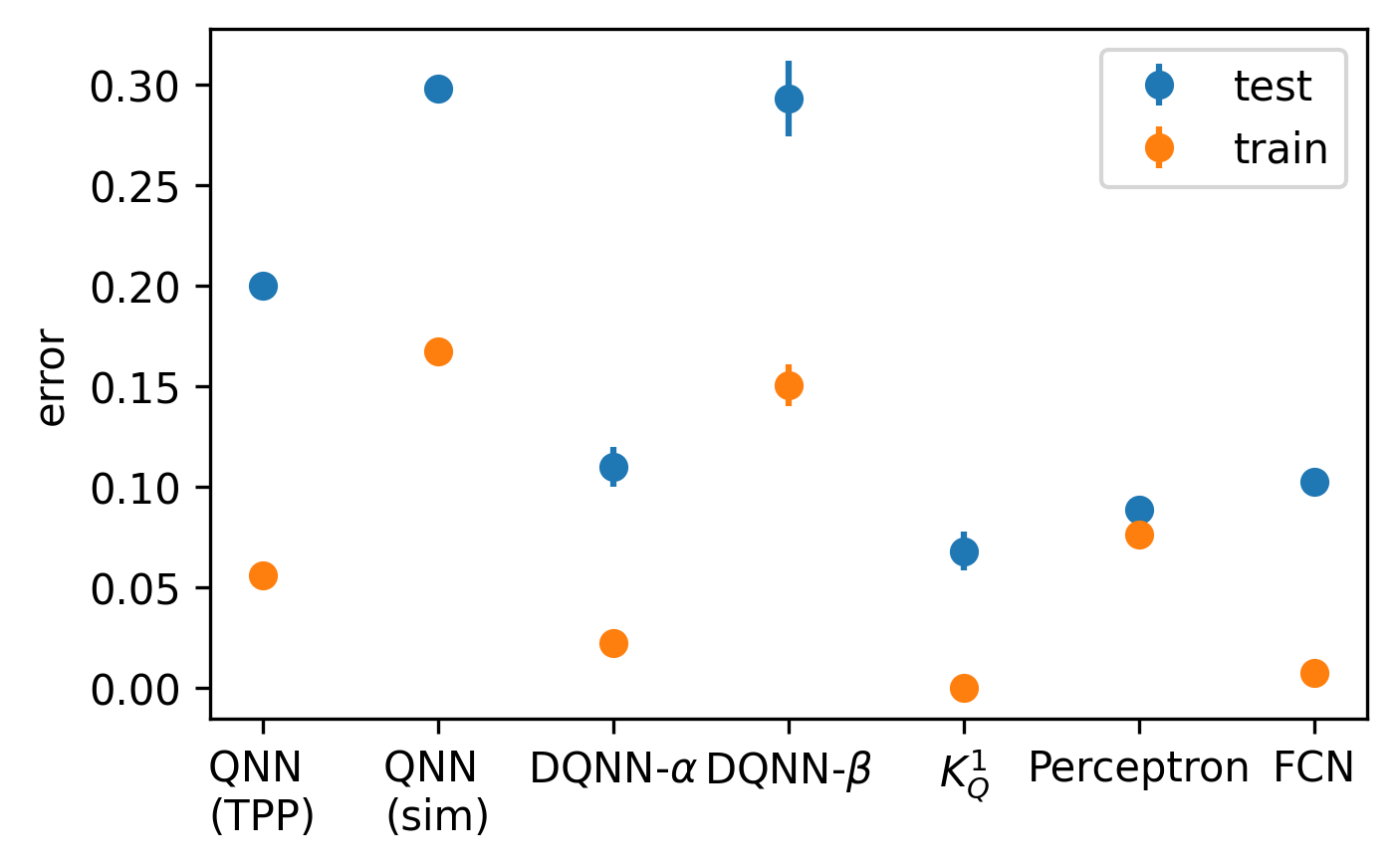}
    \caption{\justifying\textbf{Generalisation error for different architectures on the simplified $8$-component Q-FashionMNIST dataset, amplitude encoded on 3 qubits.}
    The QNN (sim) and QNN (TPP) are amplitude-encoded QNNs simulated with  PyTorch and the TPP classical mapping respectively until their train accuracy no longer improves. The QNN (sim) does not fully converge due to barren plateaus and so has a higher training error.  
    As in \cref{fig:qnn_posterior}, the \dqnna outperformed the \dqnnb. % Both have $q=3$ intermediate qubits.
    $K_Q^1$ performs better than the FCN. The simple perceptron converges to a higher training error than the TPP, but a lower test error because it has a better inductive bias for this problem. 
    }
    \label{fig:dqnn_fashion}
\end{figure}

In our experiments, the nonlinearity $\sigma$ is thresholded at 0, and we use two encodings $\phi^{(L=1)}$ in the second layer
\begin{enumerate}
    \item DQNN-$\alpha$ networks use amplitude encoding -- for $p$ readout qubits in the first layer, $x^{(L=1)}$ is encoded onto $q=\lceil \log_2 p\rceil$ qubits in the second layer using amplitude encoding.
    \item DQNN-$\beta$ networks use basis encoding -- for $p$ readout qubits, $x^{(L=1)}$ is encoded onto $q=p$ qubits in the second layer using basis encoding
\end{enumerate}

We trained the DQNN  with backpropagation using PyTorch \citep{paszke2019pytorch}.  
\cref{fig:dqnn_posterior} compares these  DQNNs to a 1 hidden-layer FCN and a perceptron on our Boolean dataset.
Both DQNNs perform better than the QNNs in \cref{fig:qnn_posterior}, and \dqnna performs better than \dqnnb.

In 
\cref{fig:dqnn_fashion} we compare test performance of the DQNNs with a QNN, the hybrid quantum-classical kernel $K_Q^1$, a perceptron and an FCN on the downsampled image dataset Q-FashionMNIST.
We also compare a QNN trained using the TPP correspondence (where we could reach the global minimum) to a QNN trained directly with PyTorch until the loss plateaus for several hundred iterations. In contrast to the TPP, we were unable to reach the global minimum due to what appear to be barren plateaus. This optimisation problem is responsible for a 10\% drop in both train and test accuracy compared to the TPP. 
Interestingly, while the \dqnnb does not perform well, 
the  \dqnna (with $p=8$ intermediate readout qubits, so the last layer had $q=3$ data input qubits) performs comparably to the simple FCN, reaching $98\%$ training accuracy (against $99\%$ for the FCN).  The \dqnna can express non-linear functions since it trains to lower error than the perceptron (which has $92\%$ training accuracy).  However, it generalises slightly less well, indicating that its inductive bias is less good than that of the perceptron. This difference is perhaps not surprising given that the TPP kernel is the square of the perceptron kernel and so has weaker correlations which translates here to a weaker inductive bias.

While they perform better than QNNs, the exact type of DQNNs we propose here would be computationally inefficient when scaled up. First, a forward pass would be very expensive, as expectation values need to be taken at each layer. Second, if we are using amplitude encoding on the second layer as for \dqnnan, a large number of readout qubits would be needed, and the data would then have to be encoded on the input qubits of the next layer, another expensive process. Both of these problems can be somewhat alleviated, for example by measuring input qubits rather than having separate output qubits, or using probabilistic outcomes at the hidden layers. However serious computational efficiency problems remain. % it illustrates a point: as making perceptrons multi-layered massively increases their expressivity without damaging their inductive bias, adding a layer to a QNN has an analogous effect.

% \FloatBarrier
%\input{sections/conclusion}
\section{Comparing QNN and DNN performance}

So far, we have used the relative simplicity of the \tpp architecture to analyze the expressivity and inductive bias of QNNs.   In this section, we place these results for QNNs into the wider context of classical DNNs, the most successful general-purpose learning algorithms currently available. 
We ask: what barriers do QNNs face to also be useful as general-purpose learning algorithms?  
 
 Many of the points in this section may seem quite obvious to those versed in classical deep learning. Some have already been discussed in the QNN literature (see e.g.~\cite{schuld2022quantum}).  Nevertheless, our experience of reading the recent QNN literature suggests that these ideas have not yet been assimilated.  We therefore provide a more expansive discussion of than one might otherwise do in the hope that this will help move things forward. 

\noindent \textbf{Mismatched scales complicate the search for quantum advantage with current QNNs. }
Classical machine learning has benefited from many more years of study and investment than quantum machine learning has.  Moreover,  the power of deep learning techniques only fully emerges on large datasets and with extremely large models. For example, the modern deep-learning era is often said to have kicked off in 2012 when a CNN-based architecture called Alexnet~\cite{krizhevsky2012imagenet} with 62 million trainable parameters won, by some distance,   a competition on ImageNet-1K, a database with $1.2 \times 10^6$ training images and $10^3$ classes. Recent large language models (LLMs) such as GTP4, based on transformer architectures, can have over $10^{12}$ parameters and be trained on close to a terabyte of data \cite{openai2024gpt4}. These scales are many orders of magnitude larger than current QNNs, and without major breakthroughs, any QNNs in the foreseeable future.

From the perspective of mismatched scales, directly comparing empirical QNN performance to classical DNNs does not seem like a well-posed question at this point in time.  Nevertheless, as explained for example in \citet{schuld2022quantum}:  \textit{The dominant [current] goal in quantum machine learning is to show that quantum computers, with their properties like entanglement and interference, offer advantages for machine learning tasks of practical relevance \ldots This question is particularly important to the emerging quantum technology industry which has been driving the ``tunnel vision of quantum advantage''}.   Indeed, many papers in the literature report an empirical quantum advantage on classical data, albeit on very small datasets.  From the vantage point above, the popularity of this direction of research is surprising, as are the results.   Unfortunately,  many of these claims fail upon closer inspection. For example,  as recently shown in some detail by~\citet{bowles2024better},  simple off-the-shelf DNNs typically outperform their QNN counterparts.     We independently found similar problematic issues in the QNN literature that obscure the extent of the challenge to achieve quantum advantage.  We detail two in \cref{app:abbas_paper}: the use of extremely down-sampled datasets that are not clearly described, and claiming quantum advantage while making comparisons to sub-standard classical DNNs.    We join with the authors of~\cite{bowles2024better} to call for a change of research practices in the field.

\noindent \textbf{Lack of theoretical understanding of classical DNNs makes extrapolation from tiny datasets problematic}
Despite the caveats above,  the question of whether QNNs can at some time in the future achieve significant quantum advantage on the kinds of classical data that underlies commercially relevant applications remains an important one.  The enormous investments of time and money that will be needed to bring this promise, if it exists, into fruition injects extra urgency into this question.

One way forward is empirical, to study the performance of QNNs on the tiny datasets that they can currently learn in the hope that one can extrapolate out to predict how they may perform on larger datasets. Unfortunately, inferring potential quantum advantage is hugely complicated by our limited theoretical understanding of why new behaviours, not evident in small datasets, emerge with scale in classical DNNs. Without clearer theoretical guidance, making reliable extrapolations about quantum advantage from empirical QNN performance on small datasets is extremely difficult. Even if a QNN outperforms a classical DNN on a tiny dataset, that advantage may not persist as the dataset size increases. Conversely, if a QNN underperforms compared to a classical DNN on a small dataset, it doesn't necessarily mean that scaling up the QNN architecture won't yield significant advantages on larger datasets.

\noindent \textbf{What can be learned from tiny datasets? }
Nevertheless, there still remain interesting questions that can be empirically interrogated with smaller datasets.  The mystery of why DNNs perform so well can be split into two distinct questions~\cite{mingard2023deep}:  1) Why don’t overparameterised DNNs overfit as classical learning theory predicts that they should?  2) Given a DNN that does not overfit, what hyperparameter/architecture combinations optimise performance on large datasets?  Much current research in the field is aimed at question 2, since one can simply observe that DNNs don't overfit and take that as given. Then the key questions,  which matter most in engineering practice anyhow, revolve around how to tune hyperparameters or architecture to enhance performance.   In recent years, much theoretical progress has been made in answering question 1, and kernel methods have played an important role in this quest (see e.g.~\cite{belkin2021fit} for a review). Nevertheless,  this novel understanding has yet to translate into consensus theories that can be used systematically answer question 2, including explaining how and why DNN performance improves with scale. 

The Boolean dataset we discuss in section~\ref{sec:qb:exp_bool} is particularly well suited to investigating question 1~\cite{valle2018deep,mingard2019neural,mingard2023deep}.  %An advantage it has over many of the down-sampled image datasets typically used in the field of QNNs is that 
One can easily vary 
the complexity of the target function in order to systematically explore the inductive bias of a learning system. Facilitated by our mapping to classical TPPs,  we show in Fig 2 that a series of popular embeddings for QNNs exhibit either a rather uninteresting inductive bias (for example ZZ and basis encoding) or suffer from a lack of expressivity (amplitude encoding).   A minimal requirement for QNNs would be that they show similar inductive bias and expressivity to classical DNNs. The fact that they don't here suggests that it may be harder than perhaps anticipated for current QNNs to overcome even this 1st order hurdle. 

 \noindent \textbf{Using theory to extrapolate to larger scales}
While the impressive performance of large-scale DNNs is still theoretically mysterious, QNNs can be mapped onto the much better-understood framework of kernels~\cite{schuld_supervised_2021,kubler2021inductive,huang2021power}.  This provides a more promising basis for making theoretical predictions about the performance of larger QNNs (see also \cref{sec:lit_review} and \cref{sec:quantum_kernels}).   On the other hand, the fact that QNNs map to a restricted class of kernels raises another rather obvious note of caution about any potential performance advantage for QNNs on classical data.   It is well known that DNNs have so far significantly outperformed kernel methods at scale, despite DNN-inspired kernels being competitive on smaller datasets~\cite{arora2019harnessing,lee2020finite}.

The mapping to the \tpp provides further specific theoretical insights beyond the more general ones that kernels bring.    Firstly, the \tpp suffers from the classical expressivity problems of linear classifiers.  One might argue that this critique is unfair since all kernel methods can be mapped onto a linear classifier with a feature map. 
However, in contrast to QNNs, where the feature map is explicitly constructed via the embedding, the power of classical kernels comes from the kernel trick where the feature map is treated implicitly.  This greatly simplifies calculations because the feature map -- which can be infinite -- doesn't need to be directly calculated, and instead one works directly with the kernel.   QNNs map to a restricted set of finite-sized feature maps. For some feature maps, such as amplitude encoding, all expressivity issues carry over.  
The worry is that other feature maps might also encounter unlearnable function problems unless handled carefully. For quantum data, where creating custom embeddings is more challenging, avoiding these expressivity issues may be hard to avoid.  

Of course, with sufficient knowledge of the target function and data distribution, one can always design specialised embeddings that work well on specific data. But, that should not be called quantum advantage unless one can show that the same cannot be achieved with  classical kernels or  DNNs.   Moreover, this paradigm of hand-encoded features has been largely abandoned with the advent of large DNNs that automatically extract features and which routinely outperform hand-coded methods~\cite{lecun2015deep}. Moreover, these DNNs are versatile, and don’t need to be significantly adapted for every new problem.  
In summary, the fact that QNNs depend critically on the details of the quantum embedding, or in classical language, a hand-crafted feature map, is a disadvantage that experience with modern machine learning suggests will be hard to overcome.

\noindent \textbf{Reducing the expressivity of QNNs.}
The problems that the mapping to the TPP exposes hold for any fully expressive QNN and cannot be easily sidestepped.  However, one can restrict the expressivity of a QNN, a route that has been explored by many authors. In classical machine-learning theory restricting the expressivity of a learner is an important factor in optimising bias-variance tradeoff.  If QNNs are best understood within this older classical setting, then reducing expressivity may be necessary to improve performance.  We explore this idea briefly in \cref{sec:charlie}, showing that reducing expressivity does not significantly improve performance on our Boolean dataset.  While more work would need to be done to firmly establish this trend,  modern techniques such as deep learning and DNN-inspired Gaussian processes have shown that the best performance typically comes from generalist models with high capacity.  This experience suggests that restricting expressivity, which typically needs to be optimised for each dataset, is unlikely to produce a general-purpose QNN that can rival modern DNNs on classical data. 

\noindent \textbf{Scaling up and the TPP correspondence.}
Our TPP correspondence provides a more efficient way to simulate many current QNNs than direct methods do.   But this advantage will likely disappear as technology improves so that QNNs can handle larger input dimensions.    For example, a task with $n$ dimensions and binary data has an exponential input size of $2^n$ that will quickly become impractically large for TPPs.  At first sight, it appears that QNNs can circumvent this problem because just $n$  qubits can be used to represent a $2^n$ dimensional Hilbert space. For several of the embeddings we discuss, the computational cost of state preparation is indeed low.   Nevertheless, it has been suggested that for more useful encodings of classical data, the process of state-preparation can suffer from an exponential scaling of the computational costs if the data does not have certain kinds of structure~\cite{aaronson2015read}.   In that case, both the QNN and the TPP would suffer from similar computational inefficiencies. 

\noindent \textbf{What is the promise of alternative strategies for quantum machine learning?}
If finding straightforward quantum advantage for QNNs is not a viable research programme at the moment, perhaps, as suggested in ~\cite{schuld2022quantum}, other goals should be pursued first.  One option is that QNNs could potentially be used to efficiently estimate a classical kernel \citep{schuld2021quantum}. We illustrate this avenue by using a QNN to compute inner products for an infinite-width DNN kernel, finding that this formulation outperforms the standard QNN.  However, this formulation still suffers from standard problems of kernel methods such as poor scaling with data size. For classical data, there remains the problem of encoding the data onto the quantum computer.

\noindent \textbf{What is the promise of alternative strategies for quantum machine learning?}
If finding straightforward quantum advantage for QNNs on classical data is not a viable research programme at the moment, perhaps, as suggested in ~\cite{schuld2022quantum}, other goals should be pursued first.  One option is that QNNs could potentially be used to efficiently estimate a classical kernel. We illustrate this avenue by using a QNN to compute inner products for an infinite-width DNN kernel, finding that this formulation outperforms the standard QNN.  However, this formulation still suffers from standard problems of kernel methods such as poor scaling with data size and weaker performance than state-of-the-art DNNs on large datasets. Furthermore, there remains the problem of encoding the classical data onto the quantum computer.

We briefly consider how to move beyond the QNN/kernel paradigm with layers of PQCs and intermediate measurements to create deep QNNs.  Using the correspondence between the single output QNN and the perceptron we prove universal approximation theorems for these models and experimentally demonstrate that they have better inductive bias for both the Boolean data and Q-FashionMNIST than simpler QNNs do.   However, these models are likely to suffer from computational efficiency problems that scale badly with size. 

\section{Conclusions}
Our results illustrate a growing (but by no means unanimous) consensus in the literature~\citep{preskill2018quantum,schuld2022quantum,kubler2021inductive,huang2021power,gyurik2022establishing,schreiber2023classical,bowles2024better} that current QNN algorithms are unlikely to rival DNNs as general learners on classical data, even with future advances in quantum computing beyond the NISQ regime. Perhaps alternative methods such as DQNNs, or hybrid methods where a QNN provides efficient calculations within a broader classical algorithm may bring some benefits. But, this hope is tempered by the dauntingly wide performance gap between current quantum machine learning algorithms and state-of-the-art classical deep neural networks on large datasets.  

By contrast, QNNs and their extensions may have advantages for learning certain kinds of quantum data~\cite{huang2021power,huang_quantum_2022}. Such advantage is most likely to emerge if the problems exhibit special structure, as is the case for other quantum algorithms~\cite{aaronson2015read,aaronson2022much}.  For example, one requirement for QNNs is linear separability in Hilbert space.

More generally,   using quantum computers as quantum simulators hold exciting promise for near-term advances in fundamental physics problems where such special structures exists~\cite{daley2022practical,mi2024stable}. For instance, the ability to precisely control strongly correlated quantum systems out of equilibrium could generate new insights into complex materials like high-temperature superconductors, potentially yielding commercial impact. 
Searching for quantum problems with the kinds of special structures that QNNs can exploit may lead to an important new scientific understanding into the nature of quantum mechanics, making such research valuable even in the absence of commercial applications.  

This paper is pessimistic about the prospects of QNNs outperforming DNNs on classical data. On the other hand, proving a negative is hard and surprises abound in science.  Moreover,  there is a wider class of QML systems beyond QNNs which may evade some of the problems we . Quantum computing has established exponential advantages in other domains and there may still be undiscovered avenues to harness its potential in quantum machine learning.

\begin{acknowledgments}
C.M. acknowledges funding from an EPSRC iCASE grant with IBM (grant number EP/S513842/1). C.L. acknowledges funding from the EPSRC. J.P. acknowledges funding from the EPSRC Doctoral Training Partnership.
We thank Garnet Chan, Shivaji Sondhi, Nik Gourianov, and Douglas Brown for useful conversations.
\end{acknowledgments}

\FloatBarrier
\bibliography{bib}

\appendix

\newpage
\onecolumngrid
\section{A brief review of some key papers on kernels and QNN performance }\label{sec:lit_review}

In this section, we describe in a bit more detail some key papers on kernels and QNN performance that we also discuss in the main text. 

\citet{huang2021power} compares the inductive biases of QNN-inspired and classical kernels. They define a geometric measure of similarity between two kernels $K_1$ and $K_2$ that they call $g(K_1\|K_2)$, which is based on the spectral norm of a function of $K_1$ and $K_2$. 
%investigated the inductive biases of QNN-inspired kernels. It provides conditions for quantum advantage (when a QNN will outperform classical learning algorithms) 
%or an ability to learn. The authors use a geometric measure of similarity between two kernels they call $g$ (see the paper for details). 
They argue that if a classical kernel $K_C$ can be computed efficiently such that it is sufficiently well aligned to the target quantum kernel $K_Q$ -- such that $g({K_C \| K_Q})<< \sqrt{N}$ (where $N$ is the number of datapoints) then quantum advantage is not possible. 
%Whether the systems can learn is a question of whether the inductive bias of the kernel is sufficiently good.
Quantum advantage is possible when approximating $\bra{x}\ket{x'}$ classically cannot be done to the required degree of precision -- that is, $g({K_C \| K_Q})\propto \sqrt{N}$ for any efficiently computable $K_C$. Note that even in this case, $K_C$ may still be able to learn the target function (e.g.\ a constant target function would be easily learned). For quantum advantage to occur, $K_Q$ must also be able to learn the target, and $K_C$ must not. By optimising these measures they engineer datasets that they claim are hard to learn with classical ML, but can be more easily learned with a QNN, although the advantage decreases with more data.

\citet{kubler2021inductive} show that if the reproducing kernel Hilbert space (RKHS) of the quantum kernel is low dimensional and contains functions that are hard to compute classically, quantum advantage is possible. If the target function $f$ is well-aligned to the principle components of the kernel $K_Q$, $f$  then it can easily be learned, formalizing the learnability criteria in \citet{huang2021power}.
\citet{kubler2021inductive} then give conditions on whether learning is possible (note that their results apply to any kernel, not just QNN kernels). They show that learning with polynomial amounts of data is only feasible when the mean embedding of most coordinates is close to a pure state. This means that, using $\phi_i$ to denote the embedding onto the $i$'th qubit, $z\mapsto \ket{\phi_i(z)}$ is almost constant. Specifically, if $\Tr{\rho_\mu^2}\leq \delta<1$, where $\rho_\mu$ is the square of the mean density matrix, then there exists some input dimension for which polynomial data will not suffice for learning.
This means that the potential quantum advantage from a Hilbert space of dimension $2^n$ cannot be utilised with polynomial data.

In this context, we note that \citet{shaydulin2022quantumkernel} introduce a `bandwidth' parameter that controls how spread out the data is, and the parameter can be varied to counteract the bound in \cite{kubler2021inductive}, see also~\cite{canatar2022bandwidth}.  \citet{peters2022generalization} explain why quantum kernels can still learn even when overparameterised, using a similar analysis to that applied to more general kernels (e.g.\ \citep{canatar2021spectral}).

Finally, we mention a study by \citet{wright_capacity_2020} who analysed the expressive power of parametrised quantum circuits using memory capacity. They showed that the capacity $C$ of a QNN is bounded by the information $W$ that can be trained into its parameters $C \leq W$. This means that QNNs that are parametrised classically do not have an advantage in capacity over classical neural networks with the same number of parameters, as the bound $C \leq Q$ applies universally, regardless of the learning machine and its features. QNNs parameterised with quantum states, however, could have exponentially larger capacities.

\FloatBarrier
\section{Problems with claims of potential quantum advantage for QNNs in the literature}\label{app:abbas_paper}

The main prior experience with machine learning for several authors on the current paper was with classical DNNs. We started this project as relative newcomers to the field of quantum machine learning. We initially read many papers reporting excellent performance on well-known benchmark datasets that we were familiar with. In many cases, these studies claimed quantum advantage over classical DNN architectures. At first, this seemed quite exciting. 
However, upon further reading, we repeatedly found that these claims of quantum advantage were presented in ways that at best we easily misunderstood, or, at worst,  were simply wrong. Working this out often took significant effort, often involving digging through code in repositories to reconstruct what had actually been done.  Two of the key issues we repeatedly found include: 
\begin{enumerate}
 \item Datasets that are so drastically simplified that they bear little resemblance to the originals from which they took their names (see \cref{app:over-simplified} for more detail.
 \item Comparing QNNs to sub-standard DNNs so that the quantum advantage stems from the unnecessarily poor performance of the classical DNN, not from the good performance of the QNN.
 See \cref{app:IBMpowerpaper} for more detail.
\end{enumerate}

While we were finishing this paper, we were encouraged by the appearance of a preprint (\citet{bowles2024better})  written by industry insiders in the field of QML, that points out similar serious problems with many empirical claims of quantum advantage. While there are also many good papers in the field, the number with serious problems is remarkably high. The situation is so dire that the authors of \cite{bowles2024better} felt the need to directly address the question of whether or not they are dealing with deliberate deception. They write: \textit{Since leaderboard-driven research actively searches for good models, a positivity bias of this nature is not a question of ethical misconduct, but built into the research methodology.} In other words, a culture has developed that makes it intellectually (and perhaps even morally) hard to get certain key things right. They give a road map towards better practices when empirically benchmarking quantum machine learning against classical methods. We hope that this kind of important work will raise the standards of how quantum advantage is analysed and reported.

\subsection{Example of over-simplified data-set}
\label{app:over-simplified}

Fashion-MNIST~\citep{xiao2017fashion} is a popular dataset consisting of 70,000 images of fashion products, split into 60,000 images for the training set and 10,000 images for the test set. Each image is made up of 28x28 grayscale pixels with values 1 to 255. These images are associated with 10 labels corresponding to clothing items such as t-shirts/tops, trousers, pullovers, dresses, bags, etc. 
\iffalse
\begin{description} 
\item[0] T-shirt/top
\item[1] Trouser
\item[2] Pullover
\item[3] Dress
\item[4] Coat
\item[5] Sandal
\item[6] Shirt
\item[7] Sneaker
\item[8] Bag
\item[9] Ankle boot associated with a label from 10 classes.
\end{description}
\fi
To a reader versed in classical ML, its use signals an attempt to try a more difficult classification problem than the famous MNIST dataset of handwritten digits~\citep{lecun1998mnist}. Interestingly, many QNN papers also claim to use Fashion-MNIST to test QNN performance. At first sight, the choice of this more complex dataset looks impressive. Upon further investigation, however, we found that these papers always use a hugely simplified version of the original dataset. The extreme nature of this simplification is not always that easy to spot in the text where authors describe their work. 

\begin{figure}[h]
 \centering
 \begin{subfigure}{0.45\textwidth}
  \includegraphics[width=\textwidth]{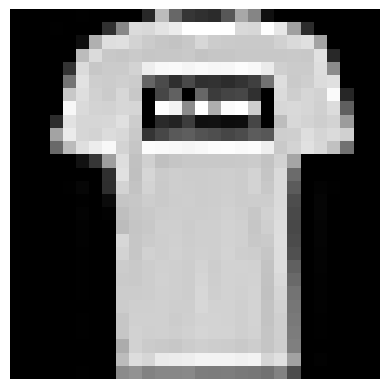}
  \caption{Top, FashionMNIST}
  \label{fig:subfig1}
 \end{subfigure}
 \hfill
 \begin{subfigure}{0.45\textwidth}
  \includegraphics[width=\textwidth]{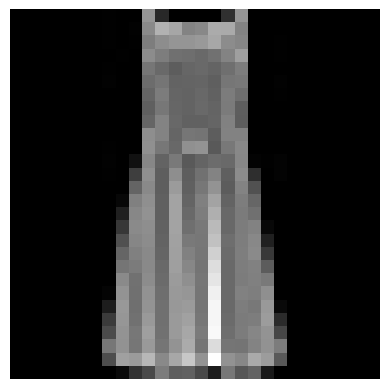}
  \caption{Dress, FashionMNIST}
  \label{fig:subfig2}
 \end{subfigure}
 \\
 \begin{subfigure}{0.45\textwidth}
  \includegraphics[width=\textwidth]{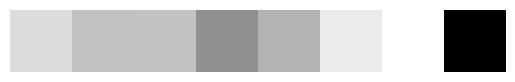}
  \caption{The same top, Q-FashionMNIST}
  \label{fig:subfig3}
 \end{subfigure}
 \hfill
 \begin{subfigure}{0.45\textwidth}
  \includegraphics[width=\textwidth]{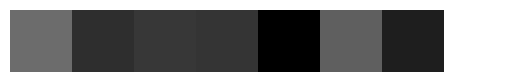}
  \caption{The same dress, Q-FashionMNIST}
  \label{fig:subfig4}
 \end{subfigure}
 \caption{Representative examples of the FashionMNIST and reduced Q-FashionMNIST dataset. Note how strongly the original images are down-sampled from 784 pixels and 10 classes to 8 pixels and 2 classes for Q-FashionMNIST. Very little of the original structure is left over and this dataset can nearly be linearly separated.  }
 \label{fig:mainfig}
\end{figure}

In the main text, we, therefore, introduced the acronym Q-FashionMINST to identify our variant of what is often called FashionMNIST in the QNN literature.  Our construction is very similar to the dataset used by Tensorflow Quantum in \cite{broughton_tensorflow_2021} and in many QNN papers. This dataset is constructed by selecting two classes from FashionMNIST (tops and dresses), and reducing the data dimensionality by selecting the top 8 components via principal component analysis on the entire dataset. We use 250 training datapoints and 50 test datapoints which is typical in the literature, but much smaller than the full FashionMNIST set.   While the new dataset is not entirely linearly separable, a perceptron can achieve  92\% accuracy on Q-FashionMNIST with no need for feature engineering.

As can be seen in \cref{fig:mainfig}, the down-sampled data with just 8 pixels is so much simpler than the original data that any link to the fashion-based classes in  FashionMNIST is at best tenuous. Even our modified name could still be potentially deceptive, given that it continues to suggest a more complex dataset than MNIST. 
More generally, should one downsample a series of image datasets in the same way that Q-FasionMINST is, it is not clear how much of any complexity hierarchy would remain in the downsampled datasets.

There is in principle nothing wrong with using simplified classical datasets to study QNNs. Our complaint here is that the nature of these datasets must be more clearly communicated to the reader (see also the next section where we provide an example for Fisher's Iris dataset).  Many QNN papers fall down at this hurdle. 
This particular example is just one of many problems with the descriptions of datasets that we have found upon reading the QNN literature. 

\subsection{Example of over-simplified classical architectures:}
\label{app:IBMpowerpaper}

We also found many cases where quantum advantage is claimed based on comparisons to DNNs that are enormously simplified, even when to simple off-the-shelf DNNs, let alone state-of-the-art architectures. 

To illustrate what we mean, we will treat in more detail a 2021 paper entitled \textit{``The power of quantum neural networks''} by \citet{abbas2021power}. We chose it because it is influential by the metric of citation count, with over 600 citations on Google Scholar in less than 3 years after publication (checked May 2024). The authors define an effective dimension measure based on the Hessian of the loss function (which in this case reduces to the Fisher information matrix) and use this to predict generalisation performance. They argue that the spectrum of the Hessian of some QNNs is more optimal for learning than some FCNs. Their abstract concludes with the claim that: \textit{We demonstrate numerically that a class of quantum neural networks can achieve a considerably better effective dimension than comparable feedforward networks and train faster, suggesting an advantage for quantum machine learning, which we verify on real quantum hardware.}
We will leave untouched the question of how useful this metric is, given the extensive and sophisticated literature about such measures based on Hessians and Fisher information in the classical machine-learning literature. Instead, we focus on the numerical comparisons of a QNN with classical DNNs.

\begin{figure}[h]
 \centering
 \begin{subfigure}[b]{0.35\textwidth}
 \centering
 \begin{tikzpicture}[node distance=0.75cm]
  % Architecture 1
  \foreach \i in {1,...,4} {
  \node[circle,draw] (x\i) at (0,1.5-\i) {};
  }
  \node[circle,draw] (h1) at (1,-1) {};
  \node[circle,draw] (h2) at (2,-1) {};
  \node[circle,draw] (h3) at (3,-1) {};
  \foreach \i in {1,...,2} {
  \node[circle,draw] (y\i) at (4,0.5-\i) {};
  }

  \foreach \i in {1,...,4} {
  \draw (x\i) -- (h1);
  }
  \draw (h1) -- (h2);
  \draw (h2) -- (h3);
  \foreach \i in {1,...,2} {
  \draw (h3) -- (y\i);
  }
 \end{tikzpicture}
 \caption{Classical FCN from \citet{abbas2021power}.}\label{fig:arch:abbas}
 \begin{tikzpicture}[node distance=0.75cm]
  % Architecture 2
  \foreach \i in {1,...,4} {
  \node[circle,draw] (a\i) at (0,1.5-\i) {};
  }
  \foreach \i in {1,...,2} {
  \node[circle,draw] (b\i) at (2,0.5-\i) {};
  }

  \foreach \i in {1,...,4} {
  \foreach \j in {1,...,2} {
   \draw (a\i) -- (b\j);
  }
  }
 \end{tikzpicture}
 \caption{Our perceptron architecture.}\label{fig:arch:perceptron}
 \label{fig:arch2}
 \end{subfigure}
 \begin{subfigure}[b]{0.6\textwidth}
 \centering
  \includegraphics[width=1\textwidth]{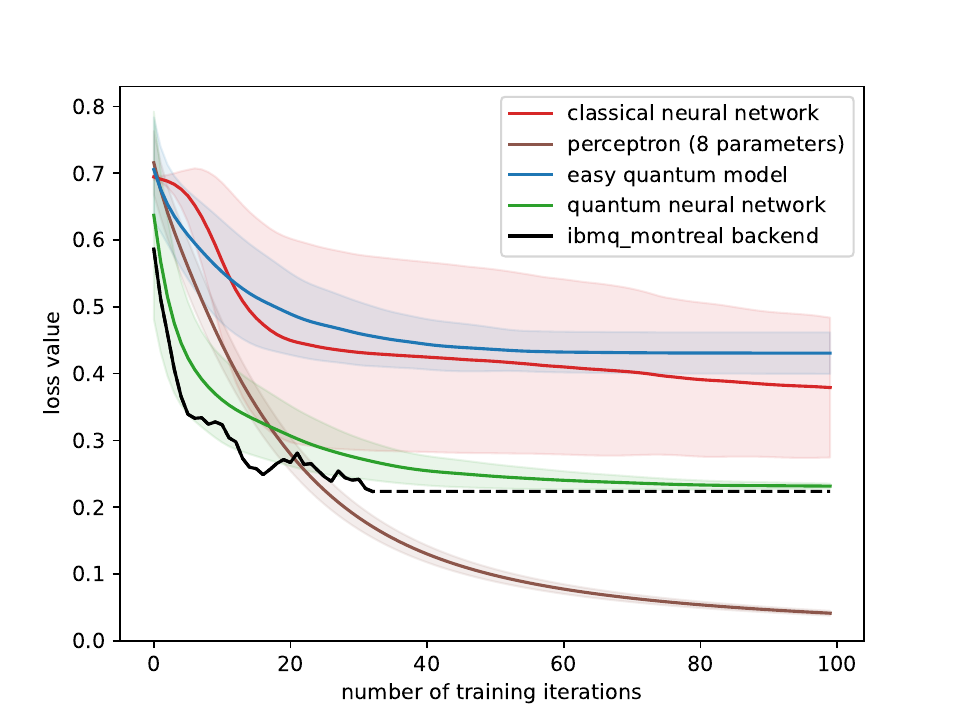}
  \caption{Data}\label{fig:abbas}
 \end{subfigure}
 
 \caption{\textbf{Misleading claims of quantum advantage in \citet{abbas2021power}.} (a) shows a schematic of the FCN architecture used in \citep{abbas2021power} with layer sizes $[4,1,1,1,2]$; it is called the classical neural network in~\cite{abbas2021power}. (b) Our perceptron (layer sizes $[4,2]$). Each architecture has 8 parameters. (c) Equivalent data to Fig 3b from \citet{abbas2021power} shows that their 8-parameter QNN (called quantum neural network)  outperforms their classical neural network (from (a)). However, rather than showing quantum advantage, their FCN's poor performance is caused by the use of single node bottlenecks. We added an 8-parameter perceptron (see (b))  that achieves close to 0 loss, significantly lower than the QNN.  The fact that the perceptron can fully classify this data is not surprising given that the simplified Iris data-set~\cite{fisher1936use} they use is almost trivially linearly separable (see Fig 8.), while the full dataset is not. Note that we have changed our y-axis to start at $0.0$ instead of $0.2$ as done in~\citep{abbas2021power}. 
}
 \label{app:fig:abbas_experiment}
\end{figure}

The paper includes numerical experiments where they compare their QNN to a classical feedforward network (FCN) with the same number of parameters on the first two classes of Ronald Fisher's famous 1936 Iris flower dataset \citep{fisher1936use} collected by Edgar Anderson.   In their Fig 3a,  they compare the Fisher-information (the same Fisher who published the Iris flower dataset)  based effective dimension measure, finding that the QNN has significantly higher capacity than the comparable FCN, one of their key results.  
In Fig 3b they provide another set of key results, namely that their QNN trains faster and to significantly lower loss than a comparable classical DNN, implying a potential quantum advantage.

The authors don't provide the actual DNN architecture they use, but do say that: \textit{We consider all possible topologies with full connectivity for a fixed number of trainable parameters. Networks with and without biases and different activation functions are explored.}  At first sight, it does seem impressive that their QNN has higher capacity and performs so much better than what they claim is the best FCN of the same size.

We could not find the DNN described with enough specificity to understand these surprising results.  However, by searching through their codebase\footnote{\url{https://doi.org/10.5281/zenodo.4732830}}, we found the architecture used in their main Fig 3b. They used leaky ReLU activations and layer sizes $[4,1,1,1,2]$ with 8 parameters, the same number as the QNN. We depict it schematically in \cref{fig:arch:abbas} and it is immediately clear that this is an eccentric and sub-optimal architecture with one-node intermediate layers that act as extreme bottlenecks.  Not surprisingly, a very simple 8-parameter perceptron with layer sizes $[4,2]$ (see \cref{fig:arch:perceptron}) already does much better than both their classical DNN and their QNN, as can be seen in \cref{fig:abbas}. 

Another puzzling aspect of this paper concerns the choice of dataset. The Iris flower dataset \citep{fisher1936use} is famous in the history of statistical inference for its pioneering use of features. 
We verified by checking in the code that what they call the first two classes are equivalent to the red and green points in  \cref{app:fig:abbas_experiment_data}.  These are trivially separable making the learning task on a QNN or a DNN more or less meaningless.  This is an extreme example of our problem 1 above, not clearly signalling the nature of an oversimplified dataset to the reader.

\begin{figure}[h]

 \centering
  \includegraphics[width=0.65\textwidth]{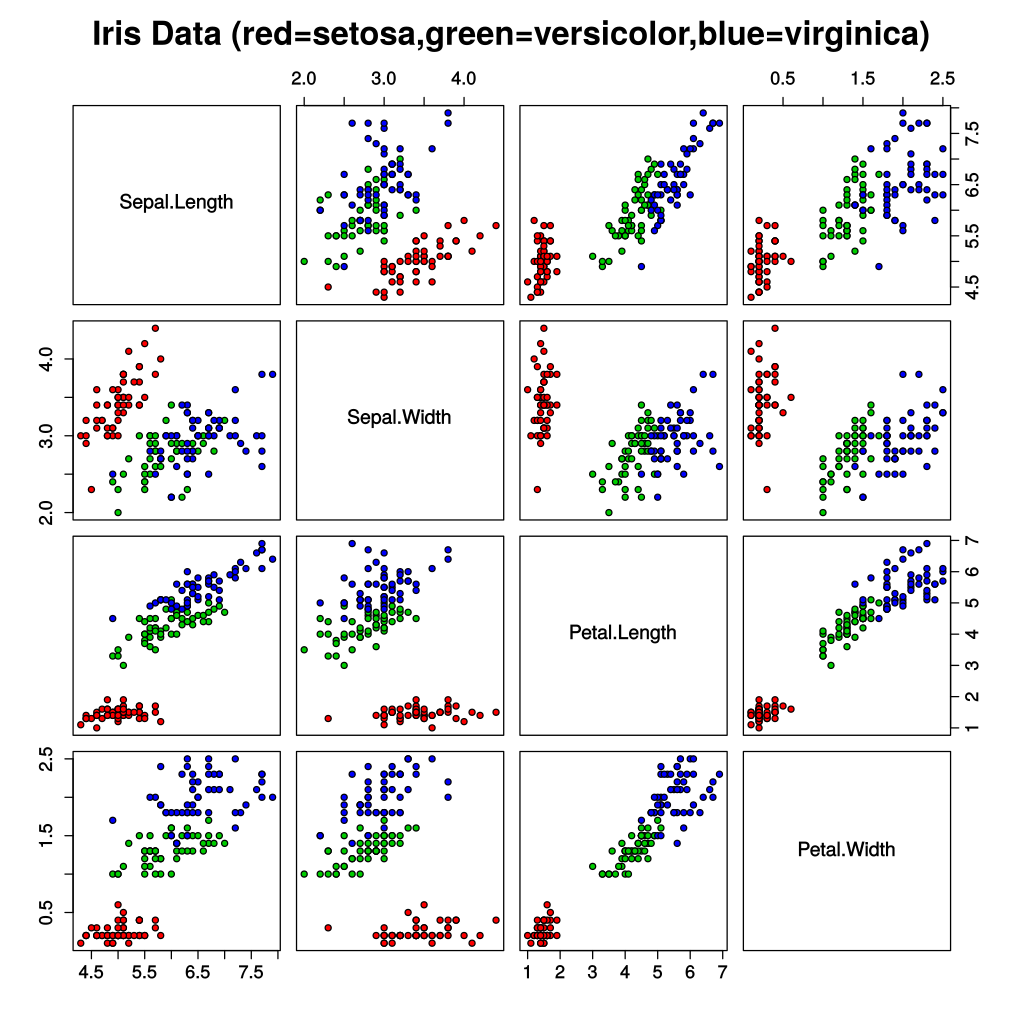}
 \caption{\textbf{The dataset used  in Fig 3 of \citet{abbas2021power} is trivially separable.}  The extreme simplicity of the learning task  can be inferred from this image  of  Fisher's famous Iris dataset~\cite{fisher1936use} (made by Wikipedia contributor Nicoguaro, used under CC BY 4.0 from the Wikipedia page~\cite{irisflowers}).  The authors of \citep{abbas2021power} only used the first two classes (setosa and versicolor) in the experiments shown in  \cref{app:fig:abbas_experiment}, and  should have clearly signalled the unusual nature of this learning task.  
}
 \label{app:fig:abbas_experiment_data}
\end{figure}

This paper and its persistent popularity present some puzzles. Given the simplified dataset, which is very well known (e.g.\ an image can be found on Wikipedia~\cite{irisflowers}), it should be obvious that even a tiny classical DNN should do much better than the DNN performance they presented. Had the authors actually described their classical network, even to themselves, it should also have been immediately obvious that their QNN did better than their DNN for trivial reasons.  
There are further problems with the results in this paper that we will not bother to go into here. This particular example may be fairly egregious,  but it is by no means alone in presenting enormous gaps between rhetoric and reality.    Moreover, the perplexing fact that it continues to be so popular helps illustrate the importance of fixing some of the broader cultural issues critiqued in~\citet{bowles2024better}, and in this Appendix.

Finally, we note that even when papers in the subfield of QNNs are technically correct, there is a tendency to employ overly enthusiastic rhetoric regarding potential quantum advantage which can obscure the clarity of key results. While the prospect of merging quantum computing with artificial intelligence is undoubtedly thrilling,  a shift back to the more measured and occasionally mundane practices of traditional scientific research may in some cases be needed.    

\FloatBarrier

\section{Definitions}
\subsection{The Boolean system}\label{sec:Boolean_system}

\begin{definition}
 The n-dimensional Boolean dataset is completely described by a function
 \begin{equation*}
  f:\{0,1\}^n\rightarrow \{0,1\}.
 \end{equation*}
 We can construct different embeddings of $\{0,1\}^n$ using a map 
 \begin{align*}
  \Phi:\{0,1\}^n &\mapsto X,\\
    i&\mapsto x_i
 \end{align*}
 which maps each binary number $i$ to some input space $x_i\in X$.
\end{definition}

\begin{definition}[The standard representation of $\{0,1\}^n$]\label{def:enc:classical}
 The most straightforward encoding $\Phi$ for $\{0,1\}^n$ is to encode it in $\mathbb{R}^n$ as a binary vector.
 For example, $\phi(0010)=x_{0010}=[0,0,1,0]$. This encodes the data on the Boolean hypercube with one corner at the origin.
 We will use this encoding so often that we will use $\Phi(\{0,1\}^n)$ and $\{0,1\}^n$ interchangeably.

 We will also consider a normalised version of $\{0,1\}^n$, where each element $\norm{x}=1$, such that $\phi(1010)=[1,0,1,0]/\sqrt{2}$. The origin cannot be normalised and is excluded.

 We also consider $\{-1,1\}^n$ -- the centered Boolean hypercube with vectors of $+1$ and $-1$, and its normalised version.
\end{definition}

\begin{definition}[The string representation of $\{0,1\}^n$]
 For some $\{0,1\}^n(f)$
 \begin{equation*}
  f:\{0,1\}^n\rightarrow \{0,1\}.
 \end{equation*}
 $f$ can be represented as a binary string of length $2^n$, by ordering inputs ascending by their binary value, and concatenating their outputs $y$.
 For example, when $n=2$, for the dataset $\{(00,0),(01,1),(10,1),(11,0)\}$, the the string representation would be $f=0110$.
\end{definition}

\textbf{The LZ Complexity} $K(f)$ of $f$ is the complexity of its  binary string representation. Here estimate this complexity using a variation on the famous Lempel-Ziv (LZ) algorithm introduced in 1976 by Lempel and Ziv~\cite{lempel1976complexity} 
to calculate the complexity of binary strings. This compression-based measure has been a popular approximator for Kolmogorov complexity and it is thought to work better than other lossless compression-based measures for shorter strings   For more details on its properties, see \citep{dingle_inputoutput_2018,valle2018deep}. The representation above has the ordering property built in. 
The complexity is therefore $K(f_n)\leq K_{ordering}+LZ(f)$, where $K_{ordering}$ is the complexity of the process ordering the outputs. Using a much more complex ordering would change the values of LZ complexity measured and fundamentally not capture the complexity of the dataset, as $K_{ordering}$ would be much larger.

\textbf{Training set $S_m$.} A training set with $m$ elements, for a dataset $g:\{0,1\}^n\mapsto \{0,1\}$, $S_m = \{(x_i,g(x_i))\}_{i\in s(g)_m}$ where $s(g)_m$ is an index set of size $m$. Normally, the $g$ is left implicit, so $S(g)_m$ is denoted $S_m$.
There are ${2^n \choose m}$ unique training sets of size $m$ for each function $g$. The test set is the complement of the training set.

\subsection{Quantum Encoding Methods}
\label{sec:encoding_methods}
There are multiple ways to encode data into a quantum circuit. Here are the four we use in our work.

\begin{definition}[Basis encoding]\label{def:enc:basis}
This maps the input data (in the form of a binary string) into the computational basis of a quantum state. For vector $x=[b_1, \dots b_n]$ where $b_i\in \{0, 1\}$, basis encoding $\phi_B$ encodes the data as follows
$$
\phi_B(x)=\ket{b_{n-1} b_{n-2} \cdots b_0}
$$
For example, basis encoding transforms the input data $x = (1,0)$ into $\ket{10}$. This requires $O(n)$ qubits. 
\end{definition}

\begin{definition}[Amplitude encoding]\label{def:enc:amp}
 This encodes the data into the amplitudes of the quantum state, such that amplitude encoding $\phi_A$ encodes $x=[x_1, \dots, x_N]$ via
 $$
 \phi_A(x)=\frac{1}{\norm{x}} \sum_{i=1}^{N} x_i \ket{i}
 $$
 where $\ket{i}$ is the $i^{th}$ computational basis state. For example, if our vector $x = (1,0,1,0)$ then the norm is $\sqrt{1^2 + 0^2 + (1)^2+ (0)^2} = \sqrt{2}$, so our encoded quantum state becomes $\frac{1}{\sqrt{2}} \ket{00} + \frac{1}{\sqrt{2}}\ket{10}$. Extra computational basis states can be encoded as zero. This requires $O(\log(n))$ qubits. \\
 \textbf{01 encoding} For the $n$-dimensional Boolean dataset, each vector $x\in[0,1]^n$ (followed by correct normalisation) -- a corner of the Boolean hypercube. Unless otherwise specified, we will refer to this as amplitude encoding. Note that there is no way for $x=[0,\dots,0]$ to be encoded, so that datapoint is dropped.\\
 \textbf{+1-1 encoding} For the $n$-dimensional Boolean dataset, each vector $x\in[-1,1]^n$ (followed by correct normalisation) -- a corner of the centered Boolean hypercube. This will always be referred to as +1-1 encoding, never just as amplitude encoding.
\end{definition}

\begin{definition}[ZZ encoding]\label{def:enc:zz}
 The ZZ Feature Map is based on \cite{havlicek_supervised_2019} and encodes the input data into the angles of the unitary matrices.
 $$
 \phi_{ZZ}(x) = U_{\Phi(x)} H^{\otimes n} U_{\Phi(x)} H^{\otimes n}
 $$
 where $U_{\phi(x)}=\exp \left(i \sum_{S \subseteq[n]} \phi_S(x) \prod_{i \in S} Z_i\right)$. H is the Hadamard gate, Z is the Pauli-Z gate, $S$ is the dataset, and $n$ is the number of qubits. $\phi_i(x)=x_i, \phi_{\{i, j\}}(x)=\left(\pi-x_0\right)\left(\pi-x_1\right)$.
 
 This feature map can be seen more clearly in its circuit diagram for two qubits in Figure \ref{fig:zz_feature_map}.
 
 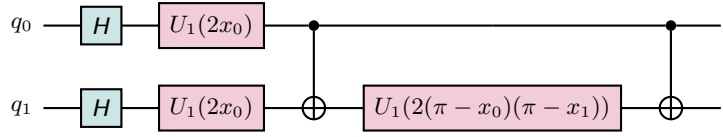
\begin{figure}[H]
  \centering
  \adjustbox{scale=1}{%
   \begin{tikzcd}
   \lstick{$q_0$} & \circuitH & \gate{U_1(2x_0)} & \ctrl{1} & \qw & \ctrl{1} & \qw \\
   \lstick{$q_1$} & \circuitH & \gate{U_1(2x_0)} & \targ{} & \gate{U_1(2(\pi-x_0)(\pi-x_1))} & \targ{} & \qw \\
  \end{tikzcd}
  }
 \caption{ZZ Feature Map for two qubits}
 \label{fig:zz_feature_map}
 \end{figure}
\end{definition}

\begin{definition}[Random Transform (RT) encoding]\label{def:enc:relu}
This encodes the data into the amplitudes of the quantum state, after transforming through a single hidden layer of a neural network. Let $x=[x_1, \dots, x_n]$
$$
\phi_{R}(x) = \frac{1}{\norm{\sigma(Wx+b)}} \sum_{i=1}^{N}\sigma(Wx+b)_i \ket{i}
$$
where $\sigma(Wx+b)_i$ is the $i$'th output of a single hidden layer neural network with input dimension $n$ and output dimension $N$, weights $W$, biases $b$ and non-linearity $\sigma$. In our experiments, the RT encoding was created by a fixed random transform (parameters in $W$ and $b$ sampled i.i.d.) and using the ReLU nonlinearity.

\end{definition}

\subsection{Boolean target function details \& training the \tpp}\label{app:targ_train_details}

The 100 functions are generated as described below
\begin{enumerate}
 \item The parity function
 \item We then chose functions with a fixed number $t$ of 1s (with elements randomly shuffled), with $t=0,4,8,\dots, 128$ (33 in total). These generate random functions with fixed class balance.
 \item We picked 10 functions with $p$-fold symmetry where $p=2,4,8,16,\dots$, removing duplicates (54 total). This is done by generating a string of length $2^n/p$ and repeating it $p$ times.
 \item 12 random uniformly chosen functions to make it up to 100
\end{enumerate}
See the Supplementary Information for the exact functions.

Training the \tpp version of the QNN requires first transforming $\ket{x}$ into the generalised form of $x\ostar x$ as described in \cref{app:perceptron_proof}. The \tpp is then trained on $x\ostar x$ with mse loss and SGD with batch size half of the training set size. At the end of training, weights can be scaled to $w/\norm{a(w)}_2$ (to guarantee an equivalent QNN model), as this does not affect the classification.

\section{Custom parity embedding}\label{app:zz}

A QNN acting on the Boolean data with ZZ encoding is biased towards the parity function.
It is worth demonstrating how an FCN can be biased just as strongly towards parity. Consider a map from an $n$ dimensional Boolean vector $x$ to a one-hot vector $x'$ of length $n+1$, where the 1 is in the index equalling the sum of $x$ (where the first element denotes a sum of $0$, so for $x=[0,1,1]$, $x'=[0,0,1,0]$ and $x=[0,0,0]$, $x'=[1,0,0,0]$). Concatenate $x$ with $x'$, and call the resulting encoding for $x$ the parity encoding. We train a standard 1 hidden layer FCN with ReLU activations (using the parity function as the target) on this parity encoding and compare its performance to the standard encoding in \cref{fig:parity_encoding}.

\begin{figure}[H]
 \centering
 \includegraphics[width=0.35\textwidth]{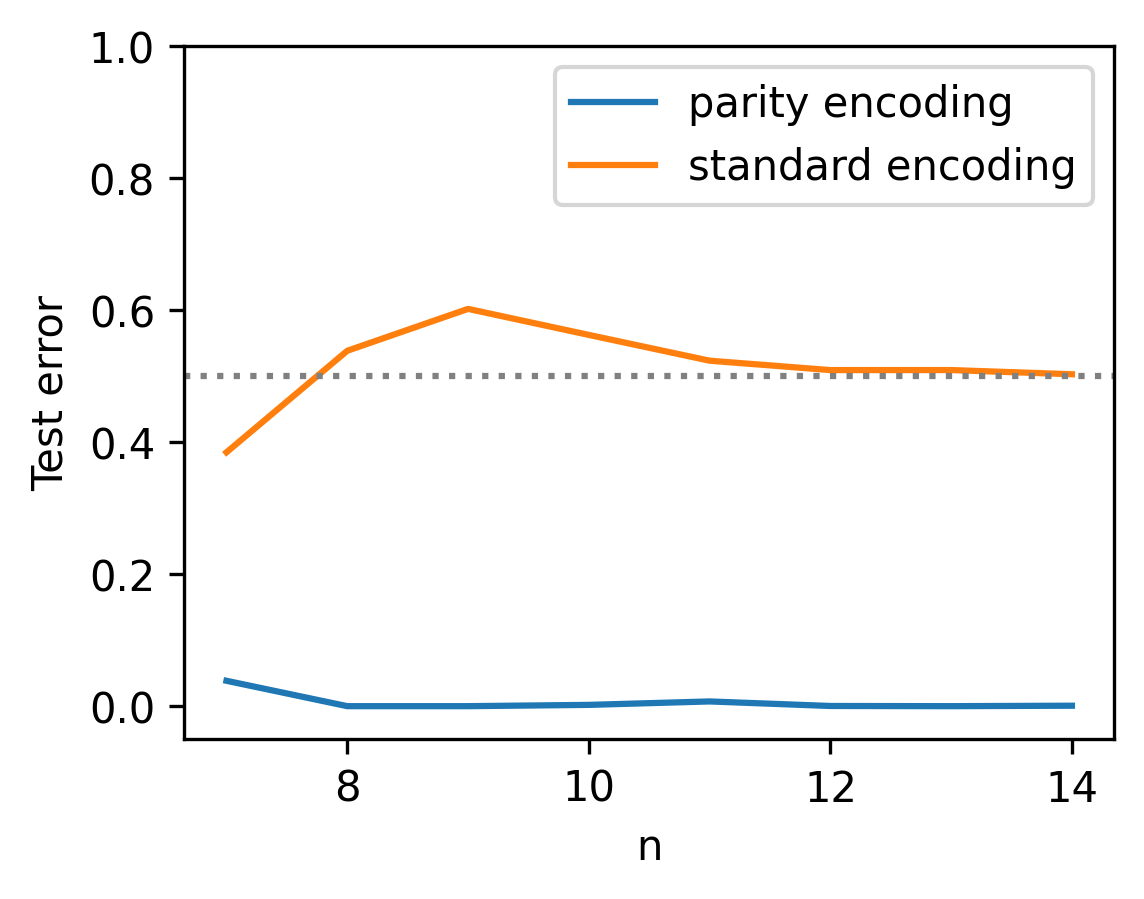}
 \caption{Test error of a 1 hidden layer FCN on the parity function. The FCN is trained with batch size 32 and randomly selected training data 
 with $\min(0.75\times 2^n, 512)$ examples.
 The parity encoding (\cref{app:zz}) allows the classical neural network to be strongly biased towards parity, whereas the standard encoding performs poorly.}
 \label{fig:parity_encoding}
\end{figure}

\section{Proof of equivalence of a QNN to a perceptron and its consequences for expressivity on Boolean data}\label{app:perceptron_proof}

\begin{lemma}[Parameter counting in QNNs]\label{app:lemma:num_parameters}
 A QNN acting on an input Hilbert space $\mathbb{C}^{2^n}$ has ${(2^n)}^2$ independent parameters
\end{lemma}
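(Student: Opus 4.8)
The plan is to show that the QNN output $f(x;\theta)$ depends on the parametrised unitary only through a single Hermitian matrix acting on the $N=2^n$-dimensional data space, and then to count the degrees of freedom of that matrix. First I would fold the readout qubit into an effective observable. Setting $M(\theta)=U^\dagger(\theta)\,Z\,U(\theta)$ on $\mathbb{C}^{2^{n+1}}=\mathbb{C}^2\otimes\mathbb{C}^{N}$ (readout $\otimes$ data) and writing $M$ in block form with respect to the readout qubit, the fixed input $\ket{0}$ on that qubit extracts the top-left block
\[
A(\theta)=(\bra{0}\otimes I)\,M(\theta)\,(\ket{0}\otimes I),
\]
an $N\times N$ Hermitian matrix, so that $f(x;\theta)=\bra{x}A(\theta)\ket{x}$. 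Because $A$ is Hermitian, the complex quadratic form $x\mapsto\bra{x}A\ket{x}$ determines $A$ uniquely by polarisation; hence the function the QNN computes is in bijection with $A$, and counting independent parameters reduces to counting the dimension of the attainable set of matrices $A$.

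Second, I would note that a Hermitian $N\times N$ matrix carries exactly $N^2$ independent real parameters ($N$ on the diagonal and $N(N-1)$ from the complex upper triangle). It therefore remains to check that all of these are actually realisable under the constraint that $A$ is a corner block of some $M=U^\dagger Z U$, i.e.\ that the compression map $U\mapsto A$ is onto a full-dimensional subset of the space of Hermitian matrices.

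Third, which is the crux of the argument, I would settle realisability by a dilation. Since $Z^2=I$, every $M$ is a reflection ($M=M^\dagger$, $M^2=I$) whose $\pm1$ eigenspaces each have dimension $N$, matching the spectrum of $Z=\sigma_Z\otimes I\otimes\cdots\otimes I$. Compressing a norm-one operator forces $\norm{A}\le 1$; conversely, for any Hermitian $A$ with $\norm{A}\le 1$ the explicit block matrix
\[
M=\begin{pmatrix} A & (I-A^2)^{1/2} \\ (I-A^2)^{1/2} & -A \end{pmatrix}
\]
is Hermitian, satisfies $M^2=I$ (using that $(I-A^2)^{1/2}$ commutes with $A$), and has $\operatorname{tr}(M)=0$, so its $\pm1$ eigenspaces have equal dimension $N$. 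Thus $M$ is unitarily equivalent to $Z$, giving $M=U^\dagger Z U$ for some unitary $U$ that realises the desired $A$. Since $\{A=A^\dagger:\norm{A}<1\}$ is open in the $N^2$-dimensional real vector space of Hermitian matrices, the attainable set is full-dimensional, which yields the claimed $N^2=(2^n)^2$ independent parameters.

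The main obstacle is precisely this last step: a priori $U(\theta)$ ranges over the $(2^{n+1})^2$-parameter unitary group, and one must rule out that the passage to the effective observable collapses the count below $N^2$. The dilation formula resolves this, but the step requiring care is verifying that the constructed $M$ has the correct signature (via $\operatorname{tr}(M)=0$) so that it is genuinely conjugate to $Z$ rather than to some other reflection; without the matching multiplicities the realisation by a legitimate $U^\dagger Z U$ would fail.
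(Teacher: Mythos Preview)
Your proof is correct and shares the paper's core reduction: both identify the effective observable on the data space as the Hermitian top-left block $A$ of $U^\dagger Z U$, then count the $N^2$ real degrees of freedom of a Hermitian matrix. The paper reaches $A$ by an explicit block computation, writing $U=\bigl(\begin{smallmatrix}a&b\\c&d\end{smallmatrix}\bigr)$ and using the unitarity relation $a^\dagger a+c^\dagger c=I$ to obtain $A=2a^\dagger a-I$, whereas you phrase it as compressing the conjugated observable and invoke polarisation to justify that $A$ is determined by the quadratic form.

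Where you genuinely go beyond the paper's proof of this lemma is the realisability step. The paper simply asserts the $N^2$ count once $A$ is seen to be Hermitian, leaving implicit why the constraint $\lVert A\rVert\le 1$ does not cost any parameters; the surjectivity onto this ball is only established afterwards (and indirectly) via a separate singular-value embedding lemma for the corner block $a$. Your direct dilation of the reflection $M$ from $A$, together with the trace-zero check to match the signature of $Z$, is a cleaner and self-contained way to certify that the image is full-dimensional. So the approach is essentially the same, but your argument is more complete on the one point the paper glosses over.
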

\begin{proof}
 Writing out the action of the QNN explicitly,
 \begin{align}\label{eq:explicit_QNN(f)}
  \ket{0}\ket{x}=(x, 0^{(2^n)})^T\quad
  Z = 
  \begin{bmatrix}
   I^{(2^n)} & 0 \\
   0 & -I^{(2^n)}
  \end{bmatrix}\quad
  U = 
  \begin{bmatrix}
   a & b \\
   c & d
  \end{bmatrix}.
 \end{align}
 where $a,b,c,d$ are all matrices in $\mathbb{C}^{2^n\times 2^n}$.

 We can therefore simplify
 \begin{align}
  x^\dag U^\dag Z U x = x^\dag (a^\dag a - c^\dag c) x = x^\dag (2a^\dag a - I) x,
 \end{align}
 using the fact that $a^\dag a + c^\dag c = I^{N}$ as $U$ is unitary.
 Given that $2a^\dag a - I$ is Hermitian, the model has ${(2^n)}^2$ independent parameters.
\end{proof}

\begin{lemma}\label{lemma:arbitrary_matrix}
 Any $k\times k$ matrix $M$ can be embedded into a $2k\times 2k$ unitary matrix $U$ if and only if the singular values of $M$ are upper bounded by $1$.
\end{lemma}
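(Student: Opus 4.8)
First I would pin down the meaning of ``embedded into'': I read it as requiring that $M$ occupy one $k\times k$ block of $U$, and since applying a row permutation and a column permutation to a unitary matrix leaves it unitary, there is no loss of generality in taking $M$ to be the leading $k\times k$ principal block, i.e. $U=\begin{bmatrix} M & B \\ C & D\end{bmatrix}$ with $B,C,D\in\mathbb{C}^{k\times k}$. This is exactly the role played by the block $a$ in \cref{app:lemma:num_parameters}. With this convention the two implications are handled separately.

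For necessity I would use only column-orthonormality of $U$, namely $U^\dagger U = I_{2k}$. Reading off the top-left $k\times k$ block of this identity gives
\begin{equation}
 M^\dagger M + C^\dagger C = I_k .
\end{equation}
Since $C^\dagger C$ is positive semidefinite, $I_k - M^\dagger M = C^\dagger C$ is positive semidefinite, so every eigenvalue of $M^\dagger M$ is at most $1$; equivalently, every singular value of $M$ is at most $1$.

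For sufficiency I would build $U$ explicitly from the singular value decomposition, which reduces everything to the scalar case. Write $M=P\Sigma Q^\dagger$ with $P,Q\in\mathbb{C}^{k\times k}$ unitary and $\Sigma=\mathrm{diag}(\sigma_1,\dots,\sigma_k)$, $0\le\sigma_i\le 1$. Because each $\sigma_i\le 1$, the real diagonal matrix $\Sigma'=\mathrm{diag}\!\big(\sqrt{1-\sigma_1^2},\dots,\sqrt{1-\sigma_k^2}\big)$ is well defined, and the $2k\times 2k$ matrix
\begin{equation}
 V=\begin{bmatrix} \Sigma & \Sigma' \\ \Sigma' & -\Sigma\end{bmatrix}
\end{equation}
is unitary: $\Sigma$ and $\Sigma'$ are diagonal and hence commute, so the off-diagonal blocks of $V^\dagger V$ vanish and the diagonal blocks equal $\Sigma^2+\Sigma'^2=I_k$. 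Conjugating by block-diagonal unitaries then yields
\begin{equation}
 U=\begin{bmatrix} P & 0 \\ 0 & P\end{bmatrix} V \begin{bmatrix} Q^\dagger & 0 \\ 0 & Q^\dagger\end{bmatrix},
\end{equation}
which is a product of three $2k\times 2k$ unitaries, hence unitary, and whose top-left block is $P\Sigma Q^\dagger = M$, as required.

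I do not anticipate a serious obstacle; the only step needing care is the sufficiency construction, and the device that keeps it painless is the SVD reduction, so that completing each singular value into the orthogonal $2\times 2$ block $\left[\begin{smallmatrix}\sigma_i & \sqrt{1-\sigma_i^2}\\ \sqrt{1-\sigma_i^2} & -\sigma_i\end{smallmatrix}\right]$ makes unitarity transparent. A coordinate-free alternative uses the defect operators $D_M=(I-M^\dagger M)^{1/2}$ and $D_{M^\dagger}=(I-MM^\dagger)^{1/2}$ and sets $U=\begin{bmatrix} M & D_{M^\dagger} \\ D_M & -M^\dagger\end{bmatrix}$; this is elegant, but checking unitarity requires the intertwining identity $MD_M=D_{M^\dagger}M$ (from $M(M^\dagger M)=(MM^\dagger)M$ together with a polynomial-approximation argument for the square root), which is the one genuinely fiddly point I would rather sidestep by taking the SVD route.
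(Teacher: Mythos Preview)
Your proof is correct and follows essentially the same approach as the paper. The sufficiency construction is identical: the paper writes $U=\begin{bmatrix} M & R\sqrt{1-D^2}\,V \\ R\sqrt{1-D^2}\,V & -M\end{bmatrix}$ with $M=RDV$, which is exactly your factored form $\begin{bmatrix}P&0\\0&P\end{bmatrix}\begin{bmatrix}\Sigma&\Sigma'\\\Sigma'&-\Sigma\end{bmatrix}\begin{bmatrix}Q^\dagger&0\\0&Q^\dagger\end{bmatrix}$ multiplied out. For necessity the paper argues by contradiction using norm preservation on a top singular vector, whereas you read off $M^\dagger M + C^\dagger C = I_k$ from $U^\dagger U=I$; these are the same fact stated two ways, and your phrasing is arguably cleaner.
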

\begin{proof}
 Perform singular value decomposition on $M=RDV$ where $R,V$ are unitary and $V$ diagonal.
 $$
 U=\begin{bmatrix}
  M & R\sqrt{1-D^2}V\\
  R\sqrt{1-D^2}V & -M\\
 \end{bmatrix}
 $$
 The square roots are taken elementwise, and only return real numbers when the singular values are upper bounded by 1. 
 To show the converse, imagine $M$ has singular value $\lambda>1$ and corresponding normalised vector $\ket{\lambda}$ 
 \begin{align*}
  U=
  \begin{bmatrix}
  M & A\\
  B & C\\
  \end{bmatrix} \quad\quad\quad\quad
 U\begin{bmatrix}
  \ket{\lambda} \\
  0 \\
  \end{bmatrix}=
  \begin{bmatrix}
  M\ket{\lambda} \\
  B\ket{\lambda} \\
  \end{bmatrix}
 \end{align*}
 The output of this state must have a norm greater than the norm of $M\ket{\lambda}$, which if $\lambda>1$ must be greater than 1.\footnote{\texttt{https://quantumcomputing.stackexchange.com/questions/5167/when-can-a-matrix-be-extended-into-a-unitary/}}.
\end{proof}

\begin{restatable}{lemma}{MainThm1} \label{proof:tpp=qnn}
    \textnormal{(Perceptrons on $x\ostar x$ can express any pre-thresholded function a QNN can)}
    Consider a QNN acting on inputs $x\in \mathbb{C}^{2^n}$, expressing the function $f(x;\theta)=x^\dag U(\theta)^\dag Z U(\theta) x$, with parameters $\theta$. Consider a perceptron with weights $w\in R^{(2^n)^2}$ expressing the function $g(x;w)=w\cdot (x\ostar x)$, where
 \begin{align}
 (x \ostar x)_{N(i-1)+i} &= x_i x_i^* \\
 (x \ostar x)_{N(i-1)+j} &= Re(x_i x_j^*) + Im(x_i x_j^*) \;\;\; \text{for } i \neq j.
 \end{align}
 For any set of parameters $\theta$, there exists some weight $w$ where $f(x;\theta)=g(x;w)$. Note that QNNs have a further restriction on input norm, $\norm{x}=1$.
 We say pre-thresholded function to distinguish the raw output of the QNN or \tpp, before thresholding, from the classification (which we normally just call function).
\end{restatable}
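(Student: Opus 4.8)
The plan is to reduce the QNN output to a fixed Hermitian quadratic form and then match it term-by-term against the \tpp feature map. By \cref{app:lemma:num_parameters}, writing $U(\theta)$ in block form and using unitarity already gives
\[
    f(x;\theta) = x^\dagger U(\theta)^\dagger Z U(\theta)\, x = x^\dagger M x, \qquad M := 2a^\dagger a - I,
\]
where $M$ is a fixed $N\times N$ Hermitian matrix with $N=2^n$. It therefore suffices to show that for \emph{every} Hermitian $M$ there exists a real vector $w\in\mathbb{R}^{N^2}$ with $x^\dagger M x = w\cdot(x\ostar x)$ for all $x$. Since both sides are homogeneous of degree two in $(x,x^*)$, establishing this pointwise identity automatically makes it hold for all $x$, so the QNN normalisation $\norm{x}=1$ plays no role and is just a restriction of the common domain.

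First I would split the quadratic form into its diagonal and off-diagonal parts,
\[
    x^\dagger M x = \sum_i M_{ii}\,\lvert x_i\rvert^2 \;+\; \sum_{i<j}\bigl(M_{ij}\,x_i^* x_j + M_{ji}\,x_j^* x_i\bigr).
\]
Hermiticity makes each $M_{ii}$ real, and the diagonal feature is $(x\ostar x)_{N(i-1)+i} = x_i x_i^* = \lvert x_i\rvert^2$, so I simply set $w_{N(i-1)+i} = M_{ii}\in\mathbb{R}$. The off-diagonal terms are the crux. Using $M_{ji}=M_{ij}^*$, each pair collapses to $M_{ij}x_i^*x_j + M_{ij}^*(x_i^*x_j)^* = 2\,\mathrm{Re}(M_{ij}\,x_i^* x_j)$. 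Writing $M_{ij}=p+iq$ with $p,q$ real, and introducing $u:=\mathrm{Re}(x_i x_j^*)$, $v:=\mathrm{Im}(x_i x_j^*)$ so that $x_i^* x_j = \overline{x_i x_j^*} = u - iv$, this contribution equals $2(pu+qv)$.

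Next I would reproduce $2(pu+qv)$ from the two \tpp features attached to the unordered pair $\{i,j\}$. By definition these are $(x\ostar x)_{N(i-1)+j} = u+v$ and, since $x_j x_i^* = \overline{x_i x_j^*}$ flips the sign of the imaginary part, $(x\ostar x)_{N(j-1)+i} = u-v$. Assigning them weights $w_{N(i-1)+j}$ and $w_{N(j-1)+i}$, I need
\[
    w_{N(i-1)+j}\,(u+v) + w_{N(j-1)+i}\,(u-v) = 2pu + 2qv,
\]
a $2\times 2$ real linear system whose unique solution is $w_{N(i-1)+j}=p+q$ and $w_{N(j-1)+i}=p-q$. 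These are real, so the full weight vector $w\in\mathbb{R}^{N^2}$ exists and realises $f(x;\theta)=w\cdot(x\ostar x)$, completing the argument.

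The main obstacle is purely bookkeeping rather than conceptual: keeping the complex conjugations and the two index orderings $N(i-1)+j$ versus $N(j-1)+i$ straight, and checking that the particular combinations $u+v$ and $u-v$ baked into the definition of $x\ostar x$ genuinely form a basis of the real plane $\{(u,v)\}$, so that \emph{both} the real and imaginary parts of every $M_{ij}$ can be recovered with independent real weights. Once this spanning property is verified the solvability of the $2\times 2$ system is immediate, and the theorem follows.
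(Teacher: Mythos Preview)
Your proposal is correct and follows essentially the same route as the paper: reduce $f(x;\theta)$ to a Hermitian quadratic form $x^\dagger M x$ via \cref{app:lemma:num_parameters}, then match the diagonal and off-diagonal contributions term-by-term against the $x\ostar x$ features to read off real weights. The paper carries this out explicitly for $N=2$ before stating the general formula, whereas you work directly at arbitrary $N$ and solve the $2\times2$ real system for each unordered pair $\{i,j\}$, but the substance is identical.
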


\begin{proof}
 From \cref{eq:explicit_QNN(f)}, we have $f(x)=x^\dag U(\theta)^\dag Z U(\theta) x=x^\dag A x$, where 
 \begin{equation}\label{eqn:A}
  A=A^\dag=(2aa^\dag-I).
 \end{equation}
 $a\in C^{2^n \times 2^n}$ is the upper left quarter of $U$, defined in \cref{eq:explicit_QNN(f)}, and as a result, a function of $\theta$. If we consider a 2D input Hilbert space, $x=(x_1, x_2, 0, 0)$, let $x_i=c_i + i d_i$,
 $$
 f(x) = A_{11}(c_1^2+d_1^2)+A_{22}(c_2^2+d_2^2)+A'(c_1 c_2+d_1 d_2)+A''(d_1c_2-d_2 c_1)
 $$
 where $A'=A_{12}+A_{21}$ and $A''=i(A_{12}-A_{21})$ are both real coefficients, as $A^\dagger = A$.
 $f(x)$ can therefore be replaced by a perceptron of the form $f(x) = w \cdot (x \ostar x) $ with weights 
 $$w=[A_{11}, A'_{12}, A'_{21},A_{22}],$$
 where $A'_{12}=1/2(A'+A'')=Re(A_{12})+Im(A_{12})$ and $A'_{21}=1/2(A'-A'')=Re(A_{12})-Im(A_{12})$,
 acting on
 $$x \ostar x=[c_1^2+d_1^2, c_1 c_2+d_1 d_2+d_1c_2-d_2 c_1, c_1 c_2+d_1 d_2-(d_1c_2-d_2 c_1),c_2^2+d_2^2] = x \ostar x.$$ 
 When $x$ is real everywhere, $c_1=c_2=0$, so $x_1=d_1$, $x_2=d_2$, so $h$ reduces to
 $$
 x \ostar x=[x_1^2, x_1 x_2, x_2 x_1, x_2^2]=x\otimes x.
 $$
 
 This result can straightforwardly be extended to $N$ dimensions. In that case, the weights of the perceptron would be
 \begin{equation}
  \begin{aligned}
  w_{N(i-1) + i} &= A_{ii} \\
  w_{N(i-1) + j} &= Re(A_{ij}) - Im(A_{ij}) \;\;\; \text{for } i \neq j 
 \end{aligned}
 \label{eqn:qnn-percep-weights}
 \end{equation}
 
 Therefore, any QNN with parameters $\theta$ can be expressed by a perceptron on $x \ostar x$, with weights $w$ given by the equations above. 
\end{proof}

In practice, one can use \cref{eqn:qnn-percep-weights} to map any QNN onto its corresponding TPP. 

\begin{restatable}{lemma}{MainThm2}\label{proof:qnn=tpp}
\textnormal{(A QNN can express any (post-thresholded) function a \tpp can)}
 Consider a \tpp $g(x;w)$ and QNN $f(x;\theta)$. Then, given some weights $w$ we can always find a scalar $a$ such that $g(x;w) = a f(x;\theta)$. This means a QNN can express any post-thresholded function (i.e.\ classification) a \tpp can.
\end{restatable}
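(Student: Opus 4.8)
The plan is to reverse the construction in \cref{proof:tpp=qnn}. Given an arbitrary real weight vector $w\in\mathbb{R}^{(2^n)^2}$, I would first reconstruct a Hermitian matrix $A$ satisfying $g(x;w)=w\cdot(x\ostar x)=x^\dag A x$ for all $x$, and then realise a positive rescaling of $A$ as a genuine QNN. The rescaling factor is exactly the scalar $a$ in the statement, and because it is positive it preserves the sign of the output everywhere, which is all that is needed for the classification to agree.

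First I would invert the weight-to-matrix dictionary of \cref{eqn:qnn-percep-weights}. The diagonal entries are read off directly as $A_{ii}=w_{N(i-1)+i}$ (real, as required of a Hermitian diagonal). For $i\neq j$, imposing $A_{ji}=A_{ij}^*$ turns the pair of weights $w_{N(i-1)+j}$ and $w_{N(j-1)+i}$ into a nonsingular $2\times2$ linear system for $\mathrm{Re}(A_{ij})$ and $\mathrm{Im}(A_{ij})$, which I solve to obtain a unique Hermitian $A$. By construction $x^\dag A x=g(x;w)$, and the map $w\mapsto A$ is linear.

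The substantive step is that an arbitrary $w$ produces an $A$ with unbounded spectral norm, whereas a QNN can only realise Hermitian matrices $A'=2a^\dag a-I$ with eigenvalues in $[-1,1]$, since by \cref{lemma:arbitrary_matrix} the upper-left block $a$ of a unitary has singular values at most $1$. I would therefore set $A'=A/\norm{A}$, which is Hermitian with eigenvalues in $[-1,1]$, and take the positive semidefinite square root $a'=\sqrt{(A'+I)/2}$. Then $a'^\dag a'=(A'+I)/2$ has eigenvalues in $[0,1]$, so $a'$ has singular values at most $1$ and \cref{lemma:arbitrary_matrix} embeds it as the top-left block of a unitary $U(\theta)$. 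The corresponding QNN computes $f(x;\theta)=x^\dag(2a'^\dag a'-I)x=x^\dag A' x=g(x;w)/\norm{A}$, so setting $a=\norm{A}>0$ gives $g(x;w)=a\,f(x;\theta)$.

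The main obstacle is precisely this spectral-norm constraint, which is why the statement concerns only post-thresholded (classification) output: unitarity confines the raw QNN output $x^\dag A' x$ to $[-1,1]$, so it cannot match a TPP output of arbitrary magnitude numerically. Dividing by $\norm{A}$ rescales without changing any sign, preserving the classification but not the numerical value; the ``mild condition'' under which the map is instead numerically exact is just $\norm{A}\le 1$, in which case no rescaling is needed and $a=1$. I would dispose of the degenerate case $A=0$ (equivalently $w=0$) separately, where $g\equiv0$ is matched by the trivial QNN with $a'=\sqrt{I/2}$.
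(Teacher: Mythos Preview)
Your proposal is correct and follows essentially the same approach as the paper: reconstruct the Hermitian matrix $A$ from $w$ by inverting \cref{eqn:qnn-percep-weights}, rescale so that $(A+I)/2$ has eigenvalues in $[0,1]$, take its square root to obtain the block $a$, and embed $a$ in a unitary via \cref{lemma:arbitrary_matrix}. Your choice to normalise by $\norm{A}$ before taking the square root is in fact slightly cleaner than the paper's version (which rescales $w$ by $\norm{a(w)}_2$), since it guarantees the positive semidefiniteness of $(A'+I)/2$ needed for the square root to exist in the first place.
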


\begin{proof}

We can use \cref{eqn:qnn-percep-weights} to build the matrix $A$ (defined in \cref{eqn:A}) from the parameters $w$. Given some $A$, $\frac12 (A+I)$ is Hermitian, and so can always be written as $a^\dag a$ for some $a$.
By \cref{lemma:arbitrary_matrix}, we can embed this $a$ in a unitary matrix $U$ provided that the singular values of $a$ are all less than or equal to $1$. If this is possible, then the QNN with unitary $U$ expresses the same pre- and post-thresholded function $g(x)$.

If this condition is not satisfied, simply dividing by the spectral norm of $a$, $w\rightarrow w/\norm{a(w)}_2$, satisfies the above spectral condition, and allows the new $w$ to be expressed by a QNN. 

As classifications are not affected by the scale of $w$, a QNN can classify any data if and only if the \tpp $g(x;w)$ can.

\end{proof}

\begin{restatable}{lemma}{qnnkernel}\label{lemma:qnnkernel}
The \tpp $g(x;w)=w\cdot (x \ostar x)$ has the same kernel as the QNN, given by $K(x^{(1)}, x^{(2)})=\norm{x^\dag x'}^2$.
\end{restatable}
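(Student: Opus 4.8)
The plan is to exploit the standard fact that a perceptron $g(x;w)=w\cdot\Phi(x)$ with an \emph{explicit} feature map $\Phi$ has reproducing kernel equal to the inner product of feature vectors, $K(x,x')=\Phi(x)\cdot\Phi(x')$. Since the \tppn\ has feature map $\Phi(x)=x\ostar x$, the claim reduces to the purely algebraic identity
\begin{equation}
(x\ostar x)\cdot(x'\ostar x')=\norm{x^\dag x'}^2,
\end{equation}
which I would establish by direct expansion, treating the diagonal and off-diagonal index contributions separately.

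First I would expand the right-hand side using $x^\dag x'=\sum_i x_i^* x_i'$, giving $\norm{x^\dag x'}^2=\sum_{i,j} x_i^* x_i'\, x_j x_j'^*$, and split the double sum into its diagonal part ($i=j$) and off-diagonal part ($i\neq j$). The diagonal part is $\sum_i \abs{x_i}^2\abs{x_i'}^2$, which is exactly the contribution of the components indexed $N(i-1)+i$ of $x\ostar x$ and $x'\ostar x'$, so this half of the comparison is immediate.

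The substance lies in the off-diagonal sum, which I would organise by unordered pairs $\{i,j\}$, collecting for each pair the two index positions $N(i-1)+j$ and $N(j-1)+i$. Writing $u=x_i x_j^*$ and $v=x_i' x_j'^*$, the two ordered terms of $\norm{x^\dag x'}^2$ belonging to this pair sum to $u^*v+uv^*=2\,Re(u^*v)=2\,Re(u)Re(v)+2\,Im(u)Im(v)$. On the feature-map side, the key observation is that $x_j x_i^*=u^*$, so by \cref{eq:perceptron_equivalence} the two relevant entries of $x\ostar x$ are $Re(u)+Im(u)$ and $Re(u)-Im(u)$, and similarly $Re(v)\pm Im(v)$ for $x'\ostar x'$; their paired product $[Re(u)+Im(u)][Re(v)+Im(v)]+[Re(u)-Im(u)][Re(v)-Im(v)]$ collapses to the same $2\,Re(u)Re(v)+2\,Im(u)Im(v)$. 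Matching the two expressions pair by pair, and adding back the diagonal, yields the identity.

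The only delicate point — and the main obstacle — is the bookkeeping forced by the asymmetric definition of $x\ostar x$ in \cref{eq:perceptron_equivalence}: the $(i,j)$ and $(j,i)$ entries are unequal, differing precisely in the sign of their $Im$ part, so one must never treat $x\ostar x$ as symmetric and must always pair the two ordered positions before comparing. It is exactly this pairing that cancels the mixed $Re$--$Im$ cross terms and reproduces $Re(u^*v)$ rather than the complex quantity $u^*v$, which is what makes the real-valued feature map recover the modulus-squared quantum kernel. As a consistency check, the real case $Im(u)=Im(v)=0$ reduces the identity to $(x\otimes x)\cdot(x'\otimes x')=(x\cdot x')^2$, as expected.
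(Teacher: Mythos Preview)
Your proposal is correct and follows essentially the same approach as the paper: both reduce the claim to the inner-product-of-features identity $(x\ostar x)\cdot(x'\ostar x')=\norm{x^\dag x'}^2$, split into diagonal and off-diagonal index contributions, and handle the off-diagonal part by pairing the $(i,j)$ and $(j,i)$ entries so that the $(S+T)(S'+T')+(S-T)(S'-T')=2(SS'+TT')$ cancellation recovers $2\,Re(u^*v)$. Your write-up is in fact somewhat more careful than the paper's in flagging the asymmetry of $\ostar$ and checking the real-input limit.
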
 
\begin{proof}
A perceptron $f(h)=w\cdot h$ has a kernel $K(h,h')=h\cdot h'$. Using $S=Re(x_i x^*_j)$ and $T=Im(x_i x_j^*))$, and letting $h=x\ostar x$,
\begin{align*}
(x\ostar x)\cdot (x'\ostar x')' &= (x_i {x_i}^*)(x'_i {x'_i}^*)
+ (S+T)(S'+T')
+ (S-T)(S'-T')
+ (x_j {x_j}^*)(x'_j {x'_j}^*)+\dots\\
&=(x_i {x_i}^*)(x'_i {x'_i}^*)
+ 2(SS'+TT')
+ (x_j {x_j}^*)(x'_j {x'_j}^*)+\dots\\
&=(x_i {x_i}^*)(x'_i {x'_i}^*)
+ Re(x_i {x^*}_j)Re(x'_i {x'^*}_j)
- Im(x_i {x_j}^*)Im(x_i {x_j}^*)
+ (x_j {x_j}^*)(x'_j {x'_j}^*)+\dots\\
&= \norm{x^\dag x'}^2.
\end{align*}
\end{proof}

\section{QNNs with amplitude encoding cannot express XOR for $n>2$}\label{sec:sec:qnn_amplitude_parity_proof}

\begin{lemma}\label{app:lemma:n=2}
 The \tpp can express the XOR function on the $n=2$ Boolean dataset encoded as the Boolean hypercube $\{0,1\}^2$ or the centered hypercube $\{-1,1\}^2$ (see \cref{def:enc:classical}).
 XOR can also be expressed when the inputs $x$ are normalised ($\norm{x}_2=1$, leaving out the origin). The normalised inputs are those received by the QNN.
\end{lemma}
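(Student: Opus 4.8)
The plan is to reduce the claim to an explicit construction of separating weights. Since all four Boolean inputs are real, \cref{proof:tpp=qnn} tells us that $x\ostar x = x\otimes x$, so the \tppn{} computes the homogeneous quadratic form $g(x)=w\cdot(x\otimes x)=w_1x_1^2+(w_2+w_3)x_1x_2+w_4x_2^2$. Under the sign convention that $g(x)>0$ predicts class $1$ and $g(x)\le 0$ predicts class $0$, expressing XOR amounts to choosing real weights $w=(w_1,w_2,w_3,w_4)$ that put the two XOR-true inputs on the positive side and the two XOR-false inputs on the non-positive side. First I would simply tabulate the feature vectors at the four corners and read off the resulting linear inequalities on $w$.

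For the hypercube $\{0,1\}^2$ the corners map to $(0,0)\mapsto 0$, $(0,1)\mapsto(0,0,0,1)$, $(1,0)\mapsto(1,0,0,0)$, and $(1,1)\mapsto(1,1,1,1)$, so the constraints are $g(1,0)=w_1>0$, $g(0,1)=w_4>0$, $g(1,1)=w_1+w_2+w_3+w_4\le 0$, together with $g(0,0)=0$ (which already lands in class $0$). A choice such as $w=(1,-2,-2,1)$ satisfies all of these, proving expressibility on $\{0,1\}^2$. For the centered hypercube $\{-1,1\}^2$ every corner has $x_1^2=x_2^2=1$, so the feature vector is $(1,x_1x_2,x_1x_2,1)$ and $g=(w_1+w_4)+(w_2+w_3)\,x_1x_2$ depends only on the product $x_1x_2$. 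Since $x_1x_2=-1$ holds exactly on the XOR-true inputs $(-1,+1)$ and $(+1,-1)$, taking $w_1+w_4=0$ and $w_2+w_3<0$ (for instance $w=(0,-1,-1,0)$) separates the classes.

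To cover the normalised inputs I would use that normalising $x$ rescales its feature vector $x\otimes x$ by the positive scalar $1/\norm{x}^2$. Because $g$ is linear in $x\otimes x$, we have $g(cx)=c^2 g(x)$ with $c=1/\norm{x}>0$, so the sign of $g$ at each data point is unchanged; hence the very same weights classify the normalised encodings correctly. For normalised $\{0,1\}^2$ the origin cannot be normalised and is dropped, which merely removes the (already satisfied) origin constraint, while for $\{-1,1\}^2$ all points share norm $\sqrt{2}$ and normalisation is a single global rescaling.

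The step requiring the most care is not a genuine obstacle but the handling of the threshold boundary: the XOR-false points must be shown to satisfy $g\le 0$ rather than strict negativity, and in particular the origin's value $g=0$ must be checked to fall into class $0$ under the stated convention. Everything else is direct arithmetic verification, made clean by the reduction $x\ostar x = x\otimes x$, so I expect no real difficulty; the content of the lemma is essentially the observation that XOR is quadratically separable and the exhibition of the weights above.
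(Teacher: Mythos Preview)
Your proposal is correct and follows the same approach as the paper: exhibit explicit separating weights for each encoding and verify the inequalities directly. Your specific weight choices differ (the paper uses $w=[1,-1,-1,1]$ for $\{0,1\}^2$ and $w'=[1,1,1,1]$ for $\{-1,1\}^2$), and your homogeneity argument $g(cx)=c^2g(x)$ for the normalised cases is a cleaner justification than the paper's bare assertion that the same weights work, but the underlying idea is identical.
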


\begin{proof}
 The XOR function is given by $f(x_{00})=f(x_{11})=1$ and $f(x_{01})=f(x_{10})=0$.

 \textbf{Unnormalised $\mathbf{\{0,1\}^n}$} For $\{0,1\}^2$, the embeddings for the \tpp are ${(x\otimes x)}_{00}=[0,0,0,0]$, ${(x\otimes x)}_{10}=[1,0,0,0]$, ${(x\otimes x)}_{01}=[0,0,0,1]$ and ${(x\otimes x)}_{11}=[1,1,1,1]$. A \tpp with weights $w=[1,-1,-1,1]$ correctly expresses XOR. 
 
 \textbf{Normalised $\mathbf{\{0,1\}^n}$} When the inputs are normalised (discarding $x_{00}$), a \tpp with the same weights $w$ also expresses XOR.

 \textbf{Unnormalised $\mathbf{\{-1,1\}^n}$} For $\{-1,1\}^2$, the embeddings for the \tpp are ${(x\otimes x)}_{00}=[1,1,1,1]$, ${(x\otimes x)}_{10}=[1,-1,-1,1]$, ${(x\otimes x)}_{01}=[1,-1,-1,1]$ and ${(x\otimes x)}_{11}=[1,1,1,1]$. In this case, a \tpp with weights $w'=[1,1,1,1]$ expresses XOR.
 
 \textbf{Normalised $\mathbf{\{0,1\}^n}$} A \tpp with weights $w'$ also expresses XOR.

 These results imply that the QNN can express XOR with $n=2$ on both $+1-1$ and $01$ amplitude encoding, by using the correspondence between the \tpp and QNN in \cref{proof:qnn=tpp}.
\end{proof}

\begin{lemma}\label{app:lemma:n=3:01}
 The \tpp cannot express the parity function on the $n\geq 3$ Boolean hypercube $\{0,1\}^n$ (see \cref{def:enc:classical}). This is also impossible when the inputs $x$ are normalised ($\norm{x}_2=1$, leaving out the origin). For completeness, we prove this when we allow the \tpp a bias term, i.e.\ $g(x) = w\cdot (x\ostar x)-z$, where $z\in\mathbb{R}$.
\end{lemma}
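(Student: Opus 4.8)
The plan is to reduce the claim to a sign-pattern impossibility for low-degree univariate polynomials, via a simplification on the cube followed by a symmetrization. First I would observe that on real Boolean inputs $x\in\{0,1\}^n$ we have $x\ostar x = x\otimes x$, so the \tpp computes $g(x)=\sum_{i,j} w_{ij}\,x_i x_j - z$, a quadratic form in $x$ (with $w_{ij}$ the weight multiplying $x_i x_j$). Since every coordinate satisfies $x_i^2=x_i$ on the cube, this collapses to a degree-$2$ multilinear polynomial $P(x)-z$ in the $x_i$. In other words, a \tpp restricted to the cube is exactly a degree-$2$ polynomial threshold function, so it suffices to show that parity admits no such representation, even with the bias $z$.

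The key device is symmetrization. Because parity depends only on $\sum_i x_i \bmod 2$, it is invariant under permuting coordinates. I would average $g$ over all $n!$ coordinate permutations, $\bar g(x)=\tfrac{1}{n!}\sum_\sigma g(x_\sigma)$. This average still classifies parity correctly: every input in a given permutation orbit carries the same parity label, so all the $g(x_\sigma)$ share the sign dictated by the decision rule ($>0$ for odd weight, $\le 0$ for even weight), and hence so does their average. But $\bar g$ is now a symmetric degree-$2$ multilinear polynomial, so it depends only on the Hamming weight $s=\sum_i x_i$, taking the form $q(s)=\alpha s+\beta\binom{s}{2}-z$, a quadratic in $s$ with no constant part beyond $-z$.

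It remains to rule out the parity sign pattern for this univariate reduction. In the unnormalised case $s$ ranges over $\{0,1,\dots,n\}$, and for $n\ge 3$ parity demands $q(0)\le 0$, $q(1)>0$, $q(2)\le 0$, $q(3)>0$. I would pass to the first difference $D(s)=q(s+1)-q(s)$, which is affine because $q$ is quadratic; the four sign constraints force $D(0)>0$, $D(1)<0$, $D(2)>0$, which is impossible for a monotone affine function. For the normalised inputs actually received by the QNN, the origin is dropped and each weight-$s$ input is rescaled by $1/\sqrt{s}$, so $x\otimes x$ is scaled by $1/s$ and $\bar g$ becomes $R(s)/s-z$ with $R(s)=\alpha s+\beta\binom{s}{2}$. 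Since $R(0)=0$, the quotient $R(s)/s$ is itself affine in $s$, so $\bar g$ is an affine $\ell(s)$; for $n\ge 3$ parity then requires $\ell(1)>0$, $\ell(2)\le 0$, $\ell(3)>0$, once again impossible for an affine function.

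The only delicate points are bookkeeping rather than conceptual, and I expect them to be the main obstacle. I would take care that symmetrization genuinely preserves the threshold classification, which hinges on keeping the strict/non-strict inequalities of the decision rule straight throughout the averaging, and I would verify the degree drop in the normalised case, namely that $R(0)=0$ so that $R(s)/s$ is a genuine affine polynomial rather than merely a rational function. As a consistency check, the very same argument with the shorter ranges ($s\in\{0,1,2\}$ unnormalised, $s\in\{1,2\}$ normalised) imposes no contradiction at $n=2$, matching the fact that XOR remains expressible there (\cref{app:lemma:n=2}); this shows the degree count is tight.
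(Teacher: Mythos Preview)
Your argument is correct and takes a genuinely different route from the paper. The paper writes down the eight explicit inequalities for the $n=3$ normalized case (one per input), combines three of the four-term inequalities with the last inequality to obtain $a+e+i<3z$, and contradicts this with $a+e+i\ge 3z$ coming from the three single-term inequalities; the unnormalised case is handled by a similar ad hoc combination, and general $n$ is then obtained by restricting to those inputs with $x_i=0$ for $i>3$, which forces the $n=3$ system to reappear verbatim.

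Your symmetrization is the classical Minsky--Papert device: average $g$ over $S_n$, observe that this preserves the threshold classification because parity is permutation-invariant, and land on a univariate quadratic (unnormalised) or affine (normalised, after the $1/s$ cancellation) function of the Hamming weight. The sign pattern $+,-,+$ on three consecutive integers then kills both cases in one stroke, uniformly in $n\ge 3$, with no separate base case and extension step. This is cleaner and more conceptual than the paper's proof, and it makes transparent \emph{why} degree two is exactly the boundary: the same reduction yields no contradiction at $n=2$, as you note. The paper's approach, by contrast, is more hands-on and avoids invoking any general technique; it would be easier to follow for a reader unfamiliar with symmetrization, but it obscures the underlying degree obstruction that your argument exposes directly.
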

\begin{proof}

\textbf{Normalised $\mathbf{\{0,1\}^3}$.} We first prove this for $x\in \{0,1\}^3$, when $x$ is normalised (so e.g.\ $x_{101}=[1,0,1]/\sqrt{2}$). \cref{app:eqn:n=3-01} shows the equations for all inputs (temporarily leaving in the origin, even though it does not belong in this dataset as it cannot be normalised). The targets are on the left, and the weights are written as $w=[,\dots, i]$

\begin{equation}\label{app:eqn:n=3-01}
\begin{bmatrix}
0\\
1\\
1\\
0\\
1\\
0\\
0\\
1\\
\end{bmatrix}=
\begin{bmatrix}
0 & 0 & 0 & 0 & 0 & 0 & 0 & 0 & 0\\
0 & 0 & 0 & 0 & 0 & 0 & 0 & 0 & 1\\
0 & 0 & 0 & 0 & 1 & 0 & 0 & 0 & 0\\
0 & 0 & 0 & 0 & 1/2 & 1/2 & 0 & 1/2 & 1/2\\
1 & 0 & 0 & 0 & 0 & 0 & 0 & 0 & 0\\
1/2 & 0 & 1/2 & 0 & 0 & 0 & 1/2 & 0 & 1/2\\
1/2 & 1/2 & 0 & 1/2 & 1/2 & 0 & 0 & 0 & 0\\
1/3 & 1/3 & 1/3 & 1/3 & 1/3 & 1/3 & 1/3 & 1/3 & 1/3\\
\end{bmatrix}
\begin{bmatrix}
a\\
b\\
c\\
d\\
e\\
f\\
g\\
h\\
i\\
\end{bmatrix}
-z
=
\begin{bmatrix}
0<z\\
i\geq z\\
e\geq z\\
e+f+h+i<2z\\
a\geq z\\
a+c+g+i<2z\\
a+b+d+e<2z\\
a+b+c+d+e+f+g+h+i\geq 3z\\
\end{bmatrix}
\end{equation}
Without using the first inequality ($0<z$) as $000$ cannot be encoded for the QNN, we can prove that the parity function cannot be expressed. If we add the three inequalities with 4 terms on the LHS, and add the resulting inequality to the final inequality, we have
$
a+e+i<3z
$
However, from the first three inequalities, we have 
$
a+e+i\geq 3z.
$
These conditions cannot be satisfied simultaneously.

\textbf{Unnormalised $\mathbf{\{0,1\}^3}$} If we drop the normalisation condition (meaning we can include the origin) we still cannot express XOR.
For unnormalised inputs (i.e.\ the same matrix in \cref{app:eqn:n=3-01} but all positive coefficients are $1$) can be easily proved. We have $0<z$ from the first inequality, and with $a-z, i-z, e-z\geq 0$, we have $0\geq a+e+i+b+c+d+f+g+h-z+(a-z+e-z+i-z)=(a+i+c+g-z)+(e+i+h+f-z)+(a+e+b+d-z)+(-z)$. Each of the bracketed terms is less than 0, but their sum must be positive, creating a contradiction.

\textbf{Extending to $\mathbf{n}$} 
The general input to the \tpp is $x\otimes x$ for $x\in\{0,1\}^n$. 
We split the components of $x \otimes x$ into two groups: $x_ix_j$ where both $i,j\leq 3$, and a second group containing all others. 
Consider the $7$ inputs where all terms in the second group are 0 (which can be achieved by setting $x_i=0$ for $i>3$). These inputs must satisfy the same simultaneous equations as in \cref{app:eqn:n=3-01}, and thus the parity function cannot be expressed.
\end{proof}

\begin{lemma}\label{app:lemma:n=3:+1-1}
 The \tpp cannot express the parity function on the $n\geq 3$ centered Boolean hypercube $\{-1,1\}^n$ (see \cref{def:enc:classical}). This is also impossible when the inputs $x$ are normalised ($\norm{x}_2=1$, leaving out the origin). For completeness, we prove this when we allow the \tpp a bias term, i.e.\ $g(x) = w\cdot (x\ostar x)-z$ where $z\in\mathbb{R}.$
 \end{lemma}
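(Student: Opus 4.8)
The plan is to mirror the structure of \cref{app:lemma:n=3:01}, but to replace its coordinate-zeroing step—which is unavailable on the centered cube—with a symmetry argument. First I would record two simplifications specific to $\{-1,1\}^n$. Since the inputs are real, $x \ostar x = x \otimes x$, and since every $x \in \{-1,1\}^n$ satisfies $\norm{x}_2 = \sqrt{n}$, normalisation multiplies every embedded point by the common factor $1/n$; this uniform rescaling can be absorbed into $w$ and $z$ and never changes the sign of the output, so the normalised and unnormalised cases can be handled together. Using $x_i^2 = 1$, the diagonal entries of $x \otimes x$ are all equal to $1$, so the pre-thresholded \tpp output collapses to $g(x) = b + \sum_{i<j} \tilde w_{ij}\, x_i x_j$ — a constant plus degree-two monomials in the $x_i$, with no linear terms — where $b$ and the $\tilde w_{ij}$ are real constants built from $w$ and the bias $z$.

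Next I would argue by contradiction. Suppose this $g$ expresses parity, so that $g(x) > 0$ on the $2^{n-1}$ points of odd parity (class $1$, where $\prod_i x_i = -1$) and $g(x) \le 0$ on the $2^{n-1}$ points of even parity (class $0$). The key step is to sum $g$ separately over the two classes. I claim each contribution matches across the classes: for the constant term this is immediate, since both classes contain $2^{n-1}$ points; and for a monomial $x_i x_j$ with $i < j$ I would use that $n \ge 3$ guarantees a third index $k \notin \{i,j\}$, so that flipping the sign of $x_k$ is a bijection between the odd- and even-parity points that leaves $x_i x_j$ fixed, forcing $\sum_{\text{odd}} x_i x_j = \sum_{\text{even}} x_i x_j$. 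Summing over all terms then gives $\sum_{\text{odd}} g = \sum_{\text{even}} g$. But the left side is strictly positive (a sum of strictly positive terms) while the right side is non-positive, a contradiction. Hence no such $g$, and therefore no \tppn, can express parity on $\{-1,1\}^n$ for $n \ge 3$.

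The step I expect to matter most is precisely this replacement of the extension argument. In \cref{app:lemma:n=3:01} one reduces to $n=3$ by setting the extra coordinates to zero, but on the centered hypercube every coordinate is forced to be $\pm 1$, so that route is closed. A single global sign flip $x \mapsto -x$ would fix $g$ while flipping parity only when $n$ is odd, so it cannot handle even $n$; the third-coordinate involution above is what makes the argument uniform in $n \ge 3$. Finally, I would invoke the correspondence in \cref{proof:qnn=tpp} to transfer the conclusion from the \tpp to the QNN, establishing that a QNN with $\{-1,1\}$ amplitude encoding cannot express parity for $n \ge 3$ either.
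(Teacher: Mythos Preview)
Your argument is correct, and it differs from the paper's. The paper handles odd $n$ with the global sign flip you mention (since $g(x)=g(-x)$ and $-x$ has opposite parity when $n$ is odd, parity is immediately inexpressible), and then treats even $n$ by a separate induction: writing $x_{n+1}=[x_n,\pm 1]$, expanding $(x,\pm 1)\otimes(x,\pm 1)$, and arguing that the coefficients on the cross terms must all vanish, which reduces the problem back to the already-impossible odd case. Your route instead collapses $g$ to $b+\sum_{i<j}\tilde w_{ij}x_ix_j$ and uses the third-coordinate involution to show $\sum_{\mathrm{odd}}g=\sum_{\mathrm{even}}g$, giving a single clean contradiction for every $n\ge 3$ at once. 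The paper's odd-$n$ step is a one-liner, but its even-$n$ reduction is noticeably more delicate; your summation argument trades that case split for a uniform, essentially Fourier-analytic observation (degree-$\le 2$ polynomials on $\{-1,1\}^n$ are orthogonal to the degree-$n$ character $\prod_i x_i$ once $n\ge 3$), which is both shorter and more transparent. Both proofs share the same opening simplification about normalisation on $\{-1,1\}^n$.
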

\begin{proof}
For $x\in\{-1,1\}$, the normalisation factor is the same for every input, so we only need to consider the unnormalised case. 

\textbf{Normalised and unnormalised $\mathbf{n=3}$} Note that $g(x)=g(-x)$. For $n=3$ (and all odd $n$), the target for $x_{000}=(-1, -1, -1)$ is $0$, and the target for $x_{111}=(1,1,1)$ is $1$. However, $x_{000}=-x_{111}$, so $g(x_{000})=g(x_{111})$, meaning the \tpp cannot express parity when $x\in\{-1,1\}$.

\textbf{Extending to $\mathbf{n}$} Now consider $n$ and $n+1$, with $n$ odd. Any input can be written as 
$x_{n+1}=[x_{n}, \pm 1]$.
When taking the tensor product, we get
\begin{align}
 S=(x, 1)\otimes (x, 1) &= [x\otimes x, x, x,1]\\
 T=(x, -1)\otimes (x,-1)&= [x\otimes x, -x, -x, 1]\\
\end{align}
S and T must be of different classes, as they have one sign different. Without loss of generality, let $S>0$ and $B<0$. Then, consider a vector of parameters $[c_{\otimes}, c, c_1]$. Then, $c\cdot x >0$ whenever $S(x)>0$. Consider $x=[1,\dots, 1]$, all permutations of $x=[-1,-1,1\dots,1]$ all the way to $x=[-1,\dots,-1,1]$. These are all of the same class. Then, straightforwardly $c_i>0$ at every element. But, to satisfy this for every $x$, $c_i=0$. If this is the case though, we are back to classifying the odd $d$ case, which is impossible, and thus a contradiction.
\end{proof}

\begin{restatable}{theorem}{ExpressThm}\label{app:thm:n>3}
    \textnormal{(Parity inexpressible for a \tpp with $n\geq 3$)}. A \tpp, acting on Boolean data $\{0,1\}^n$ or $\{-1,1\}^n$, for both normalised and unnormalised $x$ cannot express the parity function for $n\geq 3$. As a result, QNNs cannot express the parity function on amplitude-encoded inputs, with either 01 encoding or +1-1 encoding.
\end{restatable}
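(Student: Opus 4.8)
The plan is to assemble the final statement directly from the encoding-specific lemmas already established, and then transfer the result from \tpps to QNNs using the equivalence proved earlier. The theorem splits naturally into two parts: inexpressibility of parity for a \tpp on all four Boolean encodings, and the resulting inexpressibility for amplitude-encoded QNNs.

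For the \tpp part I would simply observe that the named cases are covered exhaustively by the two preceding lemmas: \cref{app:lemma:n=3:01} handles the hypercube $\{0,1\}^n$ (both normalised and unnormalised), while \cref{app:lemma:n=3:+1-1} handles the centered hypercube $\{-1,1\}^n$ (both normalised and unnormalised), in each case for all $n\geq 3$ and even when an arbitrary real bias term $z$ is permitted. Since every encoding listed in the theorem appears among these two lemmas, the \tpp claim follows with no additional computation. It is worth stressing that allowing the bias term only strengthens these lemmas, so they \emph{a fortiori} rule out the bias-free \tpp produced by the QNN correspondence.

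For the QNN consequence I would argue by contradiction using \cref{proof:tpp=qnn}. Suppose a QNN with unitary $U(\theta)$ expressed parity under 01 amplitude encoding. By \cref{proof:tpp=qnn} there exist real weights $w$ with $f(x;\theta)=g(x;w)=w\cdot(x\ostar x)$ at every input, so the corresponding bias-free \tpp realises the identical pre-thresholded values and hence the identical classification. But the inputs a QNN receives under 01 amplitude encoding are exactly the normalised vectors of $\{0,1\}^n$ with the origin dropped, which is precisely the normalised regime ruled out by \cref{app:lemma:n=3:01}. This contradiction shows no such QNN exists, and the identical argument with \cref{app:lemma:n=3:+1-1} in place of \cref{app:lemma:n=3:01} eliminates +1-1 amplitude encoding, whose inputs are the normalised corners of $\{-1,1\}^n$.

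Given the heavy lifting done in the lemmas, the theorem is essentially an assembly step, so I do not anticipate a genuine obstacle. The one point requiring care, and the step I would flag, is matching the normalisation conventions: one must verify that the inputs actually presented to the QNN under each amplitude encoding coincide with the normalised input sets for which the lemmas were proved (including the dropping of the origin for 01 encoding), and that the bias-free \tpp delivered by \cref{proof:tpp=qnn} is indeed subsumed by the stronger, bias-allowing lemmas.
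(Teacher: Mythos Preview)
Your proposal is correct and follows the same assembly strategy as the paper: the paper's own proof simply cites \cref{app:lemma:n=3:01} and \cref{app:lemma:n=3:+1-1} for the two encoding families (along with \cref{app:lemma:n=2} for the $n=2$ contrast). If anything, your write-up is more careful than the paper's, since you explicitly invoke \cref{proof:tpp=qnn} to transfer the inexpressibility from the \tpp to the QNN and you flag the normalisation-matching detail, whereas the paper leaves that step implicit.
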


\begin{proof}
 \cref{app:lemma:n=2} proves that XOR can be expressed for $n=2$ for both 01 and +1-1 encoding.
 \cref{app:lemma:n=3:01} proves that 01 encoding cannot express generalised XOR (the parity function) for $n>2$.
 \cref{app:lemma:n=3:+1-1} proves the analogous result for +1-1 encoding.
\end{proof}

\begin{lemma}[Number of functions a QNN can separate]\label{lemma:num_Boolean_functions_expressible}
 The number of functions on the amplitude encoded Boolean dataset of $n$ dimensions the QNN can express is upper bounded by 
 $$
 2^{n^3+n^2\log_2(e)+1},
 $$
 which as a fraction of the total $2^{2^n}$ tends to 0 as $n$ increases.
\end{lemma}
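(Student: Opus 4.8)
The plan is to reduce the statement to a counting problem about linear threshold functions and then apply Cover's function-counting theorem. By \cref{proof:tpp=qnn} and \cref{proof:qnn=tpp}, a QNN and its associated \tpp $g(x)=w\cdot(x\ostar x)$ express exactly the same set of post-thresholded Boolean functions, so it suffices to count the functions a \tpp can separate; I would allow a bias term $z$ as well, since this only enlarges the realisable set and hence keeps the result an upper bound. On amplitude-encoded Boolean data the relevant objects are the $m=2^{n}$ feature vectors $\{x\ostar x : x\in\{0,1\}^{n}\}$. Because every amplitude $x_i$ with $i>n$ vanishes, the quadratic form $x^\dagger A x$ depends only on the top-left $n\times n$ block of $A$ (equivalently $x\ostar x$ has at most $n^2$ non-constant components), so each expressible function corresponds to a dichotomy of these $2^n$ points realised by a hyperplane in $\mathbb{R}^{n^2}$ — homogeneous, or affine once the bias is included.

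Next I would invoke Cover's theorem: the number of dichotomies of $m$ points in $\mathbb{R}^{d}$ realisable by an affine hyperplane is at most $C(m,d+1)=2\sum_{k=0}^{d}\binom{m-1}{k}$, with equality precisely when the points are in general position. Since we only need an upper bound, the (severe) degeneracy of the vectors $x\ostar x$ is harmless: non-general position can only decrease the true count. Setting $m=2^{n}$ and $d=n^{2}$ gives
\[
T(n)\le 2\sum_{k=0}^{n^{2}}\binom{2^{n}-1}{k}.
\]

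Finally I would control this partial binomial sum with the standard estimate $\sum_{k=0}^{D}\binom{M}{k}\le (eM/D)^{D}$, valid for $D\le M$, which holds once $n^{2}\le 2^{n}-1$ (the few small $n$ that violate it are trivially dominated by the crude bound $T(n)\le 2^{2^{n}}$, which itself lies below the claimed expression). With $M=2^{n}-1$ and $D=n^{2}$ this yields $T(n)\le 2\,(e\,2^{n}/n^{2})^{n^{2}}$, and taking $\log_2$,
\[
\log_2 T(n)\le 1+n^{2}\!\left(n+\log_2 e-2\log_2 n\right)\le n^{3}+n^{2}\log_2 e+1,
\]
where the last step just drops the non-positive term $-2n^{2}\log_2 n$. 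This is exactly \cref{eq:expressible}, and dividing by $2^{2^{n}}$ shows the expressible fraction tends to $0$.

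The main obstacle is not any single computation but the bookkeeping around two points: justifying that Cover's formula may be used as an \emph{upper} bound even though the feature points are far from general position (so equality fails), and pinning down the correct effective dimension $n^{2}$ rather than the naive Hilbert-space count $N^{2}=2^{2\lceil\log_2 n\rceil}$, which is what makes the exponent $n^{3}$ rather than something larger. Both reduce to the observation that padded amplitudes contribute identically-zero features and therefore cannot create new dichotomies; once that is established the binomial estimate and the logarithmic simplification are routine.
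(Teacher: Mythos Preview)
Your proposal is correct and follows essentially the same route as the paper's proof: reduce to counting linear-threshold dichotomies of the $2^n$ feature vectors in an $n^2$-dimensional space, apply Cover's function-counting bound $T\le 2\sum_{k=0}^{n^2-1}\binom{2^n-1}{k}$, and finish with the standard partial-sum estimate $(eM/D)^D$. Your write-up is in fact somewhat more careful than the paper's, which simply asserts $K=n^2$ without explaining the padding argument and does not comment on why Cover's formula remains a valid upper bound off general position.
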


\begin{proof}
 The maximum possible number of functions a linear classifier in $K$ dimensions on $N$ datapoints can express is given by \cite{mackay2003information,baldi2019polynomial}
 \begin{equation}
  T(N,K)=
  \begin{cases} 
   2^N & \text{if } K \geq N \\
   2\sum^{K-1}_{k=0} {N-1\choose k}<2^{d^2\left( \log_2(e) + \log_2(p)-2\log_2(d)
 \right)} & \text{if } K < N
  \end{cases}
 \end{equation}
 For the QNN on the Boolean dataset, $N=2^n$ and $K=n^2$. This means
 \begin{equation}
  T = 2\sum^{n^2-1}_{k=0} {2^n-1\choose k}
 \end{equation}
 functions can be expressed. We can bound this partial sum with 
 $$
 T(2^n-1,n^2)<2\left[\frac{e(2^n-1)}{n^2}\right]^{n^2}< 2^{n^3+n^2\log_2(e)+1}.
 $$
 With $2^{2^n}$ functions overall, as $n$ increases the fraction of expressible functions will tend to near $0$.

 The perceptron satisfies $2^{n^2-n\log_2{n} -O(n)}< T(n)<2^{n^2-n\log_2{n} +O(n)}$ \citep{vsima2003general}.
\end{proof}

\section{Expressivity results for Multi-Layer QNNs}

\begin{lemma}[QNNs are no less expressive than Perceptrons when the data is in a single orthant]\label{app:lemma:QNN>Perceptron}
 Given some dataset $D$, if all $x\in D$ $\norm{x}=1$ and $x_i>0\; \forall \; x_i$ (i.e. the data has unit norm and is in the positive orthant), a QNN can express any function a perceptron can. However, if these conditions are relaxed, perceptrons can express functions that QNNs cannot.
\end{lemma}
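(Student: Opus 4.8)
The plan is to prove the two halves separately, using in both the fact established in \cref{proof:qnn=tpp} that a QNN can realize $\mathrm{sign}(x^{\dagger}Ax)$ for an arbitrary Hermitian $A$ (in the real case, an arbitrary real symmetric $A$): any such $A$ can be built from \tpp weights via \cref{eqn:qnn-percep-weights} and then rescaled, without changing the sign of its quadratic form, so that $\tfrac12(A+I)$ embeds in a unitary. For real data $x\otimes x$ the \tpp output is exactly the quadratic form $x^{\top}Wx$ with $W$ symmetric, so the QNN realizes $\mathrm{sign}(x^{\top}Wx)$ for any symmetric $W$.

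For the forward direction (the QNN is at least as expressive as the perceptron on positive-orthant, unit-norm data), a perceptron expresses $g(x)=\mathrm{sign}(v\cdot x)$ for some $v\in\mathbb{R}^{N}$. First I would convert this linear functional into a quadratic form by multiplying by a strictly positive linear functional. Concretely, set
$$W=\tfrac12\left(v\,\mathbf{1}^{\top}+\mathbf{1}\,v^{\top}\right),$$
which is real symmetric, with $\mathbf{1}$ the all-ones vector. Then $x^{\top}Wx=(v\cdot x)(\mathbf{1}\cdot x)$. Because every $x\in D$ lies in the positive orthant, $\mathbf{1}\cdot x=\sum_i x_i>0$, so $\mathrm{sign}(x^{\top}Wx)=\mathrm{sign}(v\cdot x)$ for all $x\in D$. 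Invoking \cref{proof:qnn=tpp}, the \tpp with weights corresponding to $W$, and hence a QNN (the unit-norm constraint on $D$ is exactly the QNN's input domain), realizes the same classification. This shows the QNN expresses the perceptron's function.

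For the converse (relaxing the hypotheses), the key structural fact is that the QNN output is even under $x\mapsto -x$, since $f(-x)=(-x)^{\dagger}A(-x)=x^{\dagger}Ax=f(x)$; thus any QNN necessarily assigns the same label to $x$ and $-x$. Dropping the positive-orthant condition, the dataset may contain antipodal pairs: for instance the centered hypercube $\{-1,1\}^n$ contains both $x$ and $-x$. The perceptron-expressible dictator function $\mathrm{sign}(x_1)$ (realized by $v=e_1$) labels $x$ and $-x$ oppositely whenever $x_1\neq 0$, which is impossible for a QNN by the evenness above. Hence there is a perceptron-expressible function that no QNN can express. This is precisely the mechanism already exploited in \cref{app:lemma:n=3:+1-1} to rule out parity on the centered hypercube, so I would cite that symmetry argument directly.

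\textbf{Main obstacle.} The delicate point is the linear-to-quadratic conversion in the forward direction and pinning down the role of each hypothesis: the positive-orthant condition is exactly what guarantees $\mathbf{1}\cdot x$ is a sign-definite factor, and without it the construction collapses (consistent with the converse). One caveat I would flag rather than grind through is that this construction matches the homogeneous perceptron $\mathrm{sign}(v\cdot x)$, i.e.\ the bias-free classifier of \cref{eq:perceptron}; a genuine bias term cannot be absorbed this way, because on the unit sphere a degree-one function cannot in general be written as a quadratic form (adding $\lambda\,\mathbf{1}\mathbf{1}^{\top}$ or $\lambda I$ only yields $(v\cdot x)(\mathbf{1}\cdot x)$ distorted by the varying factor $\mathbf{1}\cdot x$), so I would either state the lemma for the homogeneous perceptron or remark on this limitation explicitly.
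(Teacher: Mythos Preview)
Your proof is correct, but the forward direction takes a genuinely different route from the paper's. The paper does not invoke the \tpp correspondence at all; instead it changes the encoding to $\ket{x}=(\sqrt{x_1},\ldots,\sqrt{x_N})$ (so that $x_i>0$ guarantees real square roots and the normalisation becomes $\|x\|_1=1$), and then writes down an explicit block unitary
\[
U=\frac{1}{\sqrt{2}}\begin{bmatrix}\sqrt{I+W}&-\sqrt{I-W}\\ \sqrt{I-W}&\sqrt{I+W}\end{bmatrix},\qquad W=\operatorname{diag}(w_1,\ldots,w_N),
\]
for which the QNN output is \emph{exactly} $w\cdot x$ pre-threshold. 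Your construction keeps the standard amplitude encoding $\ket{x}=x$, picks the symmetric $W=\tfrac12(v\mathbf{1}^{\top}+\mathbf{1}v^{\top})$ so that $x^{\top}Wx=(v\cdot x)(\mathbf{1}\cdot x)$, and matches only the \emph{sign} via the strictly positive factor $\mathbf{1}\cdot x$, deferring the realisability to \cref{proof:qnn=tpp}. Your argument is cleaner and, incidentally, fits the $\ell_2$-normalisation stated in the lemma more naturally than the paper's own $\ell_1$ assumption; but the paper's explicit unitary is the object that is reused downstream in \cref{lem:simul_perceptrons:2} and \cref{app:thm:independent_readouts} to stack independent perceptrons on multiple readout qubits, so that concrete construction is doing more work in the paper's overall programme than your existence argument would. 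For the converse, your evenness observation $f(-x)=f(x)$ and the antipodal-pair obstruction are exactly what the paper uses.
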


\begin{proof}
 A perceptron expresses a function $f(x)=w\cdot x$. We can use a QNN to express any function a perceptron can express, assuming $\norm{W}_2=1$ and $\norm{x}_1=1$, and $x_i>0$ (i.e. all the data is in the positive orthant). Assume the perceptron has input dimension $2^n$ (corresponding to a QNN with $n$ input qubits). Assume we encode the data $x$ into the QNN using amplitude encoding on the square root of the inputs, i.e.\
 $$
 \ket{x} = (\sqrt{x_0}, \dots, \sqrt{x_{2^n}})
 $$
 Let $W=diag(w_0, \dots, w_{2^n})$
 \begin{equation}
 U=\frac{1}{\sqrt{2}}
 \begin{bmatrix}
 \sqrt{1+W} & -\sqrt{1-W} \\
 \sqrt{1-W} & \sqrt{1+W} \\
 \end{bmatrix}
 \end{equation}
 where $w=\text{diag}(w_i)$ (of dimension $2^n$), for a perceptron with input dimension $2^n$ and the square roots are performed elementwise. With a single readout qubit and measurement operator $Z = I^{2^n}\otimes Z_2$, where $Z_2=[[1, 0], [0, -1]]$, this expresses the function $f(x) = \bra{x} U^\dag Z U \ket{x}= w\cdot x$ provided the above conditions are satisfied.
 If $D$ contains $x$ and $-x$, then because for the QNN, $f(x)=f(-x)$ but for a perceptron $f(x)=-f(-x)$, we have a function expressible by the perceptron that is not expressible by the QNN.
\end{proof}

\cref{app:lemma:QNN>Perceptron} means the following.
Because perceptron expresses a function $f(x)=w\cdot x$, we can use a QNN to express any function a perceptron can express, assuming $|w_i|\leq 1$ and $x_i\in\{0, 1\}$.

\begin{equation}
U=\frac{1}{\sqrt{2}}
\begin{bmatrix}
\sqrt{1+w} & -\sqrt{1-w} \\
\sqrt{1-w} & \sqrt{1+w} \\
\end{bmatrix}
\end{equation}

where $w=\text{diag}(w_i)$ (of dimension $N$), for a perceptron with input dimension $N$ and the square roots are performed elementwise. 
If we are classifying $\{0,1\}^n$, we need $\ceil{\log_2{n}}$ qubits. The Hilbert space will therefore have dimension $d=2^{\ceil{\log_2{n}}}$ $1+w$ will be a $d\times d$ matrix. Using $w^i$ to denote a matrix satisfying $w_{ii}=1$ and $0$ otherwise, we need to consider $w^i$ $0<i\leq n$.
This means if we have an input $X=(x, 0)^T$, $UX=\frac{1}{\sqrt{2}}(\sqrt{1+w}x, \sqrt{1-w}x)^T$.

\begin{lemma}[Perceptrons can be simultaneously encoded on 2 output qubits]\label{lem:simul_perceptrons:2}
 For a QNN with two readout qubits, acting on input data $x\in \mathbb{R}^N$ in the positive orthant ($x_j>0$ for all $j$), the function $f(x) = [w^0\cdot x, w^1 cdot x]$ can be encoded by a QNN.
\end{lemma}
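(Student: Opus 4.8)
The plan is to apply the single-readout-qubit construction of \cref{app:lemma:QNN>Perceptron} independently to each of the two readout qubits. Recall that in that lemma the input $x$ (in the positive orthant, with the normalisation conventions used there so that $\ket{x}=\sum_i\sqrt{x_i}\ket{i}$ is a valid state) is processed by a block unitary that implements a \emph{controlled rotation}: conditioned on the data basis state $\ket{i}$ it rotates the readout qubit into $\ket{a_i}=\sqrt{(1+w_i)/2}\,\ket{0}+\sqrt{(1-w_i)/2}\,\ket{1}$, which satisfies $\bra{a_i}Z\ket{a_i}=w_i$. Measuring $Z$ then returns $\sum_i|\sqrt{x_i}|^2 w_i=w\cdot x$. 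The key observation for two outputs is that a two-qubit \emph{product} state $\ket{a_i^{0}}\otimes\ket{a_i^{1}}$ realises the two prescribed single-qubit $Z$-expectations $w^0_i$ and $w^1_i$ simultaneously, so the two perceptrons can be encoded on disjoint readout qubits without interfering.

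Concretely, I would work on the Hilbert space (data)$\otimes q_0\otimes q_1$ and define two controlled-rotation unitaries
\[
U^{(0)}=\sum_i\ket{i}\bra{i}\otimes R^{0}_i\otimes I_{q_1},\qquad U^{(1)}=\sum_i\ket{i}\bra{i}\otimes I_{q_0}\otimes R^{1}_i,
\]
where $R^{0}_i$ and $R^{1}_i$ are the single-qubit rotations sending $\ket{0}$ to $\ket{a_i^0}$ and $\ket{a_i^1}$ respectively, built from the weight vectors $w^0$ and $w^1$ exactly as in \cref{app:lemma:QNN>Perceptron}. Both operators are diagonal on the data register and act on disjoint readout qubits, so $U^{(0)}U^{(1)}=U^{(1)}U^{(0)}$; I set $U=U^{(1)}U^{(0)}$. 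Applied to $\ket{x}\ket{0}_{q_0}\ket{0}_{q_1}$ this yields $\ket{\Psi}=\sum_i\sqrt{x_i}\,\ket{i}\ket{a_i^0}_{q_0}\ket{a_i^1}_{q_1}$.

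The remaining step is to evaluate the two readout expectations. Because the data basis states $\ket{i}$ are orthogonal, all cross terms ($i\neq j$) vanish and I obtain $\langle Z_{q_0}\rangle=\sum_i x_i\,\bra{a_i^0}Z\ket{a_i^0}\langle a_i^1|a_i^1\rangle=\sum_i x_i w^0_i=w^0\cdot x$, and symmetrically $\langle Z_{q_1}\rangle=w^1\cdot x$, giving $f(x)=[w^0\cdot x,\,w^1\cdot x]$ as claimed. The only genuine constraint is that the rotations be well defined, i.e.\ $|w^0_i|\le 1$ and $|w^1_i|\le 1$ so the square roots are real; as in \cref{app:lemma:QNN>Perceptron} this is enforced by rescaling each weight vector, which leaves the sign (and hence the classification) of the outputs unchanged. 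There is no substantial obstacle beyond this bookkeeping: the whole argument hinges on the fact that two controlled rotations acting on disjoint readout qubits commute and thus encode the two perceptrons without cross-talk, and that commutation-and-factorisation point is the step I would emphasise in a full write-up.
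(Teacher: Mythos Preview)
Your proposal is correct and is essentially the same construction as the paper's: both apply the single-readout unitary of \cref{app:lemma:QNN>Perceptron} once per readout qubit and verify the two expectations factor. The only difference is presentational---the paper writes out the explicit $4N\times 4N$ block matrices $U_0,U_1$ and computes $X'^{T}Z_iX'$ by hand, whereas you phrase the same operators as data-controlled rotations on disjoint readout qubits and invoke commutativity/factorisation; the content is identical.
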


\begin{proof}
 \begin{equation}
  U_1 U_0=\frac{1}{\sqrt{2}}
  \begin{bmatrix}
  \sqrt{1+w^1} & 0 & -\sqrt{1-w^1} & 0\\
  0 & \sqrt{1+w^1} & 0 & -\sqrt{1-w^1} \\
  \sqrt{1-w^1} & 0 & \sqrt{1+w^1} & 0 \\
  0 & \sqrt{1-w^1} & 0 & \sqrt{1+w^1} \\
  \end{bmatrix}
  \frac{1}{\sqrt{2}}
  \begin{bmatrix}
  \sqrt{1+w^0} & -\sqrt{1-w^0} & 0 & 0\\
  \sqrt{1-w^0} & \sqrt{1+w^0} & 0 & 0 \\
  0 & 0 & \sqrt{1+w^0} & -\sqrt{1-w^0}\\
  0 & 0 & \sqrt{1-w^0} & \sqrt{1+w^0}\\
  \end{bmatrix}
 \end{equation}
 
 The action of these matrices on input $X=(x,0,0,0)^T$ produces 
 
 $$X'=(1/2)(\sqrt{1+w^1}\sqrt{1+w^0}x, \sqrt{1+w^1}\sqrt{1-w^0}x, \sqrt{1-w^1}\sqrt{1+w^0}x,\sqrt{1-w^1}\sqrt{1-w^0}x)^T$$
 
 Measurements on either qubit (with $Z_1 = Z \otimes I \otimes I^{d}$ or $Z_0 = I \otimes Z \otimes I^{d}$) on the joint system without affecting the distribution of either. The expectation value we want is $f_i(x) = X'^T Z_i X'$
 \begin{align}
  f_0(x)&=[(1+w^1)(1+w^0) - (1+w^1)(1-w^0) + (1-w^1)(1+w^0) - (1-w^1)(1-w^0)]x/4 \\
  &=(1+w_0)x-(1-w_0)x\\
  f_1(x)&=[(1+w^1)(1+w^0) + (1+w^1)(1-w^0) - (1-w^1)(1+w^0) - (1-w^1)(1-w^0)]x/4 \\
  &=(1+w_1)x-(1-w_1)x
 \end{align}
 meaning we can simultaneously encode two orthogonal outputs.
\end{proof}

\begin{theorem}[Readout qubits are independent]\label{app:thm:independent_readouts}
 A QNN with $d$ readout qubits receiving $x\in \mathbb{R}^N$, with $x_j>0$ for all $j$, can simultaneously encode $d$ perceptrons independently encoding $f(x)=w_i\cdot x$.
\end{theorem}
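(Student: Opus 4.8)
The plan is to generalise the two-qubit construction of \cref{lem:simul_perceptrons:2} to $d$ readout qubits by writing the circuit as a product of $d$ rotations, one attached to each readout qubit, and then proving that these rotations act on the readout register independently. The crucial structural observation is that each building block
\[
U_i=\frac{1}{\sqrt{2}}\begin{bmatrix}\sqrt{1+w^i} & -\sqrt{1-w^i}\\ \sqrt{1-w^i} & \sqrt{1+w^i}\end{bmatrix},
\qquad w^i=\mathrm{diag}\big((w_i)_1,\dots,(w_i)_N\big),
\]
acting on readout qubit $i$ and the data register (and trivially on the other readout qubits), is \emph{diagonal in the data basis}: because $w^i$ is diagonal, $U_i$ leaves every data basis state $\ket{j}$ fixed and merely rotates readout qubit $i$ by an angle set by $(w_i)_j$. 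Writing $P_j=\ket{j}\!\bra{j}$ for the data projectors, this is the operator-sum form $U_i=\sum_j P_j\otimes R_i^{(j)}$, where $R_i^{(j)}$ is a single-qubit rotation on readout qubit $i$.

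First I would amplitude-encode the input on square roots, $\ket{x}=\sum_j\sqrt{x_j}\,\ket{j}$, exactly as in \cref{app:lemma:QNN>Perceptron} (this is precisely where the positivity hypothesis $x_j>0$ is used). Next I would set the circuit to be $U=U_{d-1}\cdots U_1U_0$ and show the $U_i$ mutually commute: from the form above, $U_iU_\ell=\sum_j P_j\otimes R_i^{(j)}R_\ell^{(j)}$, and since $R_i^{(j)}$ and $R_\ell^{(j)}$ act on distinct qubits they commute, so $U_iU_\ell=U_\ell U_i$. Hence $U=\sum_j P_j\otimes\bigotimes_i R_i^{(j)}$, and on the input $\ket{0}^{\otimes d}\ket{j}$ the readout register lands in the product state $\bigotimes_i \tfrac1{\sqrt2}\big(\sqrt{1+(w_i)_j}\,\ket{0}_i+\sqrt{1-(w_i)_j}\,\ket{1}_i\big)$. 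The reduced state of readout qubit $i$ (given $j$) then satisfies $\langle Z_i\rangle_j=\tfrac12\big((1+(w_i)_j)-(1-(w_i)_j)\big)=(w_i)_j$, and averaging against the data weight $x_j$ gives $\langle Z_i\rangle=\sum_j x_j (w_i)_j=w_i\cdot x$ for each $i$.

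The main point to nail down carefully is the independence claim, namely that the $Z_i$ measurement on one readout qubit is unaffected by the others; this follows because, conditioned on the data register, the post-circuit state is a genuine tensor product across readout qubits, so the marginal of qubit $i$ never references the qubits $\neq i$. A clean alternative is a short induction on $d$ with \cref{lem:simul_perceptrons:2} as the base case, appending one more commuting rotation at each step. Two bookkeeping points remain: one should require $|(w_i)_j|\le 1$ so the elementwise square roots are real and each $U_i$ is genuinely unitary (otherwise rescale the weights, which preserves the classification exactly as in \cref{proof:qnn=tpp}); and unitarity of $U$ itself is immediate, being a product of unitaries. I expect the commutativity/independence argument to be the only genuinely non-routine step, with the remaining work being the same single-qubit expectation computed $d$ times.
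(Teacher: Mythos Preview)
Your proposal is correct and covers all the necessary ingredients, but it takes a genuinely different route from the paper's proof. The paper argues by explicit induction on $d$: with \cref{lem:simul_perceptrons:2} as the base case, it writes the $(d{+}1)$-qubit unitary in $2\times 2$ block form (the blocks being $\sqrt{1\pm w^{d+1}}\otimes I^{2^d}$ and the previous $U$ on the diagonal), then checks by direct block-matrix multiplication that (i) the expectations $f_i(x)$ for $i\le d$ are unchanged after appending the new qubit, and (ii) the new readout yields $w^{d+1}\cdot x$. Your argument instead exposes the \emph{reason} behind this independence: each $U_i$ is a data-controlled single-qubit rotation, hence diagonal in the data basis, so the $U_i$ commute and the post-circuit state is, conditioned on $\ket{j}$, a tensor product across readout qubits. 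The orthogonality of the data basis states then kills the cross terms in $\langle Z_i\rangle$ and the $\ell\ne i$ factors reduce to $1$. This structural view makes the ``independence'' in the theorem's title transparent, whereas the paper's inductive calculation verifies it without quite explaining it; conversely, the paper's approach is closer to a verifiable line-by-line computation and avoids having to articulate the controlled-rotation decomposition. Your side conditions ($x_j>0$ for the square-root encoding, $|(w_i)_j|\le 1$ for unitarity) match the paper's implicit assumptions.
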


\begin{proof}
 We will prove this by induction, starting from $d=2$ proved in \cref{lem:simul_perceptrons:2}.

 Consider a QNN acting on $d$ output qubits with input $\ket{0}^d\ket{x}$. This means there exists some unitary matrix $U$ of dimension $N2^d $ such that 
 \begin{equation} %\label{eq:QNN(x)}
  f_i(x;\theta) = \mathbbm{1}(\bra{x}\bra{0}^d U^\dag(\theta) Z_i U(\theta)\ket{0}^d\ket{x}),
 \end{equation}
 where $Z_i=I_2 \otimes I_2 \dots \otimes Z \otimes \dots I_2 \otimes I_2$. 

 Define a new operator,
 $$
 U_{d+1} = \frac{1}{\sqrt{2}}\begin{bmatrix}
  \sqrt{1+w^{d+1}}\otimes I^{2^d} & -\sqrt{1-w^{d+1}}\otimes I^{2^d}\\
  \sqrt{1-w^{d+1}}\otimes I^{2^d} & \sqrt{1+w^{d+1}}\otimes I^{2^d}\\
  \end{bmatrix}
 $$
 
 Adding a further qubit with $U_{d+1}$ does not affect the function for qubits $1$ through $d$
 \begin{align}
 f_i(x)&=
  \begin{bmatrix}
   x^\dag & 0 \\
  \end{bmatrix}
  U_{d+1}^\dag
  \begin{bmatrix}
   U^\dag & 0 \\
   0 & U^\dag \\
  \end{bmatrix}
  \begin{bmatrix}
   Z_i & 0 \\
   0 & Z_i \\
  \end{bmatrix}
  \begin{bmatrix}
   U & 0 \\
   0 & U \\
  \end{bmatrix}
  U_{d+1}
  \begin{bmatrix}
   x \\
   0 \\
  \end{bmatrix}\\
  &= 1/2(x^\dag \sqrt{1+w^{d+1}} U^\dag Z_i U \sqrt{1+w^{d+1}} x + x^\dag \sqrt{1-w^{d+1}} U^\dag Z_i U \sqrt{1-w^{d+1}} x)\\
  &=x^\dag U^\dag Z_i U x
 \end{align}
 this follows due to the blocky nature of $x$.
 Then, measuring the final $(d+1)$'th qubit is straightforward:
 \begin{align}
  f_{d+1}(x)&=
  \begin{bmatrix}
   x^\dag & 0 \\
  \end{bmatrix}
  U_{d+1}^\dag
  \begin{bmatrix}
   U^\dag & 0 \\
   0 & U^\dag \\
  \end{bmatrix}
  \begin{bmatrix}
   I^{(N2^d)} & 0 \\
   0 & -I^{(N2^d)} \\
  \end{bmatrix}
  \begin{bmatrix}
   U & 0 \\
   0 & U \\
  \end{bmatrix}
  U_{d+1}
  \begin{bmatrix}
   x \\
   0 \\
  \end{bmatrix}=x^\dag U_{d+1}^\dag Z_{d+1} U_{d+1} x\\
  &= 1/2(x^\dag \sqrt{1+w^{d+1}} U^\dag Z_i U \sqrt{1+w^{d+1}} x - x^\dag \sqrt{1-w^{d+1}} U^\dag Z_i U \sqrt{1-w^{d+1}} x)\\
  &=w^{d+1} \cdot x
 \end{align}
 Therefore, by induction, the statement holds.
\end{proof}

\subsection{DQNN expressivity conditions}\label{app:sec:dqnn_proofs}

\cref{app:thm:independent_readouts} proves that a QNN with $n$ input qubits and dd output qubits can express any function of the form $f_i(x)=w_{ij}x_j$ with $n$ input and $d$ output neurons, provided the data is real, normalised and in the all-positive quadrant, and $|w_{ij}|\leq 1$. This is like the action of the first linear layer in a FCN. By applying a non-linear function to this output (and an optional bias term), a QNN with dd output qubits with data $x$ satisfying the above conditions can express the output of the first hidden layer of an FCN on $x$. Encoding this output for a second QNN with 1 readout qubit means we can express any function a 1 hidden layer FCN can (again, subject to data and weight norm constraints). This means the DQNNs proposed (\cref{def:DQNN}) would satisfy universal approximation theorems (e.g.\ \citep{hornik1989multilayer}).

\begin{definition}[DQNNs]\label{def:DQNN}
 A DQNN is analogous to a Fully Connected Neural Network in the classical world. The first layer is a standard QNN, but with $p$ readout qubits. The second layer is a standard QNN with $q$ input qubits (the exact value will depend on the encoding used in the second layer), and one readout qubit.
 The data would be encoded as $x^{(L=0)}=\ket{0}^{p}\ket{x}$. The output of this layer would be a vector $z$ of dimension $p$ with components $z_i$ the expectation values of the $i$'th readout qubit. The output of the first layer is $[\langle q_0 \rangle, \dots, \langle q_p \rangle]$. A trainable (classical) bias term $b_i$ can be added to each of these, followed by a non-linearity $\sigma$, to give $x^{(L=1)}=[\sigma(\langle q_0 \rangle+b_0), \dots, \sigma(\langle q_p \rangle+b_p)]$
 This output can then be encoded for the second layer with some encoding method $\phi^{(L=1)}(x^{(L=1)})$, and the second layer of the DQNN classifies this input like a regular QNN.
 In our experiments, the nonlinearity $\sigma$ is thresholding at 0, and we use two encodings $\phi^{(L=1)}$ in the second layer
 \begin{enumerate}
  \item Amplitude -- for $p$ readout qubits in the first layer, $x^{(L=1)}$ is encoded onto $q=\lceil \log_2 p\rceil$ qubits in the second layer using amplitude encoding. We denote this a \dqnna.
  \item Basis -- for $p$ readout qubits, $x^{(L=1)}$ is encoded onto $p$ qubits in the second layer using basis encoding. We denote a DQNN that uses this \dqnnb.
 \end{enumerate}
\end{definition}

\begin{lemma}[\dqnnb for Boolean classification are fully expressive]\label{lem:DQNN:Amp-Basis}
 Consider a QNN $Q_1$ with $n$ output qubits and $\ceil{log_2(n)}$ input qubits acting on the $n$-dimensional Boolean dataset with amplitude encoding. Take a second QNN $Q_2$ with $n$ input qubits and 1 output qubit. Encode $\ket{1}$ on the $i$'th qubit of $Q_2$ if $f_i(x)>b_i$ where $f_i(x)$ is the expectation value of the $i$'th readout qubit of $Q_1$, and $b_i\in [-1,1]$ is some threshold parameter, and encode $\ket{0}$ otherwise.
 This system $Q_{12}$ is capable of representing any function on $\{0,1\}^n$.
\end{lemma}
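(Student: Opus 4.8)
The plan is to choose the parameters of $Q_{12}$ so that the first layer acts as the identity on $\{0,1\}^n$ -- re-emitting each input bit string as a basis-encoded state -- and the second layer then classifies these mutually orthogonal states arbitrarily. The whole argument reduces full expressivity of $Q_{12}$ to the (easy) full expressivity of a basis-encoded QNN on orthogonal data, once we know the intermediate layer can reconstruct $x$.

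First I would use \cref{app:thm:independent_readouts}, together with the remark following \cref{app:lemma:QNN>Perceptron} that extends it to inputs with $x_i\in\{0,1\}$, to have $Q_1$ compute on its $n$ readout qubits the $n$ independent perceptrons $f_i(x)=w_i\cdot x$ with the choice $w_i=e_i$, the $i$'th standard basis vector; this is admissible since $|e_i|\le 1$ componentwise. For amplitude-encoded Boolean data the amplitude-squared of component $j$ is $x_j/k$ with $k=\norm{x}_0$ the number of ones (using $x_j^2=x_j$ and $\norm{x}_2^2=k$), so the $i$'th readout returns $f_i(x)=x_i/k$. I would then recover $x$ by thresholding: because $1\le k\le n$, one has $f_i(x)=1/k\ge 1/n$ when $x_i=1$ and $f_i(x)=0$ when $x_i=0$, so the common threshold $b_i=1/(2n)\in[-1,1]$ makes the $i$'th intermediate qubit $\ket{1}$ exactly when $x_i=1$. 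Hence the basis-encoded input to $Q_2$ is precisely $\ket{x_1\cdots x_n}$, and distinct Boolean inputs map to distinct, mutually orthogonal computational basis states (the origin, which amplitude encoding cannot represent, is simply excluded).

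Third, I would invoke full expressivity of $Q_2$ on these orthogonal inputs. Writing $Q_2$ in the form of \cref{proof:tpp=qnn}, its pre-threshold output on a basis state $\ket{j}$ is the diagonal entry $A_{jj}$ of the Hermitian matrix $A=2a^\dag a-I$ of \cref{eqn:A}. Taking $a$ diagonal with entries in $[0,1]$ gives $A_{jj}=2|a_{jj}|^2-1$, which by \cref{lemma:arbitrary_matrix} is an admissible QNN (singular values of $a$ bounded by $1$) and can be set independently to any value in $[-1,1]$, in particular to any desired sign. This realizes any target labelling of the $2^n-1$ basis states, and composing the two layers shows $Q_{12}$ expresses any $f:\{0,1\}^n\to\{0,1\}$.

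The main obstacle I anticipate is the bookkeeping around normalization in the second step: amplitude encoding rescales each input by its own factor $1/\norm{x}_2$, so the first-layer responses are not the raw bits but bits scaled by the input-dependent factor $1/k$. The real content of the argument is checking that a single threshold $b_i$ separates the value $0$ from the entire set $\{1/k:1\le k\le n\}$ of nonzero responses; the uniform bound $1/k\ge 1/n$ is what makes $b_i=1/(2n)$ work for every input simultaneously, which is precisely what guarantees injectivity of the induced map and thereby reduces the claim to the standard full expressivity of basis-encoded QNNs on orthogonal data.
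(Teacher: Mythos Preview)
Your proof is correct and follows essentially the same strategy as the paper: choose $Q_1$'s readouts to be the $n$ coordinate-projection perceptrons $w_i=e_i$ via \cref{app:thm:independent_readouts}, threshold to recover the bit string in basis encoding, and then appeal to full expressivity of a basis-encoded QNN on orthogonal inputs. The paper's version is terser---it simply sets $b_i=0$ (which works under the strict inequality $f_i(x)>b_i$ since $f_i(x)\in\{0,1/k\}$) and asserts $Q_2$'s full expressivity without the explicit diagonal-$a$ construction---whereas you give a more careful treatment of the normalization factor $1/k$ and supply an explicit argument for the second layer; but the underlying idea is identical.
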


\begin{proof}
 By \cref{app:thm:independent_readouts} we can independently encode $n$ perceptrons with weights $[1,0,\dots]$, $[0,1,0,\dots]$, \dots, $[0,\dots,0,1]$. Let $b_i=0$. $Q_1$ therefore acts as a map from the amplitude encoding of $\{0,1\}^n$ to basis, so $Q_2$ will be fully expressive. Therefore, chaining two networks $Q_1$ and $Q_2$ together in this way can produce a QNN $Q_{12}$ that is fully expressive, even if $Q_1$ is not.
\end{proof}

\begin{lemma}[\dqnna for Boolean classification are fully expressive]\label{lem:DQNN:Amp-Amp}
 Consider a QNN $Q_1$ with $2^n$ output qubits and $\ceil{log_2(n)}$ input qubits acting on the $n$-dimensional Boolean dataset with amplitude encoding. Take a second QNN $Q_2$ with $n$ input qubits and 1 output qubit.
 Let $f_i(x)$ be the expectation value of the $i$'th readout qubit of $Q_1$ and $b_i\in [-1,1]$ some threshold parameter.
 Encode $\mathbbm{1}(f(x)>b)\in\mathbb{R^{2^n}}$ on the $n$ input qubits of $Q_2$ using amplitude encoding
 This system $Q_{12}$ is capable of representing any function on $\{0,1\}^n$.
\end{lemma}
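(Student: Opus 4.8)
The plan is to mirror the proof of \cref{lem:DQNN:Amp-Basis} for the \dqnnb, but to replace the linear map onto basis encoding by a \emph{one-hot} map. This change is forced on us: in the \dqnna the first-layer output is re-encoded with \emph{amplitude} encoding rather than basis encoding, so simply passing $x$ through unchanged would reproduce the original (non-orthogonal) amplitude encoding and gain nothing. The key idea is therefore to use the first layer to collapse each distinct input onto a distinct one-hot vector, whose amplitude encoding is a computational basis state; distinct inputs then land on mutually orthogonal states, where the second layer is trivially fully expressive.

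First I would invoke \cref{app:thm:independent_readouts}, which lets the first layer $Q_1$ with $2^n$ readout qubits compute $2^n$ independent linear functionals $f_i(\tilde x)=w_i\cdot\tilde x$ of the amplitude-encoded input $\tilde x=x/\norm{x}$, subject to $|w_{ij}|\le 1$ and the data being normalised and in the positive orthant --- both of which hold for amplitude-encoded Boolean data, the origin having been dropped. Label the $2^n-1$ distinct inputs as $\tilde x^{(1)},\dots,\tilde x^{(2^n-1)}$ (we have one readout qubit to spare) and set the $i$-th weight vector equal to the $i$-th input, $w_i=\tilde x^{(i)}$, which is admissible since its entries lie in $[0,1]$. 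Then $f_i(\tilde x^{(j)})=\tilde x^{(i)}\cdot\tilde x^{(j)}$ equals $1$ when $j=i$ and, by the Cauchy--Schwarz inequality applied to distinct unit vectors, is \emph{strictly} less than $1$ when $j\neq i$. Choosing each classical threshold $b_i$ in the nonempty interval $\bigl(\max_{j\neq i}\tilde x^{(i)}\cdot\tilde x^{(j)},\,1\bigr)\subset[-1,1]$ makes the $i$-th readout fire on input $i$ alone, so the thresholded output $\mathbbm{1}(f(x)>b)$ is the one-hot vector $e_i\in\mathbb{R}^{2^n}$.

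Finally, amplitude encoding of the one-hot vector $e_i$ onto the $n$ input qubits of $Q_2$ yields the computational basis state $\ket{i}$, so distinct inputs are mapped to orthogonal states. As in the \dqnnb case, a QNN acting on orthogonally encoded data can realise any binary labelling, so $Q_2$ --- and hence the composite $Q_{12}$ --- expresses every function on $\{0,1\}^n$. The main obstacle, and the only place this argument differs substantively from \cref{lem:DQNN:Amp-Basis}, is the separation step: establishing that each amplitude-encoded input can be singled out from all the others by an admissible halfspace. This rests on the strict convexity of the sphere (equivalently, strict Cauchy--Schwarz), which is what guarantees the thresholds can be taken in $[-1,1]$ while respecting the $|w_{ij}|\le 1$ constraint inherited from \cref{app:thm:independent_readouts}.
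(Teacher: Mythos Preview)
Your proof is correct and takes a genuinely different route from the paper's. The paper argues abstractly: by \cref{app:thm:independent_readouts} the first layer realises a width-$2^n$ linear layer and the second layer realises a perceptron of input width $2^n$, so the composite $Q_{12}$ is a 1-hidden-layer FCN with step activations; full expressivity on $\{0,1\}^n$ is then obtained by invoking an external result (Lemma~5.1 of \citep{mingard2019neural}), with a one-line remark that the normalisation forced by amplitude re-encoding is absorbed by fixing the bias terms.

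Your one-hot construction is more elementary and fully self-contained. Rather than appealing to generic FCN universality, you exhibit explicit first-layer weights $w_i=\tilde x^{(i)}$ and thresholds so that each input is sent to a distinct standard basis vector, and then observe that amplitude encoding of a one-hot vector is a computational basis state, reducing the second layer to the orthogonal-input case already used in \cref{lem:DQNN:Amp-Basis}. The strict Cauchy--Schwarz step is a clean way to certify that the thresholds exist inside $[-1,1]$ while respecting $|w_{ij}|\le 1$. What the paper's route buys is a direct conceptual link to classical FCN universality; what yours buys is an explicit construction with no external citation, and one in which the amplitude-encoding normalisation is automatically harmless because one-hot vectors already have unit norm.
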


\begin{proof}
 By \cref{app:thm:independent_readouts} we can independently encode $2^n$ perceptrons -- equivalent to a fully connected layer of width $2^n$. The final layer can represent a perceptron of input width $2^n$.
 Lemma 5.1 in \citep{mingard2019neural} shows that an FCN acting on unnormalised input data is fully expressive. Normalising the inputs requires us to fix the bias terms $b_i=1$.
\end{proof}

\FloatBarrier

\section{QNN data}

In this appendix, we show data complementary to \cref{fig:qnn_posterior,fig:dqnn_posterior}. As QNNs are kernel methods, their posterior is almost exclusively determined by their prior and likelihood (with training details playing a more limited role, unlike in the case of DNNs). While the posterior contains most of the relevant information about their inductive biases, studying the prior still can be instructive (see e.g. \citet{valle2018deep,mingard2023deep}).

\cref{fig:qnn:prior:haar,fig:dqnn:prior} show the priors of QNNs with different encodings and the two types of DQNN, respectively. In each figure, (a) shows the $n=5$ Boolean system, and (b) the $n=7$ system. For $n=5$ there are $2^{32}\sim 10^9$ functions, meaning an appreciable fraction of those can be found in $10^8$ samples. In contrast, there are $2^{128}\sim 10^{40}$ functions for the $n=7$ system, so the vast majority of these will never be found. In each of the four subfigures, we produce the following plots. 

\textbf{The first panel} shows the prior probability $P(f)$ of a function $f$ plotted against their rank, $R(f)$. The prior $P(f)$ is the probability of randomly initialising to $f$, where parameters are sampled from sensible initialisation distributions (e.g.\ for QNNs, the Unitary matrix is sampled from the Haar measure). The most probable function has rank 1, the second most rank 2, and so on. 
When the relation $P(f)\propto R(f)^{-\alpha}$ has $\alpha \neq 1$, \citet{ridout2024bounds} argue that a learning agent cannot learn as the number of training datapoints tends to infinity. \cref{fig:qnn:prior:haar} indicates that all QNNs (and the perceptron) will fail to generalise for some large $n$ (assuming $\alpha$ is constant as $n$ increases). \cref{fig:dqnn:prior} shows that the FCN does satisfy $P(f)\propto Rank(f)^{-1}$. \dqnna has the largest value of $\alpha$, but it is still less than 1.

\textbf{The second panel} shows a scatter plot of $P(f)$ v.s.\ the Lempel-Ziv complexity of $f$ (see \citep{mingard2023deep} for details). In \cref{fig:qnn:prior:haar}, we can see that the perceptron and QNN on the amplitude encoding are much more strongly biased towards simple functions than the QNN acting on the ZZ, RT and basis encodings. \cref{fig:dqnn:prior} shows that the \dqnna has a similar, but slightly weaker bias than the FCN, with the \dqnnb being substantially weaker.

\textbf{The third panel} shows $P(LZ(f))$, the probability that the model initialises to a function with some LZ complexity v.s.\ that complexity $LZ(f)$ (i.e.\ created by coarse-graining functions in the second panel by complexity). It is another way of showing which encodings and learning agents have the strongest simplicity bias.

\textbf{The fourth and final panel} shows the $LZ(f)$ v.s. $T(f)$ (where $T(f)$ is equal to min(number 1s, number 0s) in $f$) of the functions found during sampling. This is best interpreted for the $n=7$ datasets, where we can see that the FCN manages to find functions with low LZ complexity but high $T$ (simple class-balanced functions). The perceptron, amplitude-encoded QNN and \dqnna all find some of those functions. However, all other encodings only find simple class imbalanced functions, or complex functions with high class balance. This type of simplicity bias towards functions with low class balance is relatively trivial.

\begin{figure}[t]
 \centering
 \begin{subfigure}[b]{1.0\textwidth}
    \centering
	\includegraphics[width=\columnwidth]{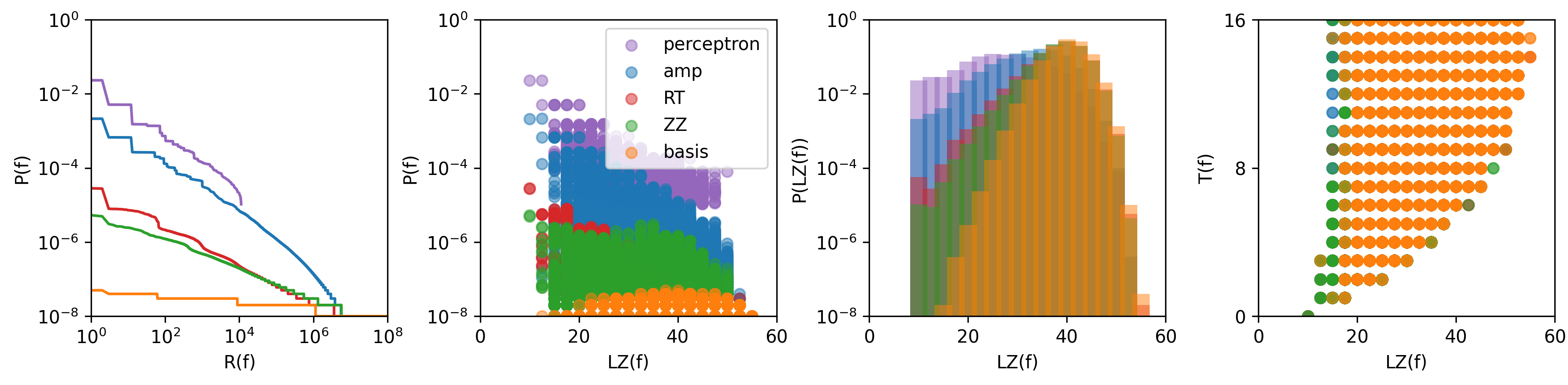}
  \caption{n=5}
  \label{fig:qnn:prior:haar:n=5}
 \end{subfigure}
 \begin{subfigure}[b]{1.0\textwidth}
    \centering
	\includegraphics[width=\columnwidth]{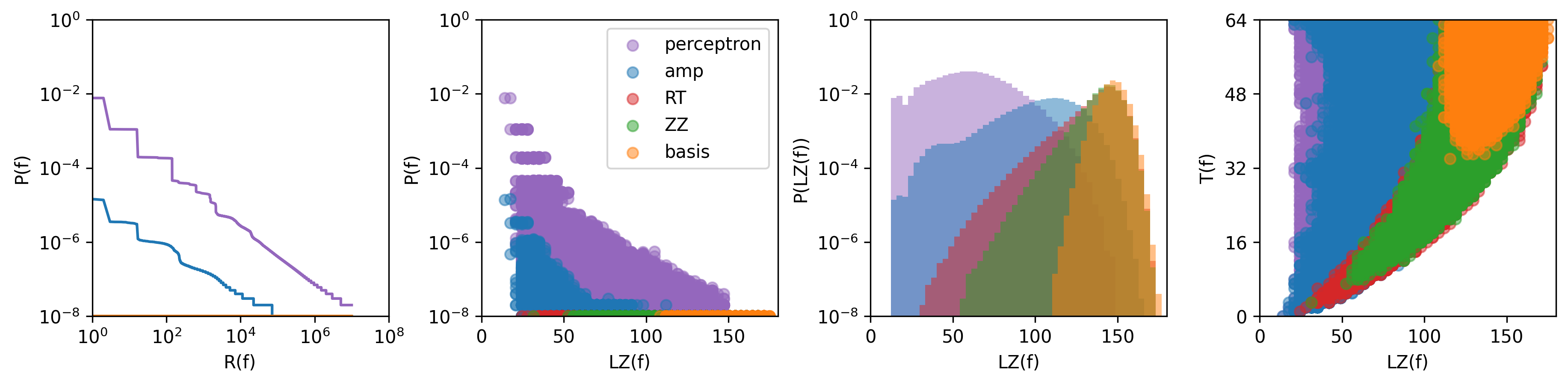}
  \caption{n=7}
  \label{fig:qnn:prior:haar:n=7}
 \end{subfigure}
 \caption{\textbf{QNN priors for the $\mathbf{n=5}$ and $\mathbf{n=7}$ Boolean dataset.} The encodings are the same as described in \cref{fig:qnn_posterior}, with the RT and ZZ encodings using $n$ input qubits. $T(f)$ denotes the minimum number of 0s or 1s in the function (unnormalised class balance).
 The prior probability $P(f)$ was approximated by sampling a unitary matrix of the appropriate dimension ($2\times 2^{\lceil \log(n)\rceil}$ for amplitude encoding, otherwise $2^{2n}$) from the Haar measure, and evaluating the function expressed by the QNN on all $2^n$ inputs. $10^8$ samples were taken for each learning system.
 The perceptron and QNN on amplitude encoding have the most similar inductive biases, with the perceptron still being noticeably stronger. This discrepancy largely vanishes when a more restrictive set of gates is used (see \cref{fig:qnn:prior:n=5:actual}). The QNN on the RT and ZZ encodings are significantly less biased towards simple functions than the QNN on amplitude encoding. The first plots in each row show $P(f)$ v.s.\ rank for functions ordered by the value of $P(f)$.
 Note that $P(f)$ for basis encoding is not constant due to finite-size effects (but is straightforwardly constant). See \cref{fig:dqnn:prior} for the FCN, \dqnna and \dqnnb priors. }\label{fig:qnn:prior:haar}
\end{figure}

\begin{figure}[t]
 \centering
 \begin{subfigure}[b]{1.0\textwidth}
 \centering
	\includegraphics[width=\columnwidth]{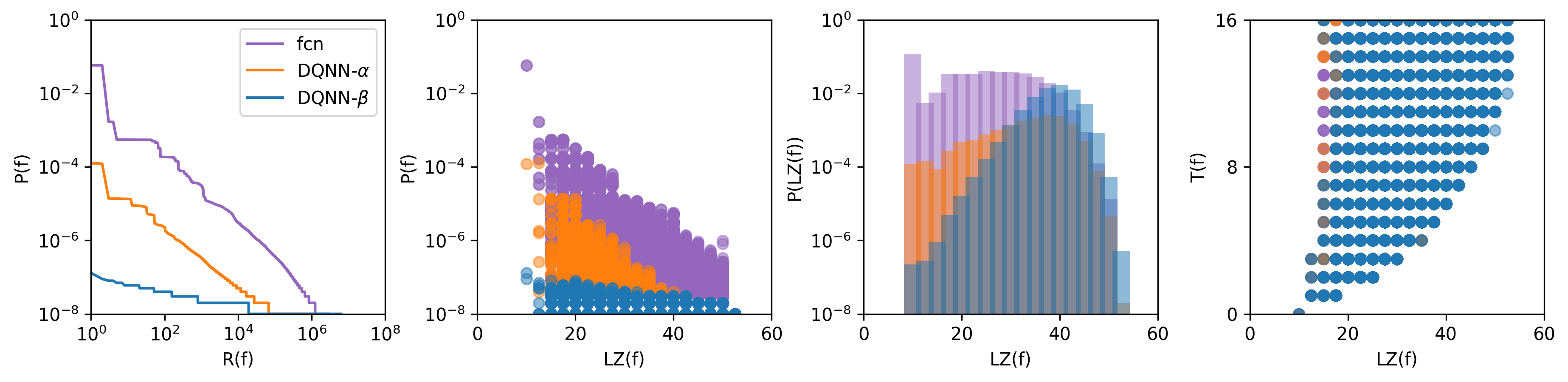}
  \caption{n=5}
  \label{fig:dqnn:prior:n=5}
 \end{subfigure}
 \begin{subfigure}[b]{1.0\textwidth}
 \centering
	\includegraphics[width=\columnwidth]{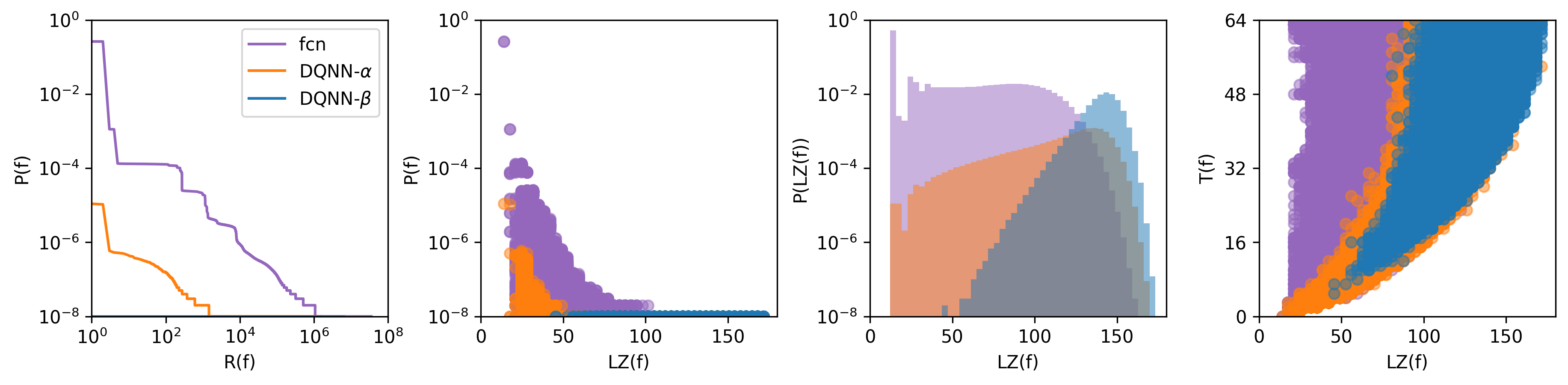}
  \caption{n=7}
  \label{fig:dqnn:prior:n=7}
 \end{subfigure}
 \caption{\textbf{FCN and DQNN priors for the $\mathbf{n=5}$ and $\mathbf{n=7}$ Boolean dataset.} The setup of the DQNNs is exactly as described in \cref{fig:dqnn_posterior}. 
 The prior probability $P(f)$ was approximated by sampling a unitary matrix for each layer of the DQNN from the Haar measure, and evaluating the function expressed by the DQNN on all $2^n$ inputs (for the FCN, weights and biases were randomly initialised using PyTorch defaults). $10^8$ samples were taken for each learning system.
 $T(f)$ denotes the minimum number of 0s or 1s in the function (unnormalised class balance).
 The first plots in each row show $P(f)$ v.s.\ rank for functions ordered by the value of $P(f)$.
 We compare the 2 types of DQNNs to a 1-layer FCN with ReLU activations, and see that the FCN is more biased towards simple functions than the \dqnna, which is in turn more biased than the \dqnnb. The third panel in each subfigure highlights how significant the difference in inductive bias is between the FCN and \dqnna. It is possible that a \dqnna with a restricted set of gates would perform similarly to the FCN (as observed for the QNN and perceptron in \cref{fig:qnn:prior:n=5:actual}).
 }\label{fig:dqnn:prior}
\end{figure}

\FloatBarrier

\section{Expressivity and inductive bias of ansatz expressing restricted unitaries}\label{sec:charlie}

While the main body of this paper has focused on the expressivity and inductive bias of QNNs modelled as arbitrary unitaries, in practise quantum machine learning researchers generally use parameterised ansatz that can only express a much more restricted class of unitaries. While these ansatz have lower expressivity than the class of arbitrary unitaries, they may also have improved inductive bias. In this section, we investigate the inductive bias of several common ansatz over the Boolean dataset. 

In \cref{fig:qnn:prior:n=5:actual} we plot the prior for the $n=5$ Boolean system for a QNN which can express any Boolean function, but not any arbitrary unitary. It takes the gate structure from \citep{ngoc2020tunable} but uses controlled-U gates in place of controlled-NOT gates. We used Qiskit to instantiate the network.
Qualitatively, its prior is similar to the prior of the QNNs that can express arbitrary unitaries (\cref{fig:qnn:prior:haar}). 

\cref{fig:qnn:posterior:n=5:actual} shows the test error for $n=5$ Boolean data for the same architecture (see \cref{fig:qnn_posterior} the $n=7$ Boolean system using the TPP correspondence). We use the same basic 100 target functions from \cref{fig:qnn_posterior}, but cut down to length 32 (as the $n=5$ system has 32 inputs rather than 128).
Many of the functions are missing for each encoding method, as they did not converge to 0 training error in the maximum number of epochs we used (200). 

Next, we look at the five most expressive and five least expressive ansatz from Figure 3. in \citet{Sim_2019}, to determine how the ansatz's inductive bias is affected by the expressivity. We use Boolean inputs of dimension $5$, encoded with both amplitude and basis encoding that uses $n=3$ and $n=5$ qubits respectively. By simulating the ansatz with TorchQuantum \cite{hanruiwang2022quantumnas} and randomly sampling the parameters from $[-\pi, \pi]$ uniformly, we can characterise the probability of expressing different functions on the Boolean dataset.

\begin{figure}[t]
 \centering
 \includegraphics[width=\columnwidth]{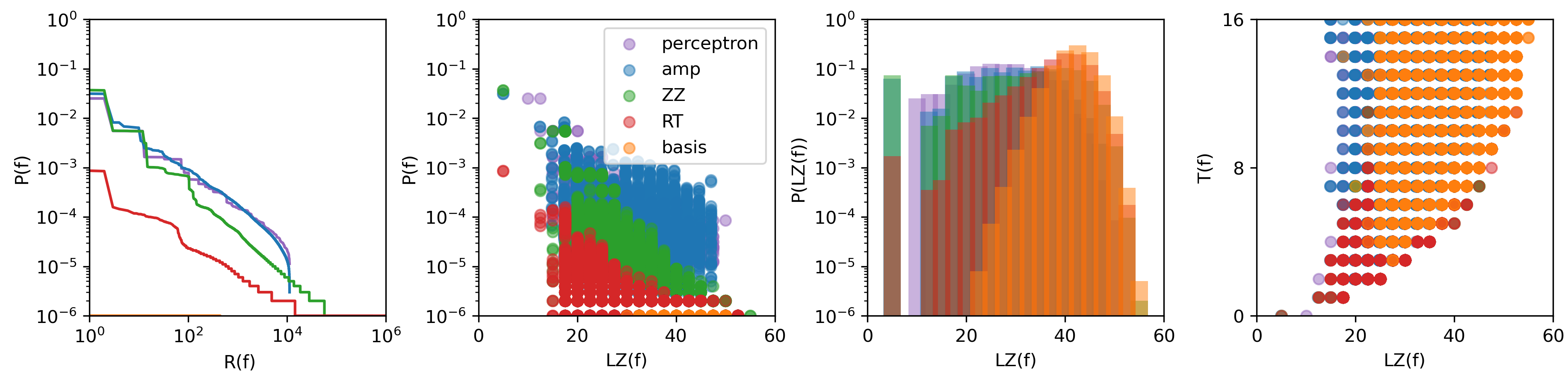}
 \caption{\textbf{QNN priors for $\mathbf{n=5}$ with restricted gates.} This experiment shows data for an implementation of a QNN (using Qiskit) with the gate structure described in \citep{ngoc2020tunable}, but replacing their controlled-not gates with controlled-U. Parameters were sampled uniformly from $[-\pi,\pi]$. $T(f)$ denotes the minimum number of 0s or 1s in the function (unnormalised class balance). $10^8$ samples were taken for each model to approximate $P(f)$. 
 See \cref{fig:qnn:prior:haar} for a description of $P(f)$ and associated quantities.  The first plot in the  row shows $P(f)$ v.s.\ rank for functions ordered by the value of $P(f)$.
 The plots are qualitatively similar to those in \cref{fig:qnn:prior:haar:n=5}, but there are a number of important differences. The first panel shows steeper power laws for each encoding, indicating a stronger bias towards simple functions. The difference is particularly noticeable for the ZZ encoding. $P(f)$ for the QNN on amplitude encoding and the perceptron appears almost identical. Predictably, basis encoding shows no inductive bias.
 }\label{fig:qnn:prior:n=5:actual}
\end{figure}

\begin{figure}[t]
 \centering
\includegraphics[width=\columnwidth]{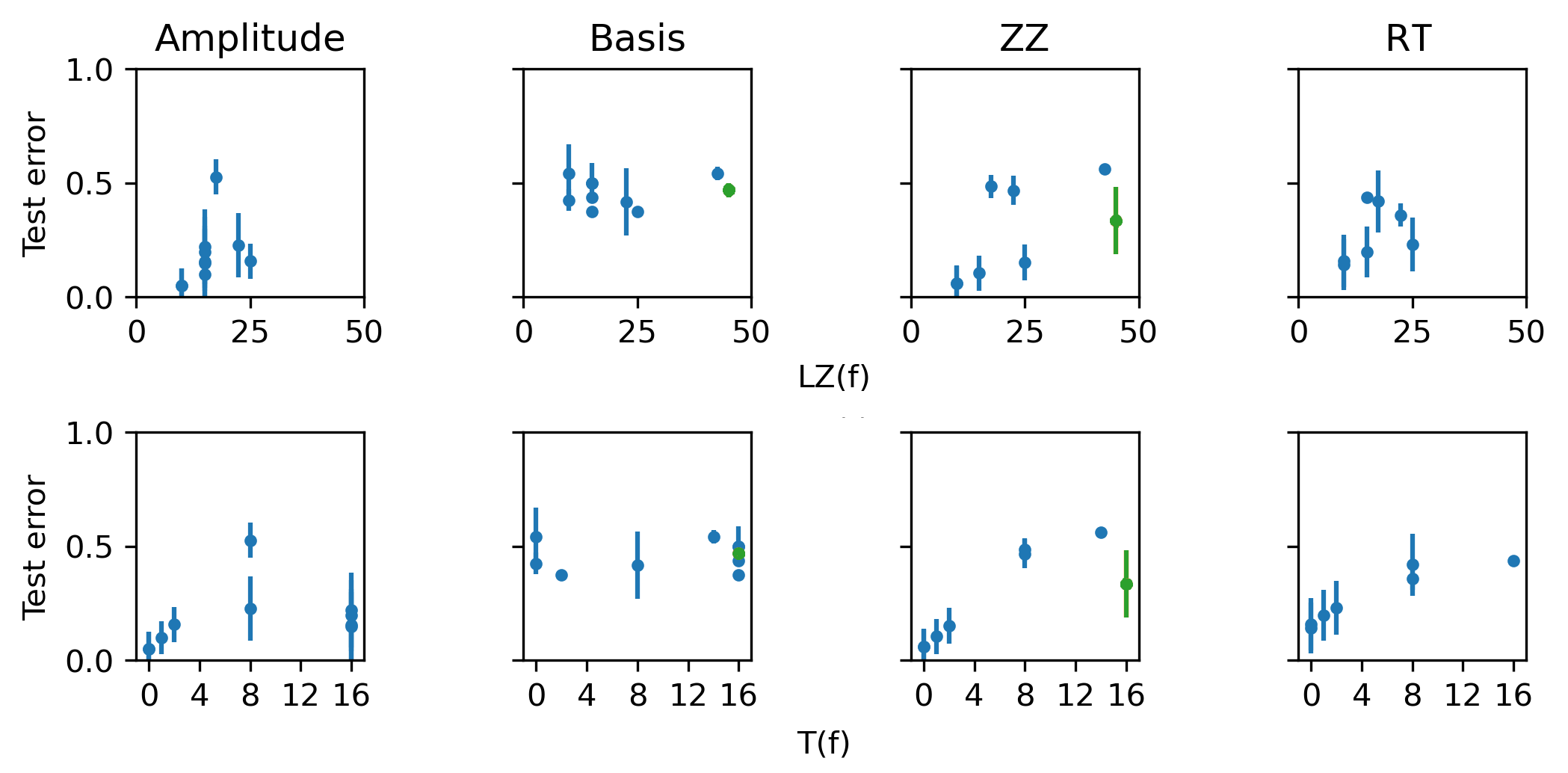}
 \caption{\textbf{Trained QNNs using Qiskit for $\mathbf{n=5}$.} We train on a subset of the functions used in \cref{fig:qnn_posterior} (cut off at 32 since $n=5$). We use a quantum circuit from \citep{ngoc2020tunable}, but with controlled-U gates in place of their controlled-NOT gates. $T(f)$ denotes the minimum number of 0s or 1s in the function, whichever is lowest (unnormalised class balance). 
 Only functions whose training converged to zero training error were kept, which turned out to be 
  a small subset of the total number of functions. QNNs suffer from optimisation problems such as barren plateaus. This means that one cannot easily distinguish between cases where a QNN cannot reach zero error because it fundamentally can't express a function, and cases where it cannot reach zero error due to optimisation problems. This issue is one reason why a strategy such as mapping to the TPP, which is much easier to train,  is needed in order to reliably probe questions relating to generalisation  and expressivity in QNNs.   Although the issues around QNN optimisation make it difficult to draw firm conclusions,  we do observe some parallels between the  performance of a Qiskit trained QNN on these functions and the TPP on the $n=7$ Boolean dataset in \cref{fig:qnn_posterior}. For example, no generalisation occurs for basis encoding and the ZZ and RT encodings appear biased towards functions with low class balance.  The ZZ encoding performs comparatively well on parity function, but does not reach zero error as it does with the TPP for $n=7$.   The QNN with amplitude encoding shows what may potentially be a more interesting inductive bias, for example,  it generalises relatively well on $f=1010\dots$ which has high class-balance, but low LZ complexity.  Nevertheless,  optimisation problems mean that one cannot draw reliable conclusion based on this data alone.
 }
 \label{fig:qnn:posterior:n=5:actual}
\end{figure}

\begin{figure}[t]
 \centering
 \begin{subfigure}[b]{1.0\textwidth}
	\includegraphics[width=\columnwidth]{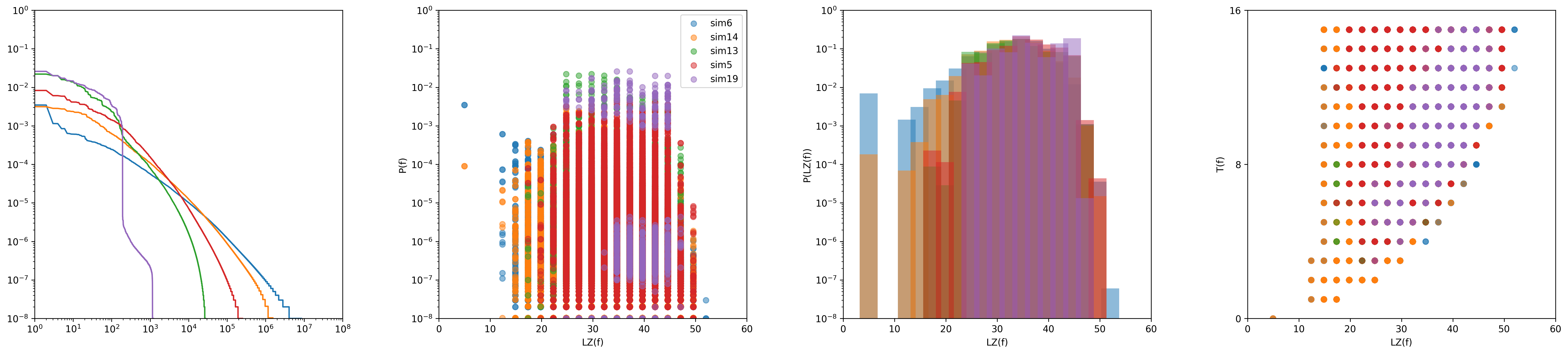}
  \caption{Amplitude encoding, most expressive}
  \label{fig:qnn:ansatz:amp:most}
 \end{subfigure}
 \begin{subfigure}[b]{1.0\textwidth}
	\includegraphics[width=\columnwidth]{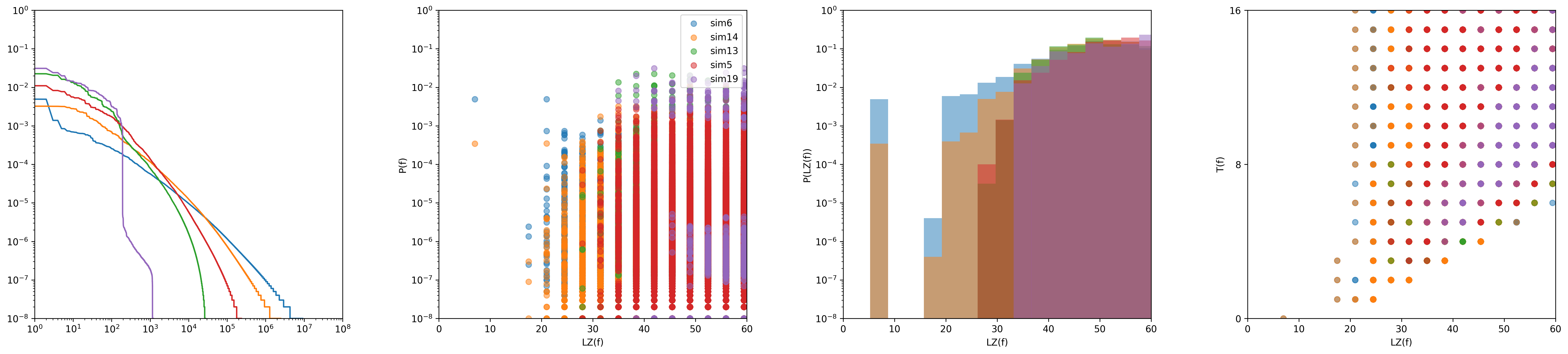}
  \caption{Basis encoding, most expressive}
  \label{fig:qnn:ansatz:basis:most}
 \end{subfigure}
 \begin{subfigure}[b]{1.0\textwidth}
 \includegraphics[width=\columnwidth]{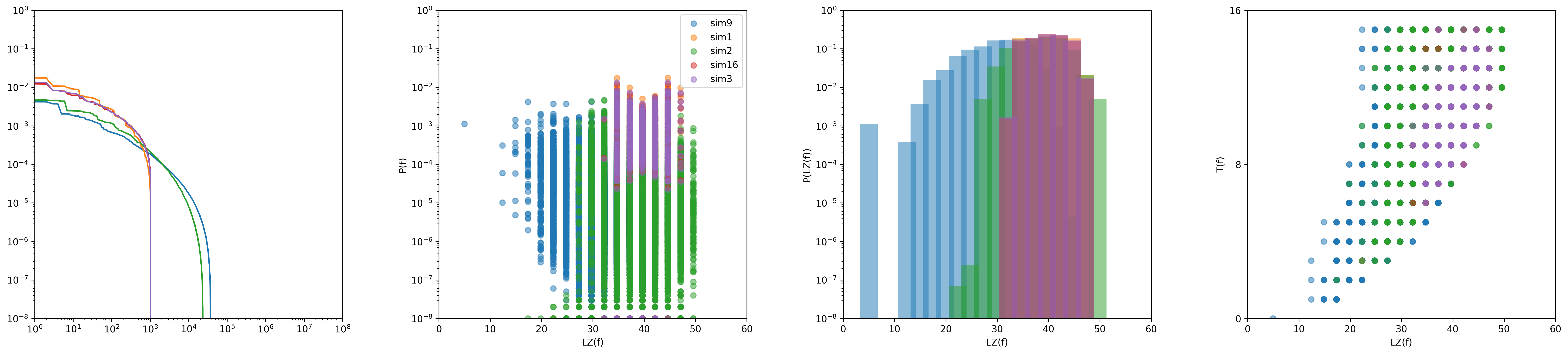}
  \caption{Amplitude encoding, least expressive}
  \label{fig:qnn:ansatz:amp:least}
 \end{subfigure}
 \begin{subfigure}[b]{1.0\textwidth}
 \includegraphics[width=\columnwidth]{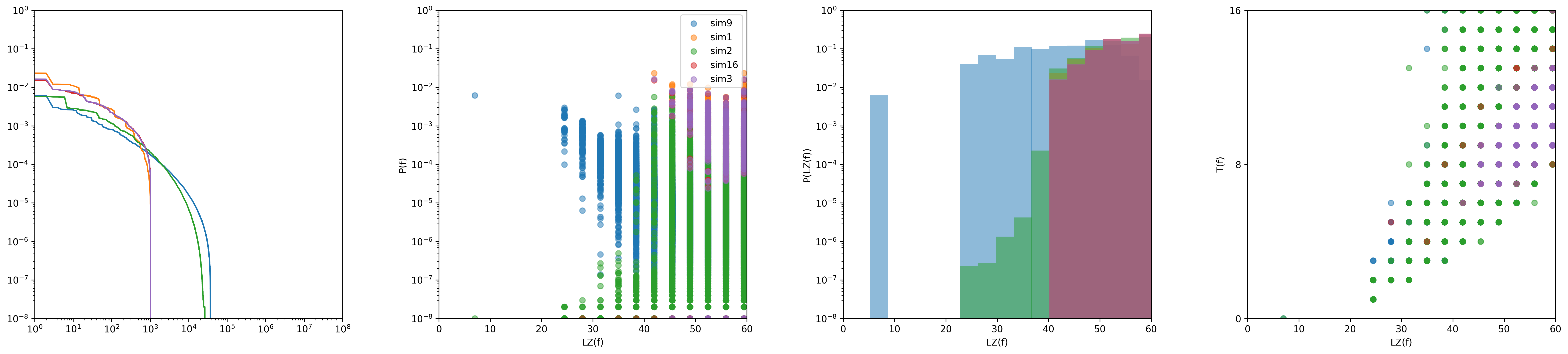}
  \caption{Basis encoding, least expressive}
  \label{fig:qnn:ansatz:basis:least}
 \end{subfigure}
 \caption{\textbf{QNN priors for the $\mathbf{n=5}$ Boolean dataset with restricted ansatze.} The priors $P(f)$ are calculated by randomly sampling rotation parameters from $[-\pi, \pi]$ for QNNs with gate structures from \citet{Sim_2019}. 
 % more detail
 $T(f)$ denotes the minimum number of 0s or 1s in the function (unnormalised class balance).}\label{fig:qnn:ansatz}
\end{figure}

We can see in \cref{fig:qnn:ansatz:amp:least} and \cref{fig:qnn:ansatz:basis:least} that the least expressive ansatzes have a strong bias towards a small subset of functions, but cannot express the vast majority of the $\sim 10^9$ possible functions on the $n=5$ Boolean dataset, even when using basis encoding. Any functions past the steep decline in probability in these figures are not expressible by the restricted ansatze. The bias is towards simple functions, but basis encoding misses some of these simple functions, such as 1-local parity ($f=010101\dots$).

With amplitude encoding, the most expressive ansatze (see \cref{fig:qnn:ansatz:amp:most}) have a similar inductive bias to the space of all unitaries, as previously seen in \cref{fig:qnn:prior:haar:n=5}. This is a relatively strong inductive bias towards simple functions, but these ansatze will suffer from more limited expressivity than the space of all unitaries.

Using the most expressive ansatze may provide some advantage in the case of basis encoding, as it induces a strong inductive bias not present when using the space of all unitaries (see \cref{fig:qnn:ansatz:basis:most} and \cref{fig:qnn:prior:haar:n=5}). However, all inputs to this model are still orthogonal in Hilbert space, and applying the model to an input amounts to selecting one of the columns of the $2^n \times 2^n$ unitary matrix. Any correlations in the output will entirely reflect correlations in the model parameters. In order to ensure that these correlations faithfully reflect the correlations in the input data, the basis encoding must be done so that the encoding of similar points corresponds to correlated columns in the weight matrix, which requires human intervention.

In all of these cases, the number of parameters of the ansatze scale linearly with the number of qubits, and as the input space becomes larger the expressivity over the exponential Hilbert space must necessarily decrease.

\FloatBarrier

\begin{figure}[H]
  \centering
  \begin{subfigure}[t]{0.24\textwidth}
   \centering
   \includegraphics[width=\textwidth]{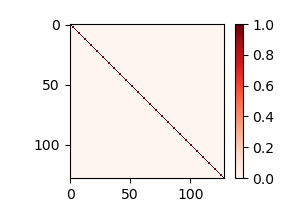}
   \caption{Basis}
   \label{fig:kernel_e0}
  \end{subfigure}
  \begin{subfigure}[t]{0.24\textwidth}
   \centering
   \includegraphics[width=\textwidth]{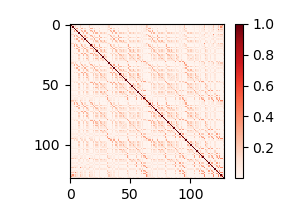}
   \caption{ZZ ($2^n$) Feature Map}
   \label{fig:kernel_e1}
  \end{subfigure}
  \begin{subfigure}[t]{0.24\textwidth}
   \centering
   \includegraphics[width=\textwidth]{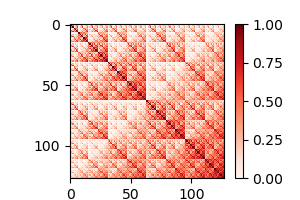}
   \caption{Amplitude}
   \label{fig:kernel_e4}
  \end{subfigure}
  \begin{subfigure}[t]{0.24\textwidth}
   \centering
   \includegraphics[width=\textwidth]{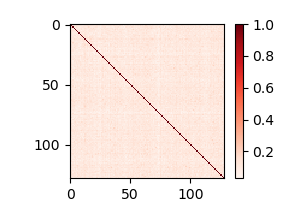}
   \caption{RT ($2^n$)}
   \label{fig:kernel_e3}
  \end{subfigure}

  \begin{subfigure}[t]{0.24\textwidth}
   \centering
   \includegraphics[width=\textwidth]{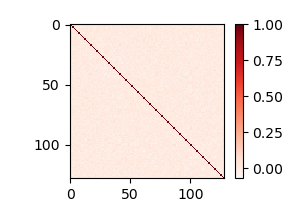}
   \caption{Basis (Empirical)}
   \label{fig:emp:kernel_e0}
  \end{subfigure}
  \begin{subfigure}[t]{0.24\textwidth}
   \centering
   \includegraphics[width=\textwidth]{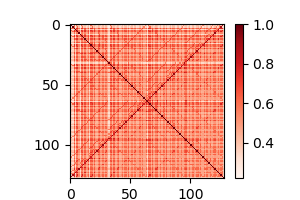}
   \caption{ZZ ($2^n$) Feature Map (Empirical)}
   \label{fig:emp:kernel_e1}
  \end{subfigure}
  \begin{subfigure}[t]{0.24\textwidth}
   \centering
   \includegraphics[width=\textwidth]{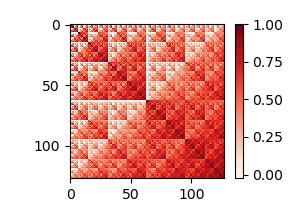}
   \caption{Amplitude (Empirical)}
   \label{fig:emp:kernel_e3}
  \end{subfigure}
  \begin{subfigure}[t]{0.24\textwidth}
   \centering
   \includegraphics[width=\textwidth]{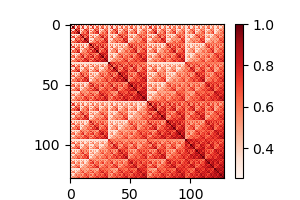}
   \caption{FCN}
   \label{fig:emp:kernel_e4}
  \end{subfigure}
  \caption{ 
  \textbf{QNN kernel visualisations.} with different encodings of the Boolean system compared to the FCN. The top row contains analytic kernels when $U(\theta)$ is sampled from the Haar measure. The bottom row shows the empirical kernel $K(x_i,x_j)=\mathbb{E}_{\theta}\left[ f(x_i)f(x_j)\right]$ from QNNs implemented with Qiskit (that use the gate structure from \citep{ngoc2020tunable} but with Controlled-U gates in place of controlled-NOT gates) which are unable to express every possible $U(\theta)$. The basis and amplitude encoding look very similar in both the theoretical and empirical cases, but the empirical ZZ kernel is much more strongly biased than the analytic one.}\label{app:subsec:kernel_visualise}
\end{figure}

\section{Quantum Kernels}\label{sec:quantum_kernels}

A kernel function $k$ maps n-dimensional inputs $x_i, x_j$ into a higher dimensional space $k(x_i, x_j) = \langle f(x_i), f(x_j) \rangle$, where $f$ is a map from $n$-dimension to $m$ dimension space. A kernel matrix $K$ can be constructed from all the input data: $K_{ij} = k(x_i,x_j)$.
A quantum feature map $\phi(x)$ maps classical data, represented as a vector $x$, to a quantum Hilbert space and the Kernel matrix for a QNN is $K_{ij} = \left| \langle \phi^\dagger(x_j)| \phi(x_i) \rangle \right|^{2}$.
See \cref{app:subsec:kernel_visualise} for visualisations of the kernels for QNNs on the Boolean dataset.

As shown in \cref{sec:introduction}, QNNs are linear models with a specific embedding $\phi$ (feature map). Kernel methods can be used to solve linear models with non-linear embeddings, which have been widely studied in the context of Support vector machines (SVM)s \citep{scholkopf2002learning} and more recently in infinite-width DNNs \citep{jacot2018neural, canatar2021spectral}. In this section, we briefly review kernel methods and rephrase the arguments regarding the expressivity and inductive bias in the main text in the language of kernels. 

\subsection{Kernel methods}
A kernel method can be described as a linear classifier with a feature map (embedding), a map from the raw inputs (e.g. abstract Boolean dataset) to typically higher dimensional embedding space (e.g. Hilbert Space). In short, solving kernel regression with $K(x^1,x^2) = \Phi^T(x^1)\Phi(x^2)$ is equivalent to solving a linear regression on $x^{(i)}=\phi({x'}^{(i)})$. Thus, any solution of kernel regression can be expressed as 
\begin{equation}\label{eq:linear_model}
f(x;w) = \sum_{k=1}^p w_k\Phi_k(x'),
\end{equation}
where $w \in \mathbb{R}^p$ are the parameters. The kernel $K(x^1,x^2)$ is the inner product between $\Phi(x^1)$ and $\Phi(x^2)$ in the embedding space, equivalently the space spanned by $[\Phi_1, \Phi_2, \cdots, \Phi_p]$. Note that $p$ can be taken to infinity. Kernels have the advantage that non-linear tasks can be solved linearly (and often analytically), but performance relies on the choice of the feature map (i.e. embedding). For more details, see \citep{kung2014kernel}.

As seen in \cref{eq:linear_model}, only a linear combination of $[\Phi_1, \Phi_2, \cdots, \Phi_p]$ can be expressed with this model. However, a measure on the function domain is required for a set of functions to form a vector space, which in our case would be the true distribution of the raw input data (i.e. $q$ such that $x' \sim q$). With $q$, we define $\mathcal{H}$, a function space expressed by the model, and an integral operator $T: \mathcal{H} \rightarrow \mathcal{H}$ given as
\begin{equation}\label{eq:new3}
T[f](x^{(2)}) := \int_\chi \Phi(x^{\prime (1)})^T\Phi(x^{\prime(2)})f(x^1)q(x^1)dx^1 = \int_\chi K(x^{(1)},x^{(2)})f(x^{(1)})q(x^{(1)})dx^{(1)}.
\end{equation}
By Mercer's theorem \citep{mercer1909xvi}, the integral operator $T$ can be decomposed into $p$ eigenfunctions $e_k: \chi \rightarrow \mathbb{R}$ and eigenvalues $\rho_k \ge 0$ such that 
\begin{equation}\label{eq:new}
T[e_k] = \rho_ke_k,\quad\quad k \in [1, \cdots, p],
\end{equation}
or in more familiar bra-ket notation as 
\begin{equation}\label{eq:new2}
T = \sum_{k=1}^p\rho_k \ket{e_k}\bra{e_k}.
\end{equation}
This basis in $e_k$ is a more natural one to analyse the expressivity and inductive bias of a kernel than the original basis $\Phi_k$.

\subsection{Expressivity of kernel methods}
The dimension of the function space $\mathcal{H}$ is the dimension of the span of $[\Phi_1, \Phi_2, \cdots, \Phi_p]$ or the number of non-zero eigenvalues of $T$, and is a measure of expressivity of the kernel.
In \cref{fig:kernel_qnn_percept}, we plot the normalized eigenvalue spectra of different kernels (equivalently different embeddings) for the $n=7$ Boolean dataset. A perceptron on the Boolean system has a rank of only $7$ and a QNN kernel on amplitude embedding has a rank of $28$, showing the limited size of the function space compared to other embeddings (e.g.\ ZZ or Basis), which has the full rank of $128$. 
Note that the maximum number of eigenvalues for a QNN kernel with $N$-dimensional Hilbert space is $N^2$. For amplitude encoding with $n=7$, it is already clear that no more than 49 eigenfunctions can have non-zero eigenvalues (and in practice, there are only 28 of them).
The experiment suggests that some Boolean functions cannot be expressed both by the perceptron and QNN with amplitude encoding, as shown in \cref{sec:qb:exp_bool}.

\subsection{Strength of inductive bias of kernels}
The generalisation loss measures the MSE loss of the raw outputs to the targets (note that this is different from generalisation error -- the fraction of misclassified examples in the test set) of kernel regression is known analytically \citep{canatar2021spectral, jacot2020kernel, cui2021generalization, harzli2021double}, and the inductive bias of kernels can be quantified by the eigenvalues and eigenfunctions \citep{canatar2022kernel, simon2021theory}. Kernels learn eigenfunctions with larger eigenvalues faster. This means that a faster decay rate of the eigenvalue spectra indicates a stronger inductive bias (towards the eigenfunctions with the large eigenvalues), while flat eigenvalue spectra indicate a uniform inductive bias over functions. 

The eigenvalue spectra have been used in prior studies to argue about the inductive bias and how many data points are required for learning. In \cite{kubler2021inductive, huang2021power}, the lack of inductive bias due to uniform eigenvalues was pointed out and an exponential number of data points, compared to the number of qubits, are required for learning. In \cite{canatar2022bandwidth} (a study of classically produced kernels) and \citep{holmes_connecting_2022} (for quantum produced kernels), to compensate for the lack of inductive bias, the eigenvalue spectra were `unflattened' and allowed learning with only polynomial data points.

In \cref{fig:kernel_qnn_percept}, the decay rate of the eigenvalues for different embeddings (thus kernels) are plotted. The kernel with basis encoding has uniform eigenvalue spectra (indicating no inductive bias) while the FCN embedding (of the effective FCN kernel) has the fastest decay of eigenvalues (strongest inductive bias). The consequences of the inductive bias are illustrated in \cref{fig:kernel_generalisation}(a), where we plot the learning curve for a uniform Boolean dataset with $n=7$ trained with different embeddings on the target function $f=010101\dots$. Note that for the basis encoding kernel with a uniform inductive bias, the generalisation loss improves negligibly with additional data points.

\subsection{Alignment between the kernel and the target function}

Strong inductive bias indicates the model is biased toward certain functions over others, but for a strong inductive bias to imply better generalisation, the model's bias must be towards the target function. In the study of kernels, one way of measuring how well  the model is biased towards the target function  by using the  task-model alignment measure \citep{canatar2021spectral} \begin{equation}
TA(k) = \frac{1}{\bra{f^*}\ket{f^*}} \sum_i^k \bra{e_i}\ket{f^*}^2,
\end{equation}
where $1\le k \le p$ and $e_i$ are the eigenvalues of the kernel in descending order of the eigenvalues.
The effect of alignment is demonstrated in \cref{fig:kernel_generalisation}(a) and \cref{fig:kernel_generalisation}(b) where the target functions are $0101\dots$ and parity respectively. The FCN kernel performs much better in (a), as the task-model alignment is greater than (b). In addition, note that in (b), the FCN embedding performs worse than the basis embedding, which has a uniform inductive bias.   More generally, for kernel methods, engineers try different kernels, or tweak hyperparameters to obtain better alignment with the target function \citep{rasmussen2003gaussian}. The process is feature engineering, and in QNNs it is equivalent to finding different embeddings or introducing different hyperparameters such as bandwidth.
\begin{figure}[H] 
 \centering
 \includegraphics[width=0.5 \textwidth]{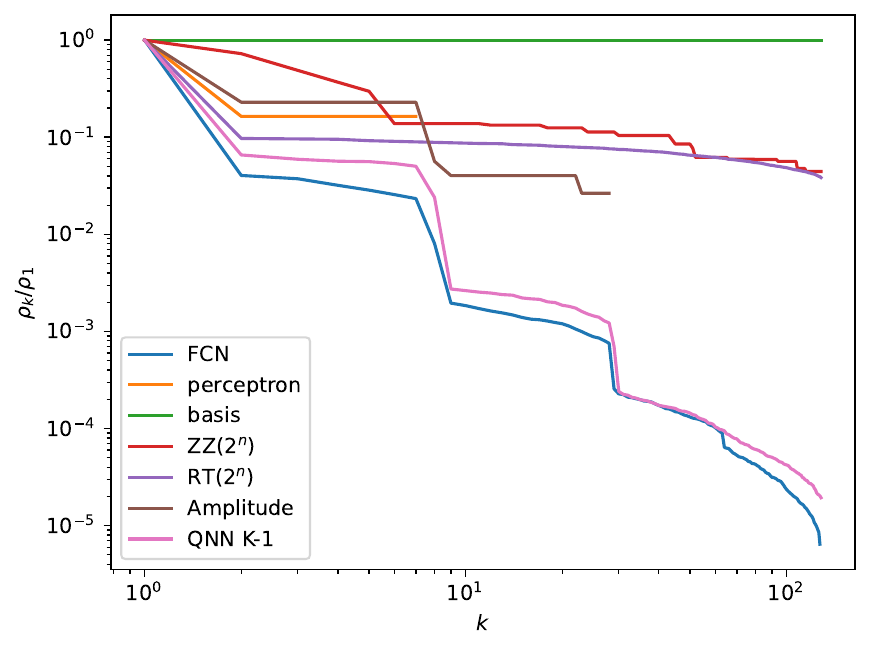}%

 \caption{\textbf{Eigenvalues of $T$ (kernels) with different encodings on $n=7$ Boolean dataset.} For each given encoding, a linear operator $T$ and its eigenvalues can be calculated via \cref{eq:linear_model}. perceptron (no encoding) and amplitude encoding have ranks of 7 and 28 respectively, while other encodings have a rank of 128. The eigenvalues decay the fastest for FCN encoding and slowest (uniform) for basis encoding.
 }\label{fig:kernel_qnn_percept}
  \bigskip
\end{figure}

\begin{figure}[H] 
 \centering
  \subfloat[simple function (f=$0101\dots$)]{%
   \includegraphics[width=0.49 \textwidth]{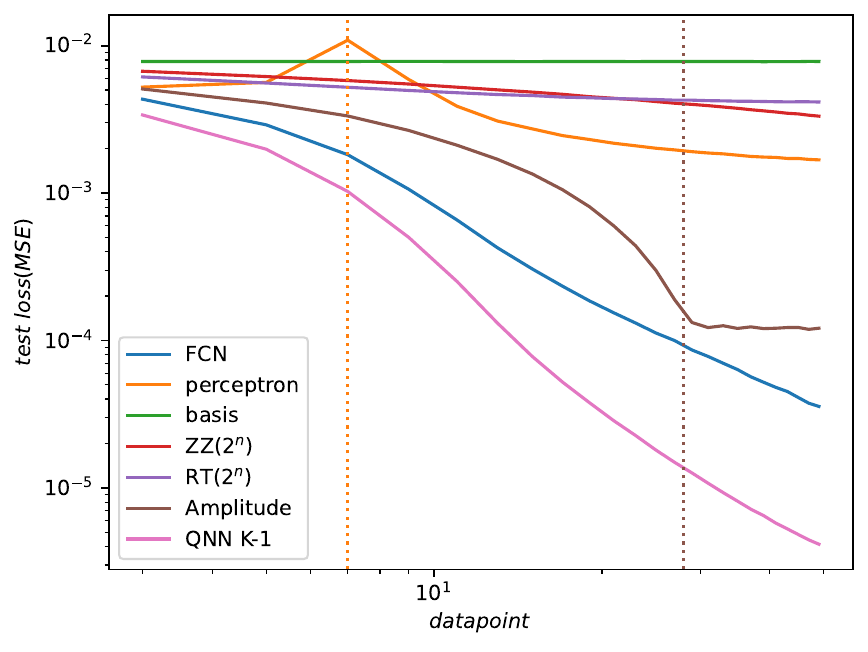}%
  }
  \subfloat[parity function]{%
   \includegraphics[width=0.49 \textwidth]{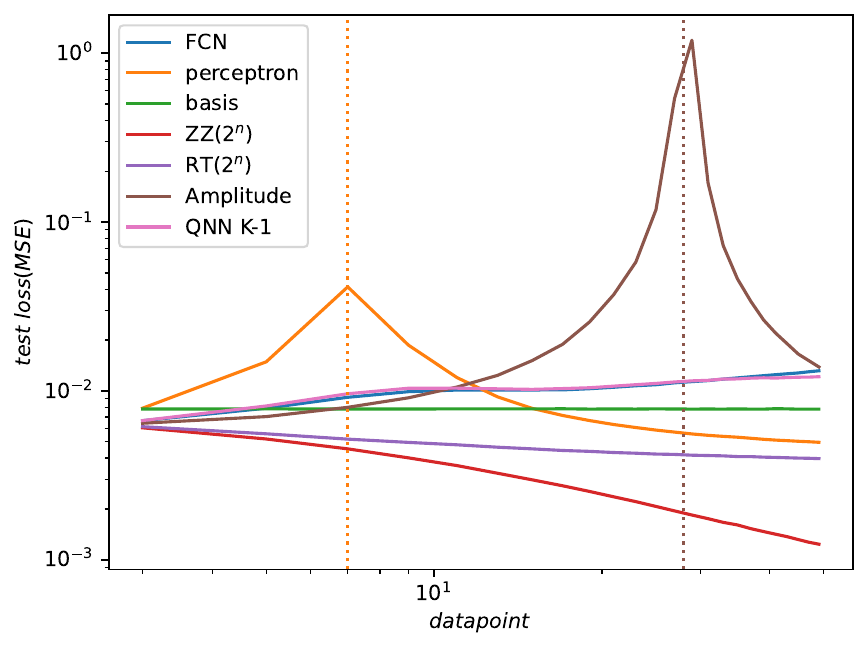}%
  }
 \caption{\textbf{Learning curve for different kernels on a $n=7$ Boolean dataset with different target functions.} Kernel ridgeless regression was performed using a varying number of training data points. The mean square error (MSE) generalisation loss was calculated using 50 unseen data points over 5000 independently and identically distributed (i.i.d.) sampled training sets. The dotted vertical lines are the maximum rank for respective kernels. (a) The target function $f=0101\dots$. FCN and amplitude encoding have relatively good good task-model alignment compared to other encodings, which is reflected in their better performance. Basis encoding, with its uniform inductive bias, does improve performance with additional datapoints. (b) Test loss if the target function is the parity function. The task-model alignment for the FCN encoding is poor for this function, leading to a larger loss compared to basis encoding which has a uniform inductive bias. The parity function is not within the hypothesis set expressible by the perceptron and amplitude encoding kernels, leading to double descent behaviour for both encodings near $n=rank(T)$. 
 }\label{fig:kernel_generalisation}
  \bigskip
\end{figure}

\FloatBarrier

\end{document}